\newcommand{\ifabs}[2]{#2}
\newcommand{\todo}[1]{\typeout{TODO: \the\inputlineno: #1}\textbf{[[[ #1 ]]]}}
\newcommand{\concept}[1]{\emph{#1}}
\newtheorem{theorem}{Theorem}
\newtheorem{lemma}[theorem]{Lemma}
\newtheorem{corollary}[theorem]{Corollary}
\newtheorem{definition}[theorem]{Definition}
\newtheorem{definitions}[theorem]{Definitions}
\newtheorem{claim}[theorem]{Claim}
\newtheorem{proposition}[theorem]{Proposition}
\newtheorem{assumption}{Assumption}
\newcommand{\newloglike}[2]{\newcommand{#1}{\mathop{\rm #2}\nolimits}}
\newloglike{\sgn}{sgn}
\newcommand{\nul}[1]{{\it et al.\/}}
\newenvironment{proof}{\noindent{\bf Proof: }}{\nopagebreak\rule{1 ex}{0.8 em}\medskip}
\newenvironment{remark}{\noindent{\bf Remark: }}{\nopagebreak\medskip}
\newcommand{\lANN}{\lambda\text{-}\mathsf{ANN}^{\gamma}_{d,n}}
\newcommand{\lANNS}{\lambda\text{-}\mathsf{ANNS}^{\gamma}_{d,n}}
\newcommand{\ANNS}{\mathsf{ANNS}^{\gamma}_{d,n}}
\newcommand{\LPM}{\mathsf{LPM}^\Sigma_{m,n}}
\begin{document}
\title{Randomized approximate nearest neighbor search\\ with limited adaptivity}
\author{
	Mingmou Liu~\thanks{Department of Computer Science and Technology, Nanjing University, China. \texttt{MG1533028@smail.nju.edu.cn}}
	\and
	Xiaoyin Pan~\thanks{Department of Computer Science and Technology, Nanjing University, China. \texttt{xiaoyin.pan95@gmail.com}}
	\and Yitong Yin~\thanks{State Key Laboratory for Novel Software Technology, Nanjing University, China. Supported by NSFC grants no.~61272081 and 61321491. Email: \texttt{yinyt@nju.edu.cn}.}  }

\date{}

\maketitle

\begin{abstract}
We study the fundamental problem of approximate nearest neighbor search in $d$-dimensional Hamming space $\{0,1\}^d$. 
We study the complexity of the problem in the famous cell-probe model, a classic model for data structures. 
We consider algorithms in the cell-probe model with \emph{limited adaptivity},
where the algorithm makes $k$ rounds of parallel accesses to the data structure for a given $k$.
For any $k\ge 1$, we give a simple randomized algorithm solving the approximate nearest neighbor search using $k$ rounds of parallel memory accesses, with $O(k(\log d)^{1/k})$ accesses in total.
We also give a more sophisticated randomized algorithm using $O(k+(\frac{1}{k}\log d)^{O(1/k)})$ memory accesses in $k$ rounds for large enough $k$. Both algorithms use data structures of size polynomial in $n$, the number of points in the database.

For the lower bound, we prove an $\Omega(\frac{1}{k}(\log d)^{1/k})$ lower bound for the total number of memory accesses required by any randomized algorithm solving the approximate nearest neighbor search within $k\le\frac{\log\log d}{2\log\log\log d}$ rounds of parallel memory accesses on any data structures of polynomial size.
This lower bound shows that our first algorithm is asymptotically optimal for any constant round $k$. And our second algorithm approaches the asymptotically optimal tradeoff between rounds and memory accesses, in a sense that the lower bound of memory accesses for any $k_1$ rounds can be matched by the algorithm within $k_2=O(k_1)$ rounds. 
In the extreme, for some large enough $k=\Theta\left(\frac{\log\log d}{\log\log\log d}\right)$, our second algorithm matches the $\Theta\left(\frac{\log\log d}{\log\log\log d}\right)$ tight bound for fully adaptive algorithms for approximate nearest neighbor search due to Chakrabarti and Regev~\cite{CharkReg}.

\end{abstract}

\section{Introduction}

Nearest neighbor search is a fundamental theoretical problem in Computer Science, with enormously many applications in diverse fields. In the nearest neighbor search problem, we are given a database $B$ of $n$ points from a metric space $X$. The goal is to preprocess them into a data structure, such that given any query point $x\in X$, an algorithm with accessing to the data structure can find a database point in $B$ that is closest to the query point $x$ among all database points. An extensively studied case is when the metric space is the Hamming space $X=\{0,1\}^d$.

It is conjectured that the nearest neighbor search is hard to solve by any data structures when the dimension $d$ is high (e.g.~$d\gg\log n$). This conjecture is sometimes referred as a case of the ``curse of dimensionality'' and is one of the central problems in the area of data structure lower bounds. It is also believed that the problem of high-dimensional nearest neighbor search remains to be intractable while either an approximation is tolerated or the algorithm is randomized, but not both at the same time~\cite{indyk2004nearest}. 

The complexity of the nearest neighbor search problem, as well as many other data structure problems, is well studied in the cell-probe model~\cite{yao1981should}, a classic model for the complexity of data structures. In the cell-probe model, the database is preprocessed into a data structure,  stored as a table in the main memory, and upon each query, an algorithm, called the cell-probing algorithm, outputs an answer to the query after adaptively probing a number of table cells. The complexity is measured by both the size of the data structure and the number of cell-probes made by the algorithm to answer a query in the worst case. There is a substantial body of works on the cell-probe complexity of nearest neighbor search in Hamming space~\cite{borodin1999lower,BarkRaba,jayram2003cell,liu2004strong, Mihai06, PTW08,panigrahy2010lower}.

When both {approximation} and {randomization} are allowed, a seminal work of Chakrabarti and Regev~\cite{CharkReg} gives a \emph{tight} bound for the complexity of  nearest neighbor search in $d$-dimensional Hamming space with data structures of size polynomial in $n$, assuming the dimension $d$ is high (and not too high to trivialize the problem, e.g.~$(\log n)^{1.01} \le d \le 2^{\sqrt{\log n}}$). This fundamental result is stated informally as follows.

\begin{theorem}[Chakrabarti and Regev~\cite{CharkReg}]\label{theorem-CharkReg}
Assume $(\log n)^{1.01} \le d \le 2^{\sqrt{\log n}}$. The cell-probe complexity of randomized approximate nearest neighbor search in $d$-dimensional Hamming space in the cell-probe model with data structure of polynomial size is $\Theta\left(\frac{\log\log d}{\log\log\log d}\right)$.
\end{theorem}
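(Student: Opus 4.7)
The plan is to reduce the cell-probe problem to an asymmetric two-party communication problem and handle the upper and lower bounds separately. In this reduction Alice holds the query $x\in\{0,1\}^d$ while Bob holds the database $B$ together with the data structure; a $t$-probe adaptive algorithm on a polynomial-size table with word length $O(\log n)$ becomes a $t$-round protocol in which Alice sends $O(\log n)$ bits (a cell address) and Bob replies with $O(\log n)$ bits (the cell contents) per round, so bounding the number of probes is equivalent to bounding the number of rounds in this asymmetric model.

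For the upper bound, I would build a layered randomized data structure with $t=\Theta(\log\log d/\log\log\log d)$ levels. At level $i$ the database points are hashed by randomized projections whose ``granularity'' sharpens as $i$ grows, and each level is stored as a bucket table keyed by the projection's output. To answer a query, the algorithm makes one probe per level and uses the returned cell to shrink the candidate set of potential approximate near neighbors by a factor of roughly $(\log d)^{1/t}$, which at $t=\Theta(\log\log d/\log\log\log d)$ is the ``balanced'' rate that reduces the $d$-dimensional search space to a constant in $t$ steps. A union bound over the independent hashes ensures the true approximate neighbor survives every filter with high probability, and after $t$ probes the answer is read off from a constant-size candidate list.

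For the lower bound, the plan is a round-elimination argument in asymmetric communication, in the spirit of Miltersen, Nisan, Safra, and Wigderson, but with the sharpened analysis that yields the exact exponent. The core step is a round-elimination lemma which, given a $t$-round protocol that solves ANNS on a carefully chosen hard input distribution with small error, produces a $(t-1)$-round protocol for a slightly easier residual instance at the cost of a controlled additive error and a controlled shrinkage of the input distribution. Iterating this step $t$ times terminates in a zero-round protocol which cannot solve the remaining problem, and tracking the parameters across iterations yields $t=\Omega(\log\log d/\log\log\log d)$.

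The hard part will be engineering the round-elimination lemma to be tight. One must quantify, on the hard distribution, exactly how much information a single probe into a polynomial-size table with $O(\log n)$-bit addresses can reveal about the location of the approximate nearest neighbor in $\{0,1\}^d$: I would use an information-theoretic inequality together with the anti-concentration of Hamming distances between uniformly random points to show that a single round concentrates the posterior over the answer by only a controlled amount, and that these losses compose cleanly enough across rounds that $t$ probes cannot drive the error below a constant unless $t=\Omega(\log\log d/\log\log\log d)$. Making the per-round loss small enough for $t=\Theta(\log\log d/\log\log\log d)$ rounds to still fail is the delicate step, and the reason the previous round-elimination lower bounds in \cite{BarkRaba,jayram2003cell,liu2004strong,PTW08} had to be improved to obtain the tight constant exponent.
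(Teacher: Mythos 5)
The theorem is a cited prior result of Chakrabarti and Regev, so the paper does not prove it; but the paper develops and reuses the CR machinery in detail, which makes it possible to check your sketch against the actual mechanism.

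Your high-level framework (data structure $\Rightarrow$ asymmetric communication, round elimination for the lower bound, a multi-level hash structure for the upper bound) is the right skeleton, but it misses the two ideas that specifically buy the $\log\log\log d$ in the denominator. For the upper bound: as described, ``one probe per level, each shrinking the candidate set by $(\log d)^{1/t}$'' does not work, because a $(\log d)^{1/t}$-way branching needs $(\log d)^{1/t}$ independent ball-membership queries and hence, naively, $(\log d)^{1/t}$ probes per level; that gives only the $O(\log\log d)$ bound (essentially the paper's Algorithm~1 optimized over $k$). The CR trick is to introduce \emph{two} granularities of randomized ball approximation: an accurate one ($C_i$, keyed by a $\Theta(\log n)$-bit projection $M_i x$) and a coarse one ($D_{i,j}$, keyed by a $\Theta((\log n)/s)$-bit projection $N_j x$). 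Because the coarse keys are $s$ times shorter, about $s=\Theta(k)$ of them can be packed into a \emph{single} cell address of an auxiliary table, so one probe resolves an entire $s$-way branch. Without this packing there is no way to amortize the multi-way search to $O(1)$ probes per round, which is precisely what separates the tight $\Theta(\log\log d/\log\log\log d)$ from the looser $O(\log\log d)$.

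For the lower bound, your plan to apply round elimination directly to a hard distribution for $\ANNS$ runs into a structural obstruction that the paper itself highlights: the natural decision version, $\lANN$, has a one-probe polynomial-space randomized solution (Theorem~\ref{theorem-ANN-upper-bound}), so no distribution for a ``$\lambda$-near'' version of the problem can yield a superconstant lower bound. The CR argument therefore reduces $\LPM$ (longest prefix matching) to $\ANNS$ via a recursively nested family of $\gamma$-separated Hamming balls (Lemma~\ref{lem:existence of space partation tree}), and then performs round elimination on $\LPM$: $\LPM$ has the direct-sum-type self-reducibility (Alice's input factors into $p$ coordinate blocks, Bob's into $q$ groups) that lets one eliminate Alice's first message by an information-cost averaging argument and Bob's first message by an uninformative-message argument, shrinking to a smaller $\mathsf{LPM}$ instance. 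A generic ``residual instance of $\ANNS$'' does not have this structure and the induction has nowhere to land. Finally, a smaller point: you fix $O(\log n)$-bit cells, but an $\ANNS$ answer is a point in $\{0,1\}^d$, so even the table entries must be $\Omega(d)$ bits, and the CR lower bound is stated (and needed) for word size $d^{O(1)}$; proving only the $O(\log n)$-word case would be a strictly weaker statement.
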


On the other hand, when the table size becomes closer to be linear of $n$, data structures such as locality-sensitive hashing (LSH)~\cite{IM98,andoni2006near} or data-dependent LSH~\cite{andoni2014beyond,andoni2015optimal} achieve a cell-probe complexity of $\tilde{O}(dn^{\rho})$ with data structures of size $\tilde{O}(n^{1+\rho})$ for some $0<\rho<1$ depending on the metric and the approximation ratio. Compared to the $\Theta\left(\frac{\log\log d}{\log\log\log d}\right)$ bound of Chakrabarti and Regev, the $\tilde{O}(dn^{\rho})$ cell-probe complexity is much worse. The popularity in practice of the LSH-based data structures is due to their low space cost, and the ability to be implemented in parallel.

Take locality-sensitive hashing (LSH) as an example. The algorithm of LSH is \emph{non-adaptive}: Each cell-probe relies only on the query but not on the information retrieved by other cell-probes. This makes all cell-probes in LSH parallelizable into one round of parallel memory accesses. And the more recent data-dependent LSH~\cite{andoni2014beyond,andoni2015optimal} surpasses the classic LSH in cell-probe complexity by being a little more adaptive: the algorithm retrieves a data-dependent hash function before making the second round of cell-probes, while the cell-probes in the second round are independent of each other.
In contrast, the algorithm of Chakrabarti and Regev~\cite{CharkReg} is \emph{fully} adaptive: Every cell-probe must wait for the information retrieved by the previous cell-probe to proceed.

This could give us the following intuitive image:
A cell-probing algorithm is getting more clever and hence more efficient,  as it is becoming more adaptive.
It is then a fundamental question to study the tradeoff between the efficiency (measured by the total number of cell-probes) and adaptivity (measured by the number of rounds of parallel cell-probes) in the cell-probe model. Very little was known to this fundamental question. In~\cite{brody2015adapt}, Brody and Larsen initiated the study of non-adaptive \emph{dynamic} data structures, where the database receives both queries and updates. They show a foundational result that for dynamic data structures, being adaptive is critical. For static data structures, parallel cell-probes were studied in the context of low-contention data structures~\cite{aspnes2010low, panigrahy2010lower}. 
The highest cell-probe lower bound known for \emph{non-adaptive} static data structure is the $\Omega(\log n/\log\frac{sd}{n\log n})$ cell-probe lower bound of Panigrahy, Talwar, and Wieder~\cite{PTW08} for randomized approximate nearest neighbor search on a table of size $s$. This lower bound becomes trivial for tables of polynomial size. For cell-probe model with polynomial-sized data structures, the tradeoff between the cell-probe complexity and adaptivity is highly unknown for any static data structure problems.

\begin{suppress}
\todo{TODO: related works}
\par
Nearest Neighbor Search problem is a fundamental theoretical problem in computer science, with various applications in very diverse fields, such as molecular biology, information retrieval, pattern recognition and machine learning. In most of the filed, the interesting objects is abstacted to points in some metric space, then the algorithms try to find the objects which are similar with another interesting object in the new metric space.
However, it is widely believed that the nearest neighbor search problem suffers from the ``curse of dimensionality'': without randomization or approximation, the cost of computing the nearest neighbor queryies may increase drastically as the dimension of the underlying metric space grows.
In fact, nearest neighbor search is proved that it is solvable with both randomization and approximation.
Within polynomial space,~\cite{CharkReg} show a $\Theta(\frac{\log\log d}{\log\log\log d})$ tight bound.
Within much more practical space, there are upper bounds including Johnson-Lindenstrauss, Locality Sentivity Hashing, and data-dependent Locality Sentivity Hashing. And a $\Omega(\log n/\log\frac{sw}{n})$ lower bound proved Panigraphy \emph{et al.}~\cite{PTW08}~\cite{panigrahy2010lower}.
However, all the practical solutions use very limited adaptiveness. In another word, all the solutions access the data structure in very limited rounds, even only $1$-round or $2$-round.
For lower bound, very few is known. Only~\cite{PTW08} has a lower bound for non-adaptive (i.e. the algorithm accesses the data structure in only $1$-round) approximate nearest neighbor search, $\Omega(\log n/\log(\frac{sw}{n\log n}))$, which becomes trivial in polynomial space.
We are curious. What is the power of adaptiveness in the fundamental problem of approximate nearest neighbor search? What is the exact tradeoff between the number of rounds of data structure accessing and the total number of data structure accessing?
With out lost of generality, we suppose algorithms access the data structures in $k$ rounds. We give two upper bounds: a simple algorithm with cell-probe complexity of $O(k(\log d)^{1/k})$ and a more sophisticated algorithm with cell-probe complexity of $O(k+(\frac{1}{k}\log d)^{c/k})$ for any constant $c>2$, and a lower bound which says that every algorithm has cell-probe complexity of $\Omega((\log_\gamma d)^{1/k}/k)$.
We emphasize that the first algorithm has the optimal cell-probe complexity when $k$ is constant.
On the other hand, the second algorithm has the optimal round complexity, approach the asymptotically optimal tradeoff between rounds and cell-probe complexity: any cell-probe complexity lower bound in $k$ round can be achieved by the algorithm in $O(k)$ round.
\par
\end{suppress}

\ifabs{
\paragraph{Our results}
}{
\paragraph{Our results.}}
In this paper, we study the complexity of randomized approximate nearest neighbor search in the cell-probe model with limited adaptivity. We consider a natural notion of \concept{rounds} for cell-probes, where the cell-probes in the same round are adaptive to the information retrieved in previous rounds, but non-adaptive to each other in the same round.

We give two randomized cell-probing algorithms for approximate nearest neighbor search in $d$-dimensional Hamming space. For both algorithms, the data structures are of polynomial size, and the cell-probes are organized into $k$ rounds for any $k\ge 1$ (Algorithm~\ref{alg:simple}) or for all sufficiently large $k$ (Algorithm~\ref{alg:general}). 
The first algorithm is as follow.

\begin{theorem}\label{theorem-simple-upper-bound-informal}
For any $k\ge 1$, the approximate nearest neighbor search in $d$-dimensional Hamming space can be solved in the cell-probe model with a data structure of polynomial size,
using $k$ rounds of parallel randomized cell-probes, with 
$O\left(k(\log d)^{1/k}\right)$ cell-probes in total.
\end{theorem}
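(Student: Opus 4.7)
The plan is to combine a constant-probe decision data structure for approximate near-neighbor search at a fixed radius with a parallel $b$-ary search over $O(\log d)$ dyadic scales, where $b=\lceil (\log d)^{1/k}\rceil$. One round of $b$ parallel probes narrows down the scale of the true nearest neighbor by a factor $b$, and $k$ such rounds suffice to pin it down among $O(\log d)=b^k$ candidates.

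First, I would reduce approximate nearest neighbor search to a collection of near-neighbor decision problems, one per dyadic radius $r \in \{1,2,4,\ldots,d\}$, giving $L=O(\log d)$ radii; identifying the smallest such $r^\ast$ for which some database point lies within $O(r^\ast)$ of the query yields a constant-factor approximate nearest neighbor. Then I would build, for each $r$, a polynomial-space data structure $D_r$ that answers the decision query ``is some database point within $O(r)$ of $x$?'' using $O(1)$ cell-probes with failure probability $1/\mathrm{poly}(n)$. A natural construction uses LSH-style random projections: hash the database under many independent projections into $\mathrm{poly}(n)$ buckets, precompute a short YES/NO summary for each bucket, and answer the decision by reading a constant number of these summaries; polynomial space affords enough independent hashes to drive the failure probability down to $1/\mathrm{poly}(n)$ even with $O(1)$ probes.

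Given the $D_r$'s, the $k$-round algorithm performs a parallel $b$-ary search over the $L$ scales. It maintains an interval $I \subseteq \{1,\ldots,L\}$ guaranteed to contain the index of $r^\ast$, initialized to $\{1,\ldots,L\}$. In each round it queries the decision structures at $b-1$ evenly spaced indices inside $I$ in parallel; the monotone YES/NO pattern narrows $I$ to a sub-interval of length $\lceil |I|/b \rceil$. After $k = \log_b L$ rounds the interval has constant length, $r^\ast$ is determined up to a constant factor, and a final probe at the identified scale retrieves an actual near neighbor. The total cell-probe count is $O(kb) = O(k(\log d)^{1/k})$.

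The hardest step is constructing and analyzing the constant-probe decision structure $D_r$: it must fit in polynomial space, use $O(1)$ probes independent of $n$ and $d$, and have failure probability small enough that a union bound over all $O(k(\log d)^{1/k})$ probes keeps the overall success probability bounded away from zero, while its per-structure approximation factor must compose cleanly with the dyadic scaling to yield an overall constant-factor approximate nearest neighbor. Once this subroutine is in hand, the parallel $b$-ary search and its error analysis are essentially bookkeeping.
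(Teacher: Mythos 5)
Your plan matches the paper's approach in all essentials: geometrically spaced radii, a one-probe polynomial-space approximate-ball structure per scale built from random linear sketches (project to $\Theta(\log n)$ bits and index a table of size $\mathrm{poly}(n)$ by the sketch value), and a $\lceil(\log d)^{1/k}\rceil$-ary parallel search over the $O(\log d)$ scales, ending in a completion round. The remaining differences are cosmetic --- the paper stores a witness point (or EMPTY) directly in each hashed cell rather than a YES/NO bit, spaces radii by $\sqrt{\gamma}$ rather than dyadically, and conditions on all $O(\log d)$ per-scale structures being simultaneously correct (its Assumption 2, holding with probability $\ge 3/4$) rather than union-bounding over the adaptively chosen probes as you phrase it, which is the cleaner way to handle adaptivity.
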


The algorithm is simple and works for all $k\ge 1$ number of rounds. Especially when $k=1$, the algorithm is non-adaptive. Compared to the LSH which is also non-adaptive, our algorithm achieves a much better cell-probe complexity $O(\log d)$ by using a data structure of larger polynomial size.

However, when the round number $k$ becomes large, especially at the extreme when every round has 1 cell-probe, in which case the algorithm becomes fully adaptive and has $O(\log\log d)$ total cell-probes, which is not optimal for fully adaptive algorithms by Theorem~\ref{theorem-CharkReg}. This leads us to our second more sophisticated algorithm. 

\begin{theorem}\label{theorem-general-upper-bound-informal}
For large enough $k$, the approximate nearest neighbor search in $d$-dimensional Hamming space can be solved in the cell-probe model with a data structure of polynomial size, using $k$ rounds of parallel randomized cell-probes, with 
$O\left(k+\left(\frac{1}{k}\log d\right)^{O(1/k)}\right)$ cell-probes in total.
\end{theorem}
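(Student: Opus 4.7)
The plan is to construct a hybrid $k$-round algorithm that combines the parallel dimension-reduction primitive underlying Algorithm~\ref{alg:simple} (used to prove Theorem~\ref{theorem-simple-upper-bound-informal}) with the fully adaptive upper bound of Chakrabarti and Regev (Theorem~\ref{theorem-CharkReg}). The source of inefficiency in Algorithm~\ref{alg:simple} for large $k$ is that it spends $(\log d)^{1/k}$ parallel probes in \emph{every} round, even after most of the effective dimension has already been reduced. To avoid this, I split the $k$ rounds into a short reduction phase followed by a longer refinement phase: the reduction phase applies the Algorithm~\ref{alg:simple} primitive with many parallel probes per round to bring the effective dimension down to a threshold $d^\star$, and the refinement phase then runs the Chakrabarti--Regev algorithm on the reduced $d^\star$-dimensional subproblem with $O(1)$ probes per round.

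Concretely, I would let the reduction phase use $k_1$ rounds and the refinement phase use $k_2 = k - k_1$ rounds. The refinement phase solves approximate nearest neighbor search in dimension $d^\star$ using $\Theta(\log\log d^\star/\log\log\log d^\star)$ adaptive rounds with $O(1)$ probes per round, so I set $k_2 \asymp \log\log d^\star/\log\log\log d^\star$, which pins down $d^\star$ as an iterated exponential in $k_2$. Each round of the reduction phase uses $p$ parallel probes to reduce $\log D$ to roughly $\log D/p$ with constant probability (the primitive underlying Algorithm~\ref{alg:simple}); iterating this for $k_1$ rounds with the balanced choice $p = (\log d/\log d^\star)^{1/k_1}$ yields reduction-phase probe count $k_1(\log d/\log d^\star)^{1/k_1}$, which is essentially optimal by AM--GM. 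The total probe count is therefore $O(k_2 + k_1(\log d/\log d^\star)^{1/k_1})$, and taking $k_1, k_2 = \Theta(k)$ (with $\log d^\star$ accordingly large) will give the desired $O(k + (\tfrac{1}{k}\log d)^{O(1/k)})$ upper bound. In the extreme case $k \asymp \log\log d/\log\log\log d$, setting $k_1 = 0$ makes the hybrid coincide with pure Chakrabarti--Regev, recovering the tight $\Theta(\log\log d/\log\log\log d)$ bound.

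The main obstacle will be composing the two phases cleanly. The reduction primitive must output a $d^\star$-dimensional subproblem that still satisfies the gap version of approximate nearest neighbor search required by Theorem~\ref{theorem-CharkReg}, so that the Chakrabarti--Regev refinement can take over seamlessly on the reduced instance. I also need to control the combined failure probabilities of all $k_1 + k_2$ rounds without incurring amplification costs that would exceed the claimed probe count, which will likely require a mildly strengthened primitive relative to Algorithm~\ref{alg:simple}, for instance by repeating each reduction round $O(1)$ times in parallel to push each round's failure probability below $1/k$ before applying a union bound. Once these compatibility conditions between the two phases are established, the probe-count arithmetic above delivers the claimed bound.
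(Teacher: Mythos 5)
Your high-level intuition---that Algorithm~\ref{alg:simple} is wasteful for large $k$ because it keeps spending $(\log d)^{1/k}$ probes per round, and that one should switch into something Chakrabarti--Regev-like once the problem has been ``narrowed''---points in the right direction, but the concrete plan you propose has a real gap. Algorithm~\ref{alg:simple} does not reduce the \emph{dimension} of the problem. Its shrinking rounds narrow the search interval $[l,u]$ over the exponent of the unknown nearest-neighbor distance (i.e.\ over $\log_\alpha \mathrm{dist}(x,B)$), while the underlying metric space stays $\{0,1\}^d$ and the database stays the same. There is no natural way to package the post-reduction state $(l,u)$ with $u-l\le\tau_1$ as an $\mathsf{ANNS}$ instance in some smaller dimension $d^\star$ so that Theorem~\ref{theorem-CharkReg} can be invoked as a black box; you flag this as ``the main obstacle'' but do not resolve it, and without resolving it the two phases do not compose.

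What the paper actually does is different in a way that matters. It does not treat Chakrabarti--Regev as a black box: it extracts and reuses the internal machinery---the \emph{coarse} ball approximations $D_{i,j}$ of Definition~\ref{definition-upper-bound-common} together with the auxiliary tables $\widetilde{T}_{i,j}$---and builds a single, uniform multi-way search (Algorithm~\ref{alg:general}) in which every shrinking phase uses the same two-round structure. The new ingredient relative to Algorithm~\ref{alg:simple} is batching: one cell-probe to an auxiliary table $\widetilde{T}_{u,M_ux}[\overline{w}^j]$ answers $s=\Theta(k)$ coarse-ball queries at once, so a phase that branches $\tau$ ways costs only $\lceil(\tau-1)/s\rceil+2$ probes rather than $\tau-1$. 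The price is that the coarse queries are unreliable, so each phase may instead of shrinking $u-l$ merely shrink $|C_u|$ by a factor $n^{-1/2s}$; Assumption~\ref{assume-2} guarantees this can happen at most $2s$ times, which is what pins down the exponent $c/k$. None of this ``phase transition'' between an Algorithm-1-style sub-algorithm and a Chakrabarti--Regev sub-algorithm appears, and your proposal never mentions the coarse approximations or the batching trick that are the actual engine of the bound. Your suggested per-round error amplification is also beside the point: Lemma~\ref{lemma-upper-bound-assumption} supplies a single global good event (probability $\ge 3/4$) under which the deterministic logic of the algorithm is always correct, so no per-round boosting or union bound over rounds is needed.

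Your probe-count arithmetic (choose $p$ to balance reduction and refinement) is the kind of bookkeeping one ends up doing anyway, but without identifying the coarse-approximation/batching mechanism and the state that actually has to be threaded between phases (the interval $(l,u)$ plus the invariant $C_l=\emptyset$, $C_u\ne\emptyset$, not a new $d^\star$-dimensional instance), the proof does not go through.
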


The second algorithm is substantially more sophisticated. In the extreme, it approaches the optimal fully adaptive algorithm in Theorem~\ref{theorem-CharkReg} in the following sense:
For some sufficiently large $k=O\left(\frac{\log\log d}{\log\log\log d}\right)$, we can implement the algorithm such that every round of the algorithm contain only 1 cell-probe.


We emphasize that these algorithms are not meant to be efficient in practice due to their expensive space costs, rather, they are parts of a theoretical endeavor to understand the complexity tradeoff between time and rounds on data structures of polynomial size.
With this spirit, we prove the following lower bound for the tradeoff between cell-probe complexity and round complexity for  randomized approximate nearest neighbor search. 
\begin{theorem}\label{theorem-ANNS-lower-bound-informal}
Assume $(\log n)^{1.01}\leq d \leq 2^{\sqrt{\log n}}$ and $1\le k\le\frac{\log\log d}{2\log\log\log d}$.
Any randomized algorithm solving the approximate nearest neighbor search in $d$-dimensional Hamming space in the cell-probe model with a data structure of polynomial size using $k$ rounds of parallel randomized cell-probes must use $\Omega\left(\frac{1}{k}(\log d)^{1/k}\right)$ cell-probes in total.
\end{theorem}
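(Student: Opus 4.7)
The plan is to reduce the $k$-round cell-probe lower bound to a lower bound on asymmetric two-party communication, then establish that bound via a parallel analogue of the round elimination argument of Chakrabarti and Regev (Theorem~\ref{theorem-CharkReg}). In the reduction, Alice holds the query $x\in\{0,1\}^d$ and Bob holds the database together with the data structure. A round with $T_i$ parallel cell-probes is simulated by a single exchange: Alice sends $O(T_i\log s)$ bits (the addresses, which are functions only of prior transcript and $x$) and Bob replies with $O(T_i w)$ bits (the contents). Thus, any $k$-round cell-probing algorithm with $T=\sum_{i=1}^k T_i$ probes on a data structure of polynomial size and word size $w=\mathrm{polylog}(n)$ yields a $k$-round communication protocol of total cost $O(T\cdot\mathrm{polylog}(n))$ for the same ANNS task.

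Next, I would fix a hard input distribution on $\lANNS$ similar to the one used in~\cite{CharkReg}, under which a correct protocol must reveal nontrivial information about the query. The core of the proof is a round-elimination lemma adapted to the parallel setting: any $k$-round protocol for ANNS in dimension $d$ whose first round exchanges $t$ bits can be converted, with a small additive loss in error, into a $(k-1)$-round protocol for ANNS in a reduced dimension $d'$, where $\log d'$ is lower-bounded by roughly $\log d$ divided by a function $f(t)$ of the first round's budget. Morally, the elimination ``averages out'' the first round and argues that replacing all but the typical message by an unconditioned one still yields a valid protocol on a shrunken instance; for the parallel case, the whole batch of probes in a round must be averaged out in one shot.

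Iterating the elimination $k$ times shrinks the dimension from $d$ down to a constant, yielding the constraint $\prod_{i=1}^{k} f(T_i)=\Omega(\log d)$. Once this constraint is established with $f(t)$ growing appropriately (the crucial quantitative step), I would conclude via AM-GM: subject to $\sum T_i=T$, the product $\prod f(T_i)$ is maximized when the $T_i$ are balanced, which forces $T/k=\Omega((\log d)^{1/k})$ up to the loss factor inherent in $f$, and this is what delivers the claimed $\Omega(\tfrac{1}{k}(\log d)^{1/k})$ total lower bound. The constraint $k\le\frac{\log\log d}{2\log\log\log d}$ ensures that the error accumulated across the $k$ eliminations stays well below the allowed constant error, and that the reduced dimension at each step remains in the regime where the hard distribution is still meaningful.

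The main obstacle, I expect, is proving the parallel round-elimination lemma with the right quantitative dependence on $t$. In the fully adaptive Chakrabarti-Regev analysis each probe is eliminated one at a time using an average-encoding / message-switching step; in the parallel setting one must absorb an entire batch of $T_i$ probes simultaneously without paying an exponential-in-$T_i$ price, so a careful chain-rule decomposition of the mutual information between the query and the first-round transcript is required, together with an argument that the typical (conditionally likely) first-round message is well-concentrated enough to be replaced by a fixed message at a cost of only a polynomial factor in $T_i$ to the surviving dimension.
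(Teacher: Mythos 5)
Your high-level skeleton (simulate a $k$-round parallel cell-probe algorithm by a $2k$-round asymmetric communication protocol, then prove a communication lower bound via round elimination adapted to non-uniform message sizes) is the right framework and is what the paper does. However, there are two substantive gaps in the middle step that would sink the argument as written.

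First, you propose to run round elimination directly against a hard input distribution for $\lANNS$, ``similar to the one used in~\cite{CharkReg}.'' That cannot work and is also not what Chakrabarti and Regev do: the paper (Theorem~\ref{theorem-ANN-upper-bound}) shows precisely that $\lANNS$ admits a \emph{one-probe} randomized solution on a polynomial-size table, so it has no nontrivial lower bound in this regime and there is no ``hard distribution'' to fix. The ``nearest'' and the ``search'' aspects are essential. The correct intermediate problem is $\LPM$ (longest prefix match) over an alphabet of size $\lceil 2^{d^{0.99}}\rceil$ with string length $m=\Theta(\log_\gamma d)$, which is reduced to $\ANNS$ via the $\gamma$-separated Hamming-ball tree (Lemma~\ref{lem:existence of space partation tree}). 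The ``shrinking dimension'' intuition in your proposal maps, after the LPM reduction, to shrinking the string length $m$ (not the ambient dimension $d$), and the round elimination simultaneously shrinks the database size $n$.

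Second, the quantitative mechanism you describe — derive $\prod_{i=1}^k f(T_i)=\Omega(\log d)$ and then optimize by AM-GM — is not how the argument closes, and I do not think it can be made to. In the paper's round elimination (Lemma~\ref{lemma-round-elimination} applied in Theorem~\ref{thm:main lower bound}), the string length shrinks by a \emph{uniform} factor $2p=m^{1/k}$ at every step, independent of the per-round probe count; the per-round counts $t_i$ instead enter through the database shrinkage $q_{i+1}=n^{t_{i+2}/t}$ and through the additive error term $\delta'\approx\sqrt{b_{i+1}2^{O(a_{i+1}/(\delta p))}/q_{i+1}}$. The constraint on $t$ comes from requiring $\delta'\le\delta=\Theta(1/k)$ across all $k$ steps, which forces $t=O\bigl(\tfrac{1}{k}m^{1/k}\bigr)=O\bigl(\tfrac{1}{k}(\log_\gamma d)^{1/k}\bigr)$. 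There is no product-then-AM-GM step; the balancing of rounds is enforced by choosing $p$ uniformly and paying for imbalance through the exponential blow-up $2^{O(a_{i+1}/p)}$ in the error, which has to be checked to stay polynomial (this is exactly where the bound $\prod_j(1+\tfrac{2a_j}{a_{j+1}\delta p})\le e^{16}$ and the condition $k\le\tfrac{\log\log d}{2\log\log\log d}$ are used). Your worry about ``absorbing an entire batch of $T_i$ probes'' is resolved automatically once you pass to the communication picture: the batch is a single Alice message of $t_i\lceil\log s\rceil$ bits plus a single Bob reply of $t_i w$ bits, and the existing message-compression / uninformative-message machinery applies to those messages as long as the message sizes are carried as explicit parameters rather than assumed uniform.

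So the two concrete things to add: (a) route the reduction through $\LPM$ rather than $\lANNS$, and (b) rework the quantitative bookkeeping so that the round-elimination recurrence tracks both the string length and the database size, with the $t_i$'s paying into the database-shrinkage budget and the error budget rather than into a multiplicative dimension-shrinkage product.
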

Due to this lower bound, both our algorithms achieve some optimality:
\begin{itemize}
\item Algorithm~\ref{alg:simple} is asymptotically optimal in cell-probe complexity for any constant number of rounds.
\item Algorithm~\ref{alg:general} approaches the asymptotically optimal tradeoff between cell-probe complexity and round complexity in the following sense: the cell-probe lower bound for any $k_1$-round algorithms can be approached by Algorithm~\ref{alg:general} within $k_2=O(k_1)$  rounds.
\end{itemize}
In addition, Algorithm~\ref{alg:general} together with our lower bound show that the cell-probe complexity of randomized approximate nearest neighbor search undergoes a ``phase transition'' when the round number is within the regime $k=\Theta\left(\frac{\log\log d}{\log\log\log d}\right)$: For a small $k_1=\Theta\left(\frac{\log\log d}{\log\log\log d}\right)$, the average number of cell-probes per each round for any $k_1$-round algorithm has to be a $(\log\log d)^{\Omega(1)}$, whereas for large enough $k_2=\Theta\left(\frac{\log\log d}{\log\log\log d}\right)$,  only 1 cell-probe in each round is enough for a $k_2$-round algorithm.

\ifabs{\paragraph{Technique}}{\paragraph{Technique.} }
Both our upper bounds and lower bounds rely heavily on the machineries developed in~\cite{CharkReg}.

The main ideas for the upper bounds are the dimension reduction techniques developed in the pioneering works of~\cite{kushilevitz2000efficient, IM98} and the multi-way search in~\cite{CharkReg}. Our efforts are focused on how to apply these techniques to give a \emph{family} of algorithms approaching the smoothed tradeoff between round and cell-probe complexity. 
A technical innovation of~\cite{CharkReg} is to use two kinds of approximations of Hamming balls: an accurate approximation of hamming ball which is more expensive, and a coarse approximation which is cheap, to support a multi-way search with a substantial number of branchings, such that each branching is supported by one query to an accurate ball succeeded by several queries to coarse balls, which altogether consume only $O(1)$ cell-probes. 
Surprisingly, we discover that a simple algorithm can achieve an optimal cell-probe complexity in any constant number of rounds, using only the more expensive accurate approximation of Hamming balls.
And for general round numbers, the coarse approximation of balls are employed to approach the asymptotically optimal tradeoff between rounds and cell-probes.

The lower bound is proved by the round elimination of communication protocols for the longest prefix matching problem $\mathsf{LPM}$, which can be reduced to approximate nearest neighbor search.
\ifabs{}{Usually the data structure lower bounds are proved for a decision version of the problem. For nearest neighbor search, a natural decision version is the $\lambda$-\emph{near} neighbor problem $\lambda$-$\mathsf{NN}$. However, it is folklore that with both approximation and randomization allowed, $\lambda$-$\mathsf{NN}$ can be solved within $O(1)$-probe on a table of polynomial size. So to prove a nontrivial lower bound in this case, one must stay with the search problem. In~\cite{CharkReg}, this is done by a reduction from the longest prefix matching $\mathsf{LPM}$, a problem that critically captures the nature of \emph{searching} for the \emph{nearest} neighbors. }In~\cite{CharkReg}, a lower bound is proved for $\mathsf{LPM}$ by interpreting a data structure as a communication protocol and applying round eliminations to the communication protocol, a technique that can be traced back to~\cite{ajtai1988lower,miltersen1995data}.
Our main observation is that $k$ rounds of cell-probes can be represented as $2k$ rounds of communications. Although the observation is straightforward, to prove our lower bound we have to apply the techniques of~\cite{CharkReg} to adapt to non-uniform message sizes in different rounds, a setting which was rarely considered in the context of asymmetric communication complexity for data structure lower bounds. More critically, in order to get the $1/k$ exponent in our $\Omega\left(\frac{1}{k}(\log d)^{1/k}\right)$ lower bound, we are forced to exploit the round elimination of~\cite{CharkReg}. 
In fact, assuming $k=O\left(\frac{\log\log d}{\log\log\log d}\right)$, a lower bound with form $\Omega\left(k+\frac{1}{k^b}(\log d)^{a/k}\right)$ for any constants $a,b>0$ is enough to imply the optimal $\Omega\left(\frac{\log\log d}{\log\log\log d}\right)$ lower bound in Theorem~\ref{theorem-CharkReg}, whereas for our result, these constants $a,b$ matter a lot and require much delicacy in the round elimination argument.


\begin{suppress}
\paragraph{Related works.}

Our work is a theoretical study of the power of adaptiveness in approximate nearest neighbor search problem. On polynomial space, it gives almost tight results. Adaptiveness substantially reduces the complexity even for very weak algorithm (see our first upper bound). Besides, the complexity undergoes a ``threshold phenomenon'' when $k=\Theta(\log\log d/\log\log\log d)$.\todo{need more explanation?} Moreover, the \emph{nearest} and \emph{search} play critical roles in the complexity of approximate nearest neighbor search since we can see that the problem will have a $O(1)$ solution when it is weakened to approximate near neighbor search problem.
\par
The techniques in this work is derived from Chakrabarti and Regev's work~\cite{CharkReg}. In~\cite{CharkReg}, Chakrabarti \& Regev combined a series of lemma in communication complexity to deduce a round elimination lemma, then induce a contradiction with the round elimination lemma, which will prove desired lower bound. In this work we follow the same routine with a generalization to obtain our result. 
\par
The rest part may refer my story in the last section.

\begin{itemize}
	\item LSH from~\cite{IM98}. $s=O(dn+n^{1+\rho})$, $t=O(dn^\rho)$. $\rho\leq1/\gamma$.
	\item LSH from~\cite{DIIM04}. $\ell_2$-norm. $\rho<1/\gamma$.
	\item LSH from~\cite{panigrahy2006entropy}. $\ell_2$-norm. $s=\tilde{O}(n)$, $t=\tilde{O}(dn^\rho)$. $\frac{\rho}{\gamma}\approx 2.09$.
	\item LSH from~\cite{andoni2006near}. $\ell_2$-norm. $s=O(dn+n^{1+\rho})$, $t=O(dn^\rho)$. $\rho=1/\gamma^2+o(1)$.
	\item Lower bound of LSH from~\cite{MNP07}. For $\ell_p$-norm, $\rho\geq\frac{0.462}{\gamma^p}$.
\item
Optimal lower bounds for locality sensitive hashing(except when q is tiny) Ryan O'Donnell, Yi Wu, Yuan Zhou:
lower bound: $(r,cr,p,q)$-sensitive hash family, $\rho=ln(1/p)ln(1/q)$, $\rho$ must be at least $1/c$(minus $o_d(1)$) for hamming space.
\item
Indyk and Motwani showed that for each $c\geq1$, the extremely simple family achieves $\rho\leq1/c$.
\item
Beyond Locality-Sensitive Hashing: c-approximate near neighbor problem in the Euclidean space, $O_c(dn^{\rho})$ query time and $O_c(n^{1+\rho}+nd)$ space, where $\rho\leq7/(8c^2)+O(1/c^3)+o_c(1).$ a data structure for the Hamming space and $l_1$ norm with $\rho\leq7/(8c)+O(1/c^{3/2})+o_c(1)$.
\item
Optimal Data-Dependent Hashing for Approximate Near Neighbors: For an $n$-point dataset in a $d$-dimensional space our data structure achieves query time $O(d\cdot n^{\rho+o(1)})$ and space $O(n^{1+\rho+o(1)}+d\cdot n)$, where $\rho=\frac{1}{2c^2-1}$ for the Euclidean space and approximation $c>1$. For the Hamming space, $\rho=\frac{1}{2c-1}$.
\item
Approximate Nearest Neighbor: Towards Removing the Cure of Dimensionality: results for the approximate nearest problem under the $l_s$ norms for $s\in[1,2]$:

A deterministic data structure for $(1+\gamma)(1+\varepsilon)$-NN, for any constant $\gamma>0$, with space and preprocessing time bounded by $O(n\log^2 n)\times O(1/\varepsilon)^d$, and query time $O(d\log n)$.

A randomized data structure for $(1+\gamma)(1+\varepsilon)$-NN, for any constant $\gamma>0$, with space and preprocessing time bounded by $n^{O(\log(1/\varepsilon)\varepsilon^2)}$, and query time polynomial in $d$, $\log n$ and $1/\varepsilon$.

A randomized data structure for $c(1+\gamma)$-NN, for any constant $\gamma>0$, with space that is sub-quadratic in $n$ and query time sub-linear in $n$. Specifically, the data structure uses space $O(dn+n^{1+1/c}\log^2 n\log\log n)$, and has query time of $O(dn^{1/c}\log^2 n \log\log n)$
\end{itemize}

我们是否应该在最后的定理中省略掉对$\gamma$的界(i.e. $3\leq\gamma\ll d$)？因为前文和原文中都完全略去了这一点。

\end{suppress}

\section{Preliminaries}\label{section-prelim}
\ifabs{\paragraph{Approximate nearest neighbor search}}{\paragraph{Approximate nearest neighbor search:}}
We consider the problem of \concept{approximate nearest neighbor search} in the $d$-dimensional Hamming space $\ANNS$.
Let $\gamma>1$ be fixed. 
We are given a \concept{database} $B$ which contains $n$ points from the $d$-dimensional Hamming cube $\{0,1\}^d$. The database is preprocessed into a data structure (called the \concept{table}). Then given any \concept{query} point $x\in\{0,1\}^d$, the algorithm must access the data structure and output a database point $y\in B$ which is a \concept{$\gamma$-approximate nearest neighbor} of $x$ in $B$, where a point $y\in B$ is called a $\gamma$-approximate nearest neighbor of $x$ in $B$ if $\mathrm{dist}(x,y)\le \gamma\cdot\min_{z\in B}\mathrm{dist}(x,z)$, where $\mathrm{dist}(x,y)$ denotes the Hamming distance between $x$ and $y$.

Abstractly, a \concept{data structure problem} can be represented as a relation $\rho\subseteq\mathscr{A}\times\mathscr{B}\times\mathscr{C}$, where $\mathscr{A}, \mathscr{B}$, and $\mathscr{C}$ specify the universes for queries, databases, and answers, respectively. Given a query $x\in \mathscr{A}$ to a database $B\in\mathscr{B}$, an answer $z\in\mathscr{C}$ is correct if $(x,B,z)\in\rho$. In particular, for approximate nearest neighbor search, $\mathscr{A}=\mathscr{C}=\{0,1\}^d$, $\mathscr{B}={\{0,1\}^d\choose n}$, and 
\ifabs{
\begin{align*}
\ANNS=\big{\{}
&(x,B,z)\in\mathscr{A}\times\mathscr{B}\times\mathscr{C}\mid\\
& z\in B\land \forall y\in B, \mathrm{dist}(x,z)\leq\gamma\cdot\mathrm{dist}(x,y)\big{\}}.
\end{align*}
}{
\begin{align*}
\ANNS=\left\{(x,B,z)\in\mathscr{A}\times\mathscr{B}\times\mathscr{C}\mid z\in B\land \forall y\in B, \mathrm{dist}(x,z)\leq\gamma\cdot\mathrm{dist}(x,y)\right\}.
\end{align*}
}

\ifabs{\paragraph{The cell-probe model}}{\paragraph{The cell-probe model.}} We adopt Yao's cell-probe model~\cite{yao1981should} for static data structures. A \concept{cell-probing scheme} $(\mathcal{A},\mathcal{T})$ for a data structure problem $\rho\subseteq\mathscr{A}\times\mathscr{B}\times\mathscr{C}$
consists of a \concept{cell-probing algorithm} $\mathcal{A}$ and a code (sometimes called the \concept{table structure}) $\mathcal{T}$. Each database $B\in\mathscr{B}$ is mapped by the code $\mathcal{T}:\mathscr{B}\to\Sigma^s$ to a codeword (called a \concept{table}) $T_B\in\Sigma^{[s]}$ of $s$ symbols from the alphabet $\Sigma=\{0,1\}^w$. The interpretation is that each database $B$ is preprocessed and stored as a {table} $T_B$ consisting of $s$ table \concept{cells} (also called a \concept{word}), with each cell storing $w$ bits.  Given an address $i\in[s]$ we use $T_B[i]$ to denote the content of the $i$-th cell in table $T_B$. 
Given a query $x\in\mathscr{A}$, the cell-probing algorithm $\mathcal{A}$ must output a correct answer $z\in\mathscr{C}$ such that $(x,B,z)\in\rho$, after accessing the table $T_B$ adaptively for $t$ times, each time reading the content of one table cell. This action of reading the content of one table cell by the cell-probing algorithm is usually called as making a \concept{cell-probe}.

The complexity of a cell-probing scheme is captured by three parameters: namely, the \concept{table size} $s$, the \concept{word size} $w$, and the time cost or \concept{cell-probe complexity} $t$. 

\ifabs{\paragraph{Cell-probe model with limited adaptivity}}{\paragraph{Cell-probe model with limited adaptivity:}}
In this work, we refine the cell-probe model by considering the rounds of parallelizable cell-probes in cell-probing algorithms. Formally, a \concept{$k$-round} cell-probing algorithm $\mathcal{A}$ can be described by $k$ \concept{lookup functions} $L_1,L_2,\ldots,L_k$ and one \concept{truth table} $A$. Each lookup function $L_i$ maps the query $x$ and the contents of the table cells probed before round $i$, to a sequence of addresses indicating the set of table cells to probe in round $i$. In the beginning, $L_1(x)=(p^1_1,p^1_2,\ldots,p^1_{t_1})\in[s]^{t_1}$ for some $t_1>0$, and for general $1\le i\le k$:
\ifabs{
\[
L_i\left(x, \left\langle p^j_\ell, T_B[p^j_\ell]\right\rangle_{ 1\le j<i\atop 1\le \ell\le t_j}\right)=(p^i_1,p^i_2,\ldots,p^i_{t_i})\in[s]^{t_i},
\]
for some $t_i>0$, 
}{
\[
L_i\left(x, \left\langle p^j_\ell, T_B[p^j_\ell]\right\rangle_{ 1\le j<i\atop 1\le \ell\le t_j}\right)=(p^i_1,p^i_2,\ldots,p^i_{t_i})\in[s]^{t_i}, \quad\text{ for some }t_i>0,
\]
}
so that at round $i$, the algorithm makes $t_i$ \emph{parallel} cell-probes to the the cells $p^i_\ell$, $1\le \ell \le t_i$. 
And finally, the truth table $A$ maps the contents of all the probed cells $\langle p^j_\ell, T_B[p^j_\ell]\rangle_{ 1\le j\le k\atop 1\le \ell\le t_j}$, to a correct answer $z$ satisfying that $(x, B,z)\in\rho$. 
The cell-probe complexity is given by $t=t_1+t_2+\cdots+t_k$. 
This formulation includes the standard definition of cell-probing scheme as a special case when $t_1=\cdots =t_k=1$. 

\ifabs{\paragraph{Public-coin vs.~private-coin cell-probing schemes}}{\paragraph{Public-coin vs.~private-coin cell-probing schemes:}}
In a (private-coin) randomized cell-probing scheme, the table is prepared by a code $\mathcal{T}$ deterministically as before, but the cell-probing algorithm $\mathcal{A}$ is a randomized algorithm. This can be considered as that the deterministic lookup functions $L_1,L_2,\ldots,L_k$ as well as the truth table $A$ also take a sequence of random bits $r\in\{0,1\}^*$ as part of the input.  We say we have a randomized cell-probing scheme $(\mathcal{A},\mathcal{T})$ for a data structure problem $\rho\subseteq\mathscr{A}\times\mathscr{B}\times\mathscr{C}$ if for every query $x\in\mathscr{A}$ and every database $B\in\mathscr{B}$, the cell-probing algorithm outputs a correct answer $z\in\mathscr{C}$ such that $(x,B,z)\in\rho$ with probability at least $2/3$.
The constant $2/3$ is rather arbitrary. Note that for problems such as approximate nearest neighbor search, where once the query $x$ is known, a monotone order of the correctness between multiple answers is fixed, any positive constant success probability is enough: we can boost it to any constant accuracy $1-\epsilon$ by independent repetition of the cell-probing algorithm for constant many times \emph{in parallel}, which will keep the asymptotic cell-probe complexity and the number of rounds of parallel cell-probes.

In this paper, all of our upper bounds will be presented first as \concept{public-coin} randomized cell-probing schemes. For a public-coin randomized cell-probing scheme, the sequence of random bits $r\in\{0,1\}^*$ is shared between the cell-probing algorithm $\mathcal{A}$ and the table structure $\mathcal{T}$, where the table $T_B^r$ is now determined by both the database $B$ and the random bits $r$. 
This makes no change to the family of data structures of polynomial size: by Newman's theorem~\cite{newman1991private}, a public-coin cell-probing scheme can be transformed to a standard randomized cell-probing scheme, where the randomness is private to the cell-probing algorithm. 

\begin{lemma}\label{lemma-public-vs-private}
If there is a $k$-round public-coin randomized cell-probing scheme for a data structure problem $\rho\subseteq\mathscr{A}\times\mathscr{B}\times\mathscr{C}$ with table size $s$, word size $w$, and cell-probe complexity $t$, then there is a $k$-round randomized cell-probing scheme for $\rho$ with table size $(\log|\mathscr{A}|+\log|\mathscr{B}|+O(1))s$, word size $w$, and cell-probe complexity $t$.
\end{lemma}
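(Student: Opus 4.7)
The plan is to apply Newman's theorem in the cell-probe setting, converting the shared randomness into a short private random index into a hard-coded library of precomputed tables. The first step is to invoke the boosting remark from the preliminaries so that, without loss of generality, the given public-coin scheme succeeds with probability at least $1-\delta$ on every input for a sufficiently small constant $\delta$ (say $\delta=1/100$). This inflates $s$, $w$, and $t$ by absolute constants only and leaves $k$ unchanged, since the repeated copies run in parallel.

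Next, draw $N=\log|\mathscr{A}|+\log|\mathscr{B}|+O(1)$ public-coin random strings $r_1,\dots,r_N$ independently. For each fixed input pair $(x,B)\in\mathscr{A}\times\mathscr{B}$, let $X_i$ be the indicator that the boosted scheme run with randomness $r_i$ answers $(x,B)$ correctly, so $\mathbb{E}[X_i]\geq 1-\delta$. A Chernoff bound gives $\Pr\bigl[\tfrac{1}{N}\sum_i X_i<2/3\bigr]\leq e^{-cN}$ for an absolute constant $c>\ln 2$; this is exactly the point where having $\delta$ small rather than merely below $1/3$ matters, because we need the Kullback--Leibler exponent $D(2/3\,\|\,1-\delta)$ to exceed $\ln 2$ so that the multiplicative constant in front of $\log|\mathscr{A}|+\log|\mathscr{B}|$ remains $1$ after the union bound. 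A union bound over the $|\mathscr{A}|\cdot|\mathscr{B}|$ input pairs then delivers specific strings $r_1^{\ast},\dots,r_N^{\ast}$ such that for every input $(x,B)$ at least a $2/3$-fraction of the $r_i^{\ast}$ give a correct answer.

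Third, construct the private-coin scheme from this library. Its table for a database $B$ is the concatenation of the $N$ public-coin tables $T_B^{r_1^{\ast}},\dots,T_B^{r_N^{\ast}}$, occupying $Ns=(\log|\mathscr{A}|+\log|\mathscr{B}|+O(1))s$ cells of $w$ bits each, with the strings $r_1^{\ast},\dots,r_N^{\ast}$ hard-coded into the new lookup functions $L_1,\dots,L_k$ and truth table $A$. On a query $x$, the algorithm uses its private randomness to pick $i\in[N]$ uniformly once at the outset, then in each round $j\in[k]$ it simulates round $j$ of the boosted public-coin algorithm under randomness $r_i^{\ast}$, shifting every probe address by $(i-1)s$ so that all accesses land in the $i$-th block. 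Because $i$ is fixed before any cell is probed, the simulated lookup functions still match the required $k$-round form, and the total cell-probe count and the number of rounds equal those of the boosted public-coin scheme.

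Correctness is immediate from the guarantee on $(r_1^{\ast},\dots,r_N^{\ast})$: over the uniform choice of $i$, the success probability is at least $2/3$ on every input. I don't expect a real obstacle here; the only bookkeeping subtlety is in the first step, where the constant blowup from boosting has to be absorbed into the asymptotic reading of ``cell-probe complexity $t$'' (and likewise for $s$ and $w$) rather than interfering with the $(\log|\mathscr{A}|+\log|\mathscr{B}|+O(1))s$ table-size bound. Once that is handled, the rest is a direct translation of the standard Newman-style derandomization into the cell-probe model.
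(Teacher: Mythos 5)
Your proof is correct and reaches the same endpoint as the paper's, but by a slightly different route: you re-derive the Newman-style argument from scratch entirely inside the cell-probe model (sample a small library of public-coin strings, Chernoff--KL plus union bound, hard-code the $N$ tables and index into one by private randomness), whereas the paper interprets the public-coin cell-probing scheme as a public-coin communication protocol with Alice the algorithm and Bob the table, invokes Newman's theorem as a black box, and then observes that the derandomization is structure-preserving so it translates back to a cell-probing scheme. Your route avoids the cell-probe-to-communication detour and the ``structure preservation'' observation, and it is more explicit about the constant: by boosting to $\delta=1/100$ you ensure $D(2/3\,\|\,1-\delta)>\ln 2$, which is genuinely needed if one wants the number of sampled strings to be $\log|\mathscr{A}|+\log|\mathscr{B}|+O(1)$ rather than a larger constant multiple (the paper's boost to error $1/4$ does not give a KL exponent above $\ln 2$, so the paper is implicitly a bit looser there). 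The one caveat, which you already flag, is that boosting to $\delta=1/100$ by parallel repetition inflates $t$, $s$, $w$ by a constant, so strictly speaking the output scheme has complexity $O(t)$ rather than $t$; the paper's proof has the same implicit issue in its ``WLOG error at most $1/4$'' step. Also note that both proofs tacitly rely on the boosting remark that applies to problems with a monotone order of answer quality (like $\ANNS$), which is not available for a fully general relation $\rho$; this is harmless in context since the lemma is only applied via Proposition~\ref{propo-public-vs-private-ANNS} to $\ANNS$, but it is worth being aware that as a literal statement about arbitrary $\rho$ both arguments leave that step unsupported.
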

\begin{proof}
The proof is similar to the proof of Lemma 6.5 in~\cite{CharkReg}, with the observation that the translation there also preserves the number of rounds. Without loss of generality, we assume that for every query to every database, the $k$-round public-coin randomized cell-probing scheme returns a correct answer except with an error probability at most $1/4$. The $k$-round public-coin randomized cell-probing scheme can be seen as a $k$-round public coin randomized communication protocol between Alice for the cell-probing algorithm and Bob for the table, where Bob is non-adaptive thus is only response to each individual message received in the current round according to its input $B$ in a consistent way (as a code). By Newman's theorem, the number of public random bits can be reduced to $\ell=\log(\log|\mathscr{A}|+\log|\mathscr{B}|+O(1))$ while the error probability is raised to $1/3$. This does not change the structure of the protocol, so it can be translated back to a $k$-round public-coin randomized cell-probing scheme for $\rho$ with the same time and space complexity as before and with $\ell$ public random bits. We create a table $T_B^r$ for every possible sequence of random bits $r\in\{0,1\}^\ell$ according to the public-coin cell-probing scheme. This gives us a big table of size $s\cdot2^\ell=s(\log|\mathscr{A}|+\log|\mathscr{B}|+O(1))$, and the random bits is made private to the cell-probing algorithm. 
\end{proof}

\ifabs{\noindent\textbf{Notations.}}{\paragraph{Notations.}} We use $\mathrm{dist}(\cdot,\cdot)$ to denote Hamming distance. We write $\log$ for binary logarithm and $\ln$ for natural logarithm.

\begin{suppress}
\noindent\textbf{Approximate Nearest Neighbor Search:} Let $\gamma>1$ be some fixed constants, we define Approximate Nearest Neighbor problem $\ANNS$ as a data structure query problem which is given by 
\[
	\mathscr{A}=\{0,1\}^d,\mathscr{B}=\binom{\{0,1\}^d}{n},\mathscr{C}=\{0,1\}^d
\]
\begin{align*}
	\rho=\{(x,y,z)\in\mathscr{A}\times\mathscr{B}\times\mathscr{C}:z\in y\land(\forall z'\in y(\mathrm{dist}(x,z)\leq\gamma\cdot\mathrm{dist}(x,z')))\},
\end{align*}
where ``$\mathrm{dist}$'' denotes Hamming distance in $\{0,1\}^d$.

For a universe of queries $\mathscr{A}$, a universe of databases $\mathscr{B}$, a universe of answers $\mathscr{C}$, and a relation $\rho\subseteq\mathscr{A}\times\mathscr{B}\times\mathscr{C}$, a \textbf{data structure query problem}, as we know, is a problem which gives us a database $y\in\mathscr{B}$ and requires us to build a data structure $D$ such that given any $x\in\mathscr{A}$ we can interact with the data structure $D$ and output a $z\in\mathscr{C}$ such that $(x,y,z)\in\rho$. In this work we look at \textbf{Approximate Nearest Neighbor Search problem} and \textbf{Approximate Near Neighbor Search problem} over the $d$-dimensional hypercube in \textbf{Randomized Cell Probe Model with Public-coin}.
\par
\noindent\textbf{Cell Probe Model and Cell Probe Complexity:} Cell Probe Model is a model of computation which is described for first time in Andrew Yao's work~\cite{yao1981should}. In cell probe model, a data structure $D$ be consider as a table consisting of $s$ cells each of which holds $w$ bits. The complexity of an algorithm for a data structure query problem is measured by a tuple $(s,w,t)$: $s$, the size of the table; $w$, the size of each cell; and $t$, the number of cells be accessed in worst case. Note that the cell probe model consider all operations are free except the cell access.
\par
\noindent\textbf{Cell Probe Scheme:} An algorithm for data structure query problem in cell probe model is called a cell probe scheme. A cell probe scheme is consited of two phases: the preprocessing phase and the query phase. In preprocessing phase, the scheme constructs a table according to the database $y$. 
In query phase, the scheme forgot the database $y$ as well as the content of table, then probes at most $t$ cells according to the query $x$, and output the answer $z$.
In this work, we focus on another complexity measure, the number of rounds we probe the cells. (Note that we can access several cells in one round.) Thus we measure the cell probe complexity in this work by a tuple $(s,w,k,t)$, which means the table contains $s$ cells, each cell contains at most $w$ bits, we probe the cells at most $k$ rounds, and probe at most $t$ cells in total.
\par
\noindent\textbf{Randomized Cell Probe Scheme with Public-coin:} Be similar with other computational problems, the cell probe schemes can be randomized. A randomized cell probe scheme allow randomization in preprocessing phase as well as query phase, i.e. we can randomizedly construct the table and randomizedly probe the cells. Also be similar with other computational problems, the randomness in randomized cell probe scheme can be private-coin or public-coin. For a randomized cell probe scheme with public-coin, we consider the random sources in the preprocessing phase and the query phase are same. Formally, let $RP(i)$ be the $i$-th output of the random source in preprocessing phase and $RQ(i)$ be the $i$-th output of the random source in query phase, then $\forall i,RP(i)=RQ(i)$.
We measure the randomized cell probe complexity in this work by a tuple $(s,w,k,t,\epsilon)$, which means the table contains $s$ cells, each cell contains at most $w$ bits, we probe the cells at most $k$ rounds, probe at most $t$ cells in total, and for any input the scheme outputs correct answer with probability at least $1-\epsilon$. When $\epsilon$ is not specified we assume that it is $\frac{1}{8}$.
\par
\noindent\textbf{Approximate Nearest Neighbor Search:} Let $\gamma>1$ be some fixed constants, we define Approximate Nearest Neighbor problem $\ANNS$ as a data structure query problem which is given by 
\[
	\mathscr{A}=\{0,1\}^d,\mathscr{B}=\binom{\{0,1\}^d}{n},\mathscr{C}=\{0,1\}^d
\]
\begin{align*}
	\rho=\{(x,y,z)\in\mathscr{A}\times\mathscr{B}\times\mathscr{C}:z\in y\land(\forall z'\in y(\mathrm{dist}(x,z)\leq\gamma\cdot\mathrm{dist}(x,z')))\},
\end{align*}
where ``$\mathrm{dist}$'' denotes Hamming distance in $\{0,1\}^d$.
\par
\noindent\textbf{Approximate Near Neighbor Search Problem:} Let $\lambda>0,\gamma>1$ be some fixed constants, we define $\lambda$-Aprroximate Near Neighbor problem $\lANNS$ as a data structure query problem which is given by
\[
	\mathscr{A}=\{0,1\}^d,\mathscr{B}=\binom{\{0,1\}^d}{n},\mathscr{C}=\{0,1\}^d\cup\mathrm{NO}
\]
\begin{align*}
	&\rho=\{(x,y,z)\in\mathscr{A}\times\mathscr{B}\times\mathscr{C}:\\
	&(\nexists z'\in y(\mathrm{dist}(x,z')\leq\lambda)\lor z\in y(\mathrm{dist}(x,z)\leq\gamma\lambda))\land(\exists z'\in y(\mathrm{dist}(x,z')\leq\gamma\lambda)\lor z=\mathrm{NO})\},
\end{align*}
where ``dist" denotes Hamming distance in $\{0,1\}^d$.
\end{suppress}

\section{Approximate nearest neighbor search in $k$ rounds}\label{section-upper-bound}
In this section, we will give two algorithms that solve the approximate nearest neighbor search problem $\ANNS$ within $k$ rounds on a table of size $n^{O(1)}$ and word size $O(d)$:
\begin{enumerate}
\item
a simple $k$-round randomized cell-probing scheme with $O(k(\log d)^{1/k})$ cell-probes;
\item 
a more sophisticated $k$-round randomized cell-probing scheme with $O(k+(\frac{1}{k}\log d)^{c/k})$ cell-probes, for any constant $c>2$.
\end{enumerate}
\ifabs{
In Appendix~\ref{appendix-ann}, we will also describe a folklore result in the current framework, which shows that if the problem is relaxed a little to the approximate \emph{near}-neighbor search problem (instead of the \emph{nearest} neighbor search), then on a table of polynomial size, the problem can be solved with $O(1)$ cell-probes by a \emph{non-adaptive} (i.e.~1-round) randomized cell-probing scheme.
}{
We will also include a folklore result in the current framework to show that if the problem is relaxed a little to the approximate \emph{near}-neighbor search problem (instead of the \emph{nearest} neighbor search), then on a table of polynomial size with word size $O(d)$, the problem can be solved with $O(1)$ cell-probes by a \emph{non-adaptive} (i.e.~1-round) randomized cell-probing scheme.
}

\ifabs{\paragraph{Public-coin vs.~private-coin in the cell-probe model}}{\paragraph{Public-coin vs.~private-coin in the cell-probe model.}}
All our three algorithms will be first presented as \emph{public-coin} cell-probing schemes, where the random bits are shared between the cell-probing algorithm and the table, and then transformed by Lemma~\ref{lemma-public-vs-private} to the standard randomized cell-probing schemes, where the random bits are private to the cell-probing algorithm, with the same round and cell-probe complexity and a polynomial overhead to the table size.   
In particular, for $\ANNS$ we have the following proposition.

\begin{proposition}\label{propo-public-vs-private-ANNS}
If there is a $k$-round public-coin randomized cell-probing scheme for $\ANNS$ with table size $s$, word size $w$, and cell-probe complexity $t$, there exists a $k$-round randomized cell-probing scheme for $\ANNS$ with table size $O(dn\cdot s)$, word size $w$, and cell-probe complexity $t$.
\end{proposition}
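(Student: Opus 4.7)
The plan is to derive Proposition~\ref{propo-public-vs-private-ANNS} as a direct specialization of Lemma~\ref{lemma-public-vs-private} to the universe sizes associated with $\ANNS$. Nothing new needs to be proved about rounds or adaptivity; Lemma~\ref{lemma-public-vs-private} already preserves the $k$-round structure, the word size $w$, and the cell-probe complexity $t$, so the only thing that has to be checked is that the blow-up factor $\log|\mathscr{A}|+\log|\mathscr{B}|+O(1)$ evaluates to $O(dn)$ for the specific universes of $\ANNS$.

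First I would recall from Section~\ref{section-prelim} that for $\ANNS$ the query universe is $\mathscr{A}=\{0,1\}^d$ and the database universe is $\mathscr{B}=\binom{\{0,1\}^d}{n}$. Then $\log|\mathscr{A}|=d$ and $\log|\mathscr{B}|=\log\binom{2^d}{n}\le n\log(2^d)=dn$. Plugging these into Lemma~\ref{lemma-public-vs-private} gives the new table size
\[
\bigl(\log|\mathscr{A}|+\log|\mathscr{B}|+O(1)\bigr)\,s \;\le\; (d+dn+O(1))\,s \;=\; O(dn\cdot s),
\]
while the word size stays $w$, the number of rounds stays $k$, and the cell-probe complexity stays $t$. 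That is exactly the stated bound.

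There is essentially no obstacle beyond this bookkeeping; the nontrivial content, namely Newman's reduction of public randomness to private randomness while preserving the round-by-round protocol structure, is already encapsulated in Lemma~\ref{lemma-public-vs-private}. So the proposal is simply to state the above inequality chain and invoke Lemma~\ref{lemma-public-vs-private}, making sure to remark that the correctness boosting (e.g.\ from $2/3$ to $3/4$ by constant parallel repetitions, as noted in Section~\ref{section-prelim}) does not change asymptotics in $s$, $w$, $t$, or $k$, so it can be safely absorbed before applying Lemma~\ref{lemma-public-vs-private}.
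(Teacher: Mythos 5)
Your proposal is correct and is exactly how the paper obtains Proposition~\ref{propo-public-vs-private-ANNS}: the surrounding text in Section~\ref{section-upper-bound} explicitly says the public-coin schemes are ``transformed by Lemma~\ref{lemma-public-vs-private}'', and the only content of the proposition is the bookkeeping $\log|\mathscr{A}|+\log|\mathscr{B}|+O(1)=d+\log\binom{2^d}{n}+O(1)\le d+dn+O(1)=O(dn)$, which you carry out correctly.
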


\ifabs{\paragraph{Common setup for the algorithms}}{\paragraph{Common setup for the algorithms.}}

\begin{suppress}
\begin{definition}
	Let $r$ be a positive integer and $\lambda$ a real number with $\lambda\geq1$. We define $\mathcal{V}_\lambda$ to be the distribution of a random $d$-coordinate row verctor in which each coordinate is independently chosen to be $1$ with probability $1/(4\lambda)$ and $0$ otherwise. We define $\mathcal{M}^r_\lambda$ to be the distribution of a random $r\times d$ matrix where each row is independently chosen from distribution $\mathcal{V}_\lambda$.
\end{definition}

\begin{lemma}[refined from Lemma 6.2 in~\cite{CharkReg}]
\label{lem:vector equality}
	Let $\lambda\geq1$ and $\gamma>1$. Then, there exist two numbers $\delta_1(\lambda)<\delta_2(\lambda,\gamma)$, both in $[0,1]$, such that $\delta_2(\lambda,\gamma)-\delta_1(\lambda)=(\Delta-\Delta^\gamma)/2$, where $\Delta$ depends only on $\lambda$ and satisfies $1/2\leq\Delta\leq1/\sqrt{e}$, and such that, for all $d\geq1$ and for all $x,z\in\{0,1\}^d$,
	\begin{align*}
		\mathrm{dist}(x,z)\leq\lambda &\implies\Pr[Yx\neq Yz]\leq\delta_1(\lambda),\\
		\mathrm{dist}(x,z)>\gamma\lambda &\implies\Pr[Yx\neq Yz]>\delta_2(\lambda,\gamma),
	\end{align*}
	where $Y$ is a random row verctor drawn from distribution $\mathcal{V}_\lambda$.
\end{lemma}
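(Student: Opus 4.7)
The plan is to reduce $\Pr[Yx \neq Yz]$ to a closed-form expression depending only on $h := \mathrm{dist}(x,z)$, and then set $\delta_1$ and $\delta_2$ to be this expression evaluated at the two threshold values $h = \lambda$ and $h = \gamma\lambda$.

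First I would note that $Yx$ is a single bit computed mod $2$, so $Yx \neq Yz$ iff $Y \cdot (x \oplus z) \equiv 1 \pmod 2$. Letting $v := x \oplus z$, only the $h$ coordinates in the support of $v$ contribute, and each corresponding entry of $Y$ is independently Bernoulli with parameter $p := 1/(4\lambda)$. A standard computation (equivalently, $\mathbb{E}[(-1)^{\sum Y_i}] = (1-2p)^h$) then gives the closed form
\[
\Pr[Yx \neq Yz] \;=\; \frac{1 - \left(1 - \tfrac{1}{2\lambda}\right)^h}{2},
\]
which depends only on $h$ (uniformly in $d$) and is strictly increasing in $h$. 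Defining
\[
\delta_1(\lambda) := \frac{1 - \left(1 - \tfrac{1}{2\lambda}\right)^\lambda}{2}, \qquad \delta_2(\lambda,\gamma) := \frac{1 - \left(1 - \tfrac{1}{2\lambda}\right)^{\gamma\lambda}}{2},
\]
both conditional inequalities of the lemma follow immediately by monotonicity (with strict inequality when $h > \gamma\lambda$). Writing $\Delta := \left(1 - \tfrac{1}{2\lambda}\right)^\lambda$, the difference telescopes to $\delta_2(\lambda,\gamma) - \delta_1(\lambda) = (\Delta - \Delta^\gamma)/2$, which is positive because $\Delta \in (0,1)$ and $\gamma > 1$.

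The remaining task, which I expect to be the main (though mild) obstacle, is to verify that $1/2 \le \Delta \le 1/\sqrt{e}$ whenever $\lambda \ge 1$. I would take logarithms and study $g(\lambda) := \lambda \ln\!\left(1 - \tfrac{1}{2\lambda}\right)$. Expanding the logarithm as a Taylor series gives
\[
g(\lambda) \;=\; -\sum_{k \ge 1} \frac{1}{k \cdot 2^k \cdot \lambda^{k-1}},
\]
in which every summand is non-negative and non-increasing in $\lambda$; hence $g$ is non-decreasing on $[1,\infty)$, with $g(1) = -\ln 2$ and $\lim_{\lambda \to \infty} g(\lambda) = -1/2$. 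Exponentiating yields $\Delta \in [1/2,\, e^{-1/2}]$, as claimed. Everything else is routine bookkeeping, and since $d$ enters the analysis only through the Hamming weight of $v$, the bound holds uniformly in $d$.
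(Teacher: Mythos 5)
Your proof is correct and follows essentially the same route as the paper's: derive the closed form $\Pr[Yx\neq Yz]=\tfrac{1}{2}\bigl(1-(1-\tfrac{1}{2\lambda})^h\bigr)$, set $\delta_1,\delta_2$ by evaluating at $h=\lambda$ and $h=\gamma\lambda$, identify $\Delta=(1-\tfrac{1}{2\lambda})^\lambda$, and bound $\Delta\in[1/2,1/\sqrt{e}]$ by monotonicity. The only (cosmetic) differences are that the paper obtains the closed form by a two-stage resampling of $Y$ and conditioning on whether any differing coordinate was selected, whereas you use the Fourier identity $\mathbb{E}[(-1)^{\sum Y_i}]=(1-2p)^h$, and you supply an explicit Taylor-series argument for the monotonicity of $\lambda\mapsto\Delta$ that the paper merely asserts.
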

\begin{proof}
The majority of the proof is going to be the same as that of Lemma 6.2 in~\cite{CharkReg}, except for a more carful quantitive analysis of the gap between $\delta_2$ and $\delta_1$.

	Consider the following equivalent way of choosing $Y$: first choose a set $\mathbf{C}\subseteq[d]$ where each integer in $[d]$ is put in $\mathbf{C}$ independently with probability $1/(2\lambda)$. Then, for each $i\in\mathbf{C}$ let the $i$-th coordinate of $Y$ be chosen uniformly from $\{0,1\}$. For $i\notin\mathbf{C}$, set the $i$-th coordinate of $Y$ to zero. Let $z\in\{0,1\}^d$ be arbitrary and let $h=\mathrm{dist}(x,z)$. If $\mathbf{C}$ does not contain any of the coordinates on which $x$ and $z$ differ, then clearly $Yx=Yz$. This happens with probability $(1-1/(2\lambda))^h$. Otherwise, if $\mathbf{C}$ contains at least one of the coordinates on which $x$ and $y$ differ, the probability that $Yx\neq Yz$ is precisely $1/2$ (note that we perform matrix multiplication on GF(2)). Hence,
	\[
		\Pr[Yx\neq Yz]=\frac{1}{2}\left(1-\left(1-\frac{1}{2\lambda}\right)^h\right).
	\]
	It can be seen that this is a monotonically increasing function of $h$ and that by plugging in $\lambda$ and $\gamma\lambda$ for $h$ one obtains $\delta_2$ and $\delta_1$ as follows.

	Denote $\sigma=\delta_2-\delta_1$. We have
	\begin{align*}
		2\sigma	&=\left(1-\frac{1}{2\lambda}\right)^\lambda-\left(1-\frac{1}{2\lambda}\right)^{\gamma\lambda}\\
			   &=\left(\left(1-\frac{1}{2\lambda}\right)^{2\lambda}\right)^{1/2}-\left(\left(1-\frac{1}{2\lambda}\right)^{2\lambda}\right)^{\gamma/2}.
	\end{align*}
	Let $\Delta=(1-\frac{1}{2\lambda})^\lambda$. Note that $\Delta^2$ is monotonically increasing when $2\lambda\geq1$, and $\lim_{\lambda\to+\infty}\Delta^2=1/e$. Recall that $\lambda\geq1$. Consequently $1/2\leq\Delta<1/\sqrt{e}$. Hence
	\begin{align*}
		\sigma=(\Delta-\Delta^\gamma)/2,
	\end{align*}
	where $\Delta$ is a function of $\lambda$ satisfying $1/2\leq\Delta\leq1/\sqrt{e}$.
\end{proof}

The following lemma is a  consequence of Lemma~\ref{lem:vector equality} by the Chernoff bound and was proved in~\cite{CharkReg}. 
\begin{lemma}[Lemma 6.3 in~\cite{CharkReg}]\label{lem:threshold}
	Let $\lambda\geq1$ and $\gamma>1$. Define $\delta(\lambda,\gamma)=(\delta_1(\lambda)+\delta_2(\lambda,\gamma))/2$, where $\delta_1,\delta_2$ are as in Lemma \ref{lem:vector equality}. Then, for all $d\geq1$, all $u,v\in\{0,1\}^d$, and all $r\geq1$,
	\begin{align*}
		\mathrm{dist}(u,v)\leq\lambda &\implies\Pr[\mathrm{dist}(Mu,Mv)>\delta(\lambda,\gamma)\cdot k]\leq\exp(-(\delta_2-\delta_1)^2r/2),\\
		\mathrm{dist}(u,v)>\gamma\lambda &\implies\Pr[\mathrm{dist}(Mu,Mv)\leq\delta(\lambda,\gamma)\cdot k]\leq\exp(-(\delta_2-\delta_1)^2r/2),
	\end{align*}
	where $M$ is a random matrix drawn from distribution $\mathcal{M}_\lambda^r$.
\end{lemma}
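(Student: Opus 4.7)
The plan is to express $\mathrm{dist}(Mu,Mv)$ as a sum of independent indicator random variables and then apply a standard Hoeffding bound. Specifically, since $M\sim\mathcal{M}_\lambda^r$ consists of $r$ rows $Y_1,\ldots,Y_r$ drawn i.i.d.\ from $\mathcal{V}_\lambda$, and since the matrix-vector products are over GF$(2)$, the $i$-th coordinate of $Mu$ equals $Y_iu$ and depends only on $Y_i$. Hence
\[
\mathrm{dist}(Mu,Mv)=\sum_{i=1}^r X_i,
\qquad X_i:=\mathbf{1}[Y_iu\neq Y_iv],
\]
where $X_1,\ldots,X_r$ are independent Bernoulli variables with a common parameter $p$ that depends only on $\mathrm{dist}(u,v)$ via the argument in Lemma~\ref{lem:vector equality}.

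Next I would invoke Lemma~\ref{lem:vector equality} in each of the two cases. If $\mathrm{dist}(u,v)\leq\lambda$, then $p\leq\delta_1(\lambda)$, so $\mathbb{E}[\sum_i X_i]\leq\delta_1 r$, which sits at distance at least $(\delta_2-\delta_1)/2$ below the threshold $\delta r=((\delta_1+\delta_2)/2)r$. Symmetrically, if $\mathrm{dist}(u,v)>\gamma\lambda$, then $p>\delta_2(\lambda,\gamma)$, so the expectation exceeds $\delta_2 r$, which lies at distance at least $(\delta_2-\delta_1)/2$ above $\delta r$. Thus each of the two desired bounds reduces to controlling the probability that a sum of $r$ independent $\{0,1\}$-valued variables deviates from its mean by at least $\tfrac{1}{2}(\delta_2-\delta_1)\cdot r$.

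Finally I would apply Hoeffding's inequality, which for $[0,1]$-bounded independent summands yields a one-sided tail of $\exp(-2t^2 r)$ on deviations of size $tr$. Plugging in $t=(\delta_2-\delta_1)/2$ gives
\[
\exp\!\Bigl(-2\cdot\bigl((\delta_2-\delta_1)/2\bigr)^2\cdot r\Bigr)=\exp\!\bigl(-(\delta_2-\delta_1)^2r/2\bigr),
\]
matching the stated bound in both cases. There is no real obstacle to overcome here: the proof is a routine concentration argument, and the only thing requiring any care is to verify that placing the threshold at the midpoint $\delta=(\delta_1+\delta_2)/2$ produces the same deviation $(\delta_2-\delta_1)/2$ for both the upper and lower tails, which is immediate from symmetry. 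The fact that $1/2\leq\Delta\leq1/\sqrt{e}$ in Lemma~\ref{lem:vector equality} is not needed for this step; it will matter only when later invoking this lemma to choose parameters for the data structure.
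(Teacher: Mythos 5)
Your proposal is correct and coincides with the paper's own argument: the paper likewise decomposes $\mathrm{dist}(Mu,Mv)$ into the $r$ independent row-indicators governed by Lemma~\ref{lem:vector equality} and applies the additive Chernoff--Hoeffding bound with the threshold placed at the midpoint $\delta=(\delta_1+\delta_2)/2$, giving deviation $(\delta_2-\delta_1)/2$ on each side and the stated exponent $(\delta_2-\delta_1)^2 r/2$. Your closing remark that the range $1/2\le\Delta\le1/\sqrt{e}$ is not used at this step is also accurate; that bound only matters downstream when selecting the constants $c_1,c_2$.
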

\end{suppress}

We consider only constant approximation ratio $\gamma>1$, so without lost of generality, we can assume that $\gamma<4$, since for larger $\gamma$ our algorithms will only have better approximation. Let $\alpha\triangleq\sqrt{\gamma}$, and hence $1<\alpha<2$. 
Let $x\in\{0,1\}^d$ denote the query point and $B\subseteq\{0,1\}^d$, $|B|=n$, denote the database. We always assume that $n>d$. 
For $0\le i\le\lceil\log_\alpha d\rceil\}$, let 
\begin{align}
B_i=\{y\in B\mid \mathrm{dist}(x,y)\le \alpha^i\},\label{eq:B-i}
\end{align} 
be the set of all database points within distance $\alpha^i$ of $x$.

\begin{definition}\label{definition-upper-bound-common}
Let $c_1,c_2>64/(1-\mathrm{e}^{(1-\alpha)/2})^2$ be constants and $1<s<\ln\ln n$. 
For $0\le i\le \lceil\log_\alpha d\rceil$, let $M_i, N_i$ be the independent random Boolean matrices such that each entry is sampled \emph{i.i.d.}~from $\mathrm{Bernoulli}(\frac{1}{4\alpha^i})$, with $M_i$ of size $(c_1\log n)\times d$ and $N_i$ of size $(\frac{c_2}{s}\log n)\times d$.
For $0\le j\le i\le \lceil\log_\alpha d\rceil$, we define the sets
\ifabs{
\begin{align}
C_i 
&=
\left\{z\in B\mid \mathrm{dist}(M_ix,M_iz)\leq\delta(\alpha^i,\alpha) c_1\log n\right\},\label{eq:C-i}\\
D_{i,j} 
&=
\left\{z\in C_i\mid \mathrm{dist}(N_jx,N_jz)\leq\delta(\alpha^j,\alpha) \frac{c_2}{s}\log n\right\}, \label{eq:D-i}
\end{align}
}{
\begin{align}
C_i 
&=
\left\{z\in B\mid \mathrm{dist}(M_ix,M_iz)\leq\delta(\alpha^i,\alpha)\cdot c_1\log n\right\},\label{eq:C-i}\\
D_{i,j} 
&=
\left\{z\in C_i\mid \mathrm{dist}(N_jx,N_jz)\leq\delta(\alpha^j,\alpha)\cdot (c_2\log n)/s\right\}, \label{eq:D-i}
\end{align}
}
where $\delta(\beta,\alpha)=\frac{1}{2}\left(1-\frac{1}{2\beta}\right)^{\beta}\left[1-\left(1-\frac{1}{2\beta}\right)^{(\alpha-1)\cdot\beta}\right]$.
\end{definition}

The following lemma proved in~\cite{CharkReg} shows that $C_i$ are approximations of the balls $B_i$, and $D_i$ are also approximations in a weaker sense.
\begin{lemma}[Chakrabarti and Regev~\cite{CharkReg}]\label{lemma-upper-bound-assumption}
The following events hold simultaneously with probability at least $3/4$:
\begin{enumerate}
\item
$B_i\subseteq C_i\subseteq B_{i+1}$ for all $i$.
\item
For all $0\le j\le i\le \lceil\log_\alpha d\rceil$, 
at most a fraction $n^{-1/s}$ of $B_j$ is not in $D_{i,j}$ and at most a fraction $n^{-1/s}$ of $C_i\setminus B_{j+1}$ is in $D_{i,j}$.
\end{enumerate}
\end{lemma}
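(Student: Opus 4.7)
The plan is to reduce both items to single-pair collision estimates for a row drawn with i.i.d.\ $\mathrm{Bernoulli}(1/(4\beta))$ entries, boost them by an additive Chernoff bound over the independent rows of $M_i$ or $N_j$, and then close up via a union bound for item 1 and via Markov's inequality for item 2. The starting point is the elementary identity
$$\Pr[Yu \neq Yv] = \tfrac{1}{2}\bigl(1 - (1 - 1/(2\beta))^h\bigr),$$
where $Y$ has i.i.d.\ $\mathrm{Bernoulli}(1/(4\beta))$ entries and $h = \mathrm{dist}(u,v)$; this is monotone in $h$. Plugging in $h = \beta$ and $h = \alpha\beta$ yields $\delta_1(\beta) < \delta_2(\beta,\alpha)$ with midpoint $\delta(\beta,\alpha)$, and the gap simplifies to $\delta_2 - \delta_1 = \tfrac{1}{2}\Delta(1 - \Delta^{\alpha-1})$ with $\Delta = (1 - 1/(2\beta))^\beta \in [1/2, 1/\sqrt{e}]$, so it is bounded below by a positive constant depending only on $\alpha$, concretely at least $\tfrac{1}{4}(1 - e^{(1-\alpha)/2})$. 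An additive Chernoff inequality on the sum $\mathrm{dist}(Mu,Mv)$ of $r$ i.i.d.\ Bernoulli indicators then shows that, for any fixed pair $u,v$, $\mathrm{dist}(Mu,Mv)$ lies on the wrong side of $\delta(\beta,\alpha)\cdot r$ with probability at most $\exp\bigl(-(\delta_2 - \delta_1)^2 r / 8\bigr)$.

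For item 1 I would apply this with $\beta = \alpha^i$ and $r = c_1 \log n$. The hypothesis $c_1 > 64/(1 - e^{(1-\alpha)/2})^2$ forces the per-pair failure probability to be at most $n^{-2}$, and a union bound over the $n$ database points and the $O(\log_\alpha d) \le O(\log n)$ scales $i$ keeps the overall failure probability for item 1 below $1/8$, simultaneously guaranteeing $B_i \subseteq C_i$ and $C_i \cap (B \setminus B_{i+1}) = \emptyset$ for every $i$.

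For item 2 I would instantiate the same Chernoff bound with $\beta = \alpha^j$ but now with $r = (c_2 \log n)/s$ rows of $N_j$. The per-pair failure probability becomes $n^{-\Omega(c_2/s)}$, which by choice of $c_2$ is at most $n^{-2/s}$. Summing this over $z \in B_j$ (respectively over $z \in B \setminus B_{j+1}$, conditioned on item 1 so that $C_i \setminus B_{j+1} \subseteq B \setminus B_{j+1}$) shows that the expected number of misclassified points is at most $n^{-2/s}$ times the size of the reference set, so Markov's inequality yields the declared $n^{-1/s}$ fractional bound with failure probability at most $n^{-1/s}$ for each pair $(i,j)$. Since $s < \ln\ln n$ forces $n^{1/s} \ge n^{1/\ln\ln n}$ to dominate any polynomial in $\log n$, a union bound over the $O((\log_\alpha d)^2)$ pairs $(i,j)$ keeps the total failure probability for item 2 below $1/8$, and combining with item 1 gives simultaneous success probability at least $3/4$.

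The main obstacle is the careful quantitative calibration in item 2: the Chernoff exponent in the coarse regime, which is inversely proportional to $s$, must beat $1/s$ by enough margin that Markov still delivers the stated $n^{-1/s}$ fractional bound after summing over all $O((\log d)^2)$ index pairs. The hypothesis $c_1, c_2 > 64/(1 - e^{(1-\alpha)/2})^2$ is precisely calibrated to buy this margin uniformly in $\beta$, by ensuring that $(\delta_2 - \delta_1)^2 c_2 / s \ge 2/s + \omega(1)$ across all scales. Everything else --- the manipulation of the formula for $\Pr[Yu \neq Yv]$, the Chernoff bound for sums of independent Bernoullis, and the two union bounds --- is routine.
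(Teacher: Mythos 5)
Your plan --- the collision probability identity $\Pr[Yu\neq Yv]=\frac{1}{2}\bigl(1-(1-\frac{1}{2\beta})^h\bigr)$, the gap $\delta_2-\delta_1=\frac{1}{2}\Delta(1-\Delta^{\alpha-1})\ge\frac{1}{4}(1-e^{(1-\alpha)/2})$, Chernoff over the independent rows, a union bound for item~1, and Markov plus a union bound for item~2 --- is exactly the route the paper implicitly relies on (the paper does not reprove this lemma but cites it from~\cite{CharkReg}). Two calibration issues are worth flagging, one of which would actually cause item~1 to fail as written.

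Your per-pair Chernoff exponent, $\exp\bigl(-(\delta_2-\delta_1)^2 r/8\bigr)$, is a factor of~$4$ looser than the Hoeffding form: with $\Pr[\sum X_i>(p+\tau)r]\le\exp(-2r\tau^2)$ and $\tau=(\delta_2-\delta_1)/2$, the per-pair failure is $\exp\bigl(-(\delta_2-\delta_1)^2 r/2\bigr)$. Under $c_1>64/(1-e^{(1-\alpha)/2})^2$, your $r/8$ version yields a per-pair failure only around $n^{-1/(2\ln 2)}\approx n^{-0.72}$, which does not survive the union bound over $n$ database points and $O(\log_\alpha d)$ scales in item~1; the $r/2$ form gives $n^{-2}$ per pair as needed, or equivalently you would need $c_1>256/(1-e^{(1-\alpha)/2})^2$ to compensate. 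Separately, in item~2 the fraction must be measured relative to the \emph{random} set $C_i\setminus B_{j+1}$: summing over the deterministic $B\setminus B_{j+1}$ bounds the raw count, not the fraction of the smaller random reference set. The clean fix is to condition on $M_i$ (which fixes $C_i$) before taking expectation over the independent matrix $N_j$, then apply Markov to the conditional expectation; the union bound over the $O((\log_\alpha d)^2)$ pairs $(i,j)$ then closes as you say. Finally, the hypothesis buys $(\delta_2-\delta_1)^2 c_2\ge 4$, a constant-factor margin over the needed $4\ln 2$, not the additive $\omega(1)$ margin you claim.
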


\subsection{A simple $k$-round protocol for $\mathsf{ANNS}$}\label{subsection-simple-upper-bound}
\begin{theorem}[Theorem~\ref{theorem-simple-upper-bound-informal}, restated]\label{theorem-simple-upper-bound}
Let $\gamma>1$ be any constant. For $n>d$ and $k\ge 1$, $\ANNS$ has a $k$-round randomized cell-probing scheme with $O\left(k(\log d)^{1/k}\right)$ cell-probes, 
table size $n^{O(1)}$ and word size $O(d)$.
\end{theorem}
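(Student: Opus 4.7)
\textbf{Proof plan for Theorem~\ref{theorem-simple-upper-bound}.} The overall plan is to reduce the search to locating, among the $O(\log d)$ concentric radii $\alpha^i$, the smallest scale at which some database point is nearby, and to perform this location by a standard $k$-round $q$-way search on the monotone sequence of sets $C_i$ from Definition~\ref{definition-upper-bound-common}. With a suitable table, each scale's non-emptiness test will cost a single cell-probe, so a $q$-way search with $q=\lceil(\log d)^{1/k}\rceil$ gives the claimed bound.

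First I would set up the table. For each $i\in\{0,1,\ldots,\lceil\log_\alpha d\rceil\}$ and each possible ``fingerprint'' $v\in\{0,1\}^{c_1\log n}$, store in cell $T[i,v]$ an arbitrary database point $z\in B$ satisfying $\mathrm{dist}(v,M_iz)\le\delta(\alpha^i,\alpha)\cdot c_1\log n$, and store $\bot$ if no such $z$ exists. Because $M_i$ is shared public randomness, the algorithm can locally compute $v_i=M_ix$ from the query, and a single probe to $T[i,v_i]$ either returns a point of $C_i$ or certifies $C_i=\emptyset$. The table has $O(\log d)\cdot 2^{c_1\log n}=n^{O(1)}$ cells of word size $O(d)$.

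Next I would run the multi-way search. Condition on the high-probability event of Lemma~\ref{lemma-upper-bound-assumption}, under which $B_i\subseteq C_i\subseteq B_{i+1}$ for all $i$; in particular the sequence $(\mathbf{1}[C_i\ne\emptyset])_i$ is monotonically nondecreasing, so the smallest index $\hat i$ with $C_{\hat i}\ne\emptyset$ is well-defined. The algorithm maintains an interval of candidate indices (initially $[0,\lceil\log_\alpha d\rceil]$); in each round it probes $q$ equally spaced indices inside this interval \emph{in parallel}, and uses the pattern of $\bot$ vs.~non-$\bot$ answers to shrink the interval by a factor of $q$. After $k$ rounds the interval has length $1$, pinning down $\hat i$, and the algorithm returns the witness retrieved at that final probe. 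Correctness follows because $B_{\hat i-1}\subseteq C_{\hat i-1}=\emptyset$ forces the true nearest neighbor to lie at distance $>\alpha^{\hat i-1}$ from $x$, while the returned witness sits in $C_{\hat i}\subseteq B_{\hat i+1}$ and hence within distance $\alpha^{\hat i+1}$ of $x$; the ratio $\alpha^2=\gamma$ is the desired approximation, and the constant success probability of Lemma~\ref{lemma-upper-bound-assumption} is boosted to $2/3$ at no asymptotic cost via the independent-parallel-repetition remark in Section~\ref{section-prelim}.

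Finally, I would tally the complexity. The algorithm uses $k$ rounds and $k\cdot q=O(k(\log d)^{1/k})$ total cell-probes on a public-coin scheme with table size $n^{O(1)}$ and word size $O(d)$. Applying Proposition~\ref{propo-public-vs-private-ANNS} converts the scheme to a private-coin randomized cell-probing scheme with the same number of rounds and probes and a polynomial-factor blowup of the table, yielding the stated bounds. The only non-routine step is the table design that compresses ``is $C_i=\emptyset$, and if not give me a witness'' into one cell-probe; this is made possible by the shared public coins, which let us enumerate all $2^{c_1\log n}$ fingerprints in advance. Once this is in place, the rest is a standard divide-and-conquer on a monotone $O(\log d)$-length sequence.
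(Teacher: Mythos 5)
Your overall approach coincides with the paper's: approximate the balls $B_i$ by the fingerprint sets $C_i$, store a representative (or $\bot$) at each address $(i, M_i x)$, and perform a $k$-round multi-way search for the threshold index $\hat i = \min\{i : C_i \neq \emptyset\}$, then convert public to private coins via Proposition~\ref{propo-public-vs-private-ANNS}. The table layout, the monotonicity observation, the count of $k\cdot q = O(k(\log d)^{1/k})$ probes, and the $\alpha = \sqrt{\gamma}$ scaling are all the same as in the paper.

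There is, however, a genuine gap: your correctness argument assumes $\hat i \geq 1$, since it invokes $C_{\hat i - 1} = \emptyset$ to lower-bound the true nearest distance by $\alpha^{\hat i - 1}$. When $\hat i = 0$ (i.e., some database point lies within distance $1$ of $x$, so $B_0 \neq \emptyset$ and hence $C_0 \neq \emptyset$), there is no $C_{-1}$ to invoke, and the argument fails in a way that matters. Concretely, if $x \in B$ the true nearest distance is $0$, so any $\gamma$-approximate nearest neighbor must be $x$ itself; but the witness stored at $T[0, M_0 x]$ is an arbitrary element of $C_0 \subseteq B_1$, which can be a different database point at distance $1$, making the output wrong. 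The paper disposes of this by first testing, in parallel in the first round, whether $x \in B$ and whether $x$ is within distance $1$ of $B$, via perfect hashing on $B$ and on the $1$-neighborhood $N_1(B)$; this costs $O(1)$ extra probes and a polynomial additive table overhead, so none of the stated bounds change, and it lets the main algorithm safely assume $B_0 = B_1 = \emptyset$, guaranteeing $\hat i \geq 1$. Your proof needs to add this preliminary check (or some equivalent) before the multi-way search; once it is in place, the rest of your argument is sound.
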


As mentioned earlier, the solution will be presented as a \emph{public-coin} cell-probing scheme, which by Proposition~\ref{propo-public-vs-private-ANNS}, is then transformed to a standard randomized cell-probing scheme with the same cell-probe complexity and number of rounds on a table of polynomial size.

Without lost of generality, assume that $\gamma<4$ and let $\alpha\triangleq\sqrt{\gamma}$. 
Let $x\in\{0,1\}^d$ denote the query point and $B\subseteq\{0,1\}^d$, $|B|=n$, denote the database. Recall that $B_i$, as defined in~\ref{eq:B-i}, are the sets of all database points within distance $\alpha^i$ of $x$.

There are two degenerate cases. The first case is when $B_0$ is not empty, which means $x\in B$. This case can be solved as a membership query of $x$ in the set $B$, by the perfect hashing with 1 cell-probe to a table of size $O(n^2)$, with the random hash function as public randomness. The second degenerate case is when $B_1$ is not empty, which means the query point $x$ is within distance 1 from $B$. This can also be solved as a membership query of $x$ in the 1-neighborhood $N_1(B)=\{y\in\{0,1\}^d\mid \exists z\in B, \mathrm{dist}(y,z)\le 1\}$ of $B$, which contains at most $(d+1)n$ points, by the same method, using 1 cell-probe to a table of quadratic size with public randomness. 

Note that these two instances of perfect hashing can run separately and in parallel to each other, and to the main data structure solving the non-degenerate cases, so that if a query $x$ finds itself within $B$ or within distance 1 from $B$, then the algorithm terminates and outputs the nearest neighbor. This will cost a polynomial addition to the table size and 2 more queries in the first round, but make no change to the number of rounds. For the rest, we can assume the following.
\begin{assumption}\label{assume-0}
$B_0=B_1=\emptyset$.
\end{assumption}

The goal of the main data structure is to find an $i$ such that $B_i$ is empty but $B_{i+2}$ is not and output a point in $B_{i+2}$, assuming that $B_0=B_1=\emptyset$. Such a point is clearly a $\gamma$-approximate nearest neighbor of $x$. 

For $0\le i\le \lceil\log_\alpha d\rceil$, let $M_i$ be the random $(c_1\log n)\times d$ matrices sampled independently as in Definition~\ref{definition-upper-bound-common} and $C_i\subseteq B$ the subsets of database points constructed from $M_i$ as in Definition~\ref{definition-upper-bound-common}. The random matrices $M_i$ are treated as the public randomness shared between the cell-probing algorithm and the table.
The table who possesses the database $B$ may construct $C_i$ from $M_ix$ for every possible $x\in\{0,1\}^d$ (in fact, for every $M_ix\in\{0,1\}^{c_1\log n}=[n^{c_1}]$), while the cell-probing algorithm who possesses the query point $x$ may compute the product $M_ix$ from the actual query point $x$.

By Lemma~\ref{lemma-upper-bound-assumption}, the following assumption holds with probability at least $3/4$:
\begin{assumption}\label{assume-1}
$B_i\subseteq C_i\subseteq B_{i+1}$ for all $i$.
\end{assumption}
With this assumption, the algorithm only needs to find an $i$ such that $C_{i}\neq\emptyset$ but $C_{i-1}=\emptyset$. Since $B_{i-1}\subseteq C_{i-1}=\emptyset$ and $B_{i+1}\supseteq C_{i}\neq\emptyset$, any  point in $C_{i}$ is a $\gamma$-approximation nearest neighbor of $x$.

\ifabs{\paragraph{Table construction}}{\paragraph{Table construction.}}
We construct $\lceil\log_\alpha d\rceil +1$ tables $T_0,\ldots, T_{\lceil\log_\alpha d\rceil}$. Each table $T_i$ contains $2^{c_1\log n}=\mathrm{poly}(n)$ many cells, where each cell corresponds to a string $j\in\{0,1\}^{c_1\log n}$, so the total number of cells in all these tables is a polynomial of $n$. Here $c_1$ is the constant factor in the number of rows of $M_i$.
Due to the public randomness, the table contents may depend on both the database $B$ and the public random matrices $M_i$.

For every $0\le i\le \lceil\log_\alpha d\rceil$ and every $j\in\{0,1\}^{c_1\log n}$, the content of the $j$-th cell $T_i[j]$ in the $i$-th table $T_i$ is given as follows: 
\begin{itemize}
\item
If there exists a datapoint $z\in B$ such that $\mathrm{dist}(j,M_iz)\leq \delta(\alpha^i,\alpha)\cdot c_1\log n$, the cell $T_i[j]$ stores an arbitrary one of such $z$.
\item
If otherwise there is no such datapoint, $T_i[j]$ stores a special symbol indicating the EMPTY.
\end{itemize}
Note that $M_ix\in\{0,1\}^{c_1\log n}$ is a valid address for the cells in a table $T_i$. 
And for every $i$, the table cell $T_i[M_ix]$ stores a point from $C_i$ if $C_i$ is not empty, or $T_i[M_ix]=\mathrm{EMPTY}$ if $C_i=\emptyset$.

\ifabs{\paragraph{Cell-probing algorithm}}{\paragraph{Cell-probing algorithm.}}
The algorithm possesses the query point $x$ and the public random matrices $M_i$.
Set $\tau=c'(\log d)^{1/k}$, for a constant $c'\ge\log_\alpha 4$ so that
\[
\tau\cdot\left(\frac{\tau}{2}\right)^{k-1} \geq \lceil\log_\alpha d\rceil.
\]
The cell-probing algorithm consists of at most $(k-1)$ \emph{shrinking rounds}, succeeded by one final \emph{completion round}. And if $k=1$, the algorithm is non-adaptive and just consists of a completion round. 
In every round the algorithm makes at most $\tau$ parallel cell-probes to the table. The total number of cell-probes is at most $(\tau-1)(k-1)+\tau=O(k(\log_\alpha d)^{1/k})$. The pseudocode of the cell-probing algorithm is given in Algorithm~\ref{alg:simple}.

\begin{algorithm}[!htp]
\ifabs{\caption{Simple $k$-round cell-probing algorithm}\label{alg:simple}
}{\caption{Simple $k$-round cell-probing algorithm for $\ANNS$}\label{alg:simple}}
\begin{algorithmic}
\State Set $\tau\gets (\log d)^{1/k}\log_\alpha4$;
\State initialize $l\gets 0$ and $u\gets\log_\alpha d$;
\While {$u-l\geq \tau$} \ifabs{}{\Comment{shrinking rounds}}
\State let $\rho(r)\gets\lfloor l+\frac{r}{\tau}(u-l)\rfloor$ for $0\le r\le\tau-1$;
\State retrieve $T_{\rho(r)}[M_{\rho(r)}x]$ for $1\leq r\leq \tau-1$;
\If {$\exists$ $1\leq r\leq \tau-1$ s.t. $T_{\rho(r)}[M_{\rho(r)}x]\neq \mathrm{EMPTY}$ }
\State let $r^*$ be the smallest such r;
\Else
\State $r^*\gets\tau$;
\EndIf
\State update $l\gets\rho(r^*-1)$ and $u\gets\rho(r^*)$;
\EndWhile
\State retrieve $T_i[M_ix]$ for all $l+1\leq i\leq u$; \ifabs{}{\Comment{completion round}}
\State $i^*\gets\min\{l+1\leq i\leq u:  T_i[M_ix]\neq \mathrm{EMPTY}\}$;
\State \Return $T_{i^*}[M_{i^*}x]$; 
\end{algorithmic}
\end{algorithm}

The algorithm finds a $\gamma$-approximate nearest neighbor of $x$ by a multi-way search: it maintains two integers $l$ and $u$, initially $l=0$ and $u=\lceil \log_\alpha d\rceil$. At each round $l$ and $u$ are updated, satisfying the invariant that $l<u$, $C_l=\emptyset$ and $C_u\neq\emptyset$. This invariant is satisfied initially since by Assumption~\ref{assume-0} and~\ref{assume-1} we have $C_0\subseteq B_1=\emptyset$ and $C_{\lceil \log_\alpha d\rceil}\supseteq B_{\lceil \log_\alpha d\rceil}=B$. For $0\le r\le \tau$, we denote $\rho(r)\triangleq\lfloor l+\frac{r}{\tau}(u-l)\rfloor$. The cell-probing algorithm proceeds as follows:
\begin{enumerate}
\item
In each shrinking round: the algorithm reads the contents of $T_{\rho(r)}[M_{\rho(r)}x]$ for all $1\leq r \leq \tau-1$ in parallel, and finds those $r$ such that $T_{\rho(r)}[M_{\rho(r)}x]\neq \mathrm{EMPTY}$, which means $C_{\rho(r)}\neq\emptyset$. Let $r^*$ be the smallest such $r$, or let $r^*=\tau$ if no such $r$ exists. Update $l$ to $\rho({r^*-1})$ and $u$ to  $\rho({r^*})$. The new gap between $l$ and $u$ is $\rho({r^*})-\rho({r^*-1})$, which is at most $(u-l)/\tau+1$.
\item
Once the gap $u-l$ drops below $\tau$, the algorithm enters the completion round: it reads the cells $T_{i}[M_{i}x]$ for all $l+1\le i\le u$ in parallel, finds the smallest $i$ such that $T_{i}[M_{i}x]\neq\mathrm{EMPTY}$, and outputs the point stored in that cell. Such $i$ must exist since we know $C_u\neq\emptyset$.
Note that the output point is from a nonempty $C_{i}$ such that $C_{i-1}=\emptyset$. With Assumption~\ref{assume-1}, it must be a $\gamma$-approximate nearest neighbor of $x$. 
\end{enumerate}
Note that in every shrinking round,  $l$ and $u$ are updated to $l'$ and $u'$ respectively so that $u'-l'\leq (u-l)/\tau+1\le 2(u-l)/\tau$ as long as $u-l\geq \tau$. And once $u-l< \tau$, the algorithm enters the completion round. Recall that $\tau\cdot (\tau/2)^{k-1}  \geq \lceil\log_\alpha d\rceil$. Hence, there can be at most $(k-1)$ shrinking rounds.

\subsection{A $k$-round protocol for $\mathsf{ANNS}$ for large $k$}\label{subsection-general-upper-bound}
\begin{theorem}[Theorem~\ref{theorem-general-upper-bound-informal}, restated]\label{theorem-general-upper-bound}
Let $\gamma>1$ and $c>2$ be any constants. 
For $n> d$ and $k> 5c^2/(c-2)$, $\ANNS$ has a $k$-round randomized cell-probing scheme with $O\left(k+\left(\frac{1}{k}\log d\right)^{c/k}\right)$ cell-probes,
  table size $n^{O(1)}$, and word size $O(d)$.
\end{theorem}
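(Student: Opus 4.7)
The plan is to extend the simple $k$-round scheme from Section~\ref{subsection-simple-upper-bound} by also exploiting the coarse-ball approximations $D_{i,j}$ from Definition~\ref{definition-upper-bound-common}. The key observation is that a single cell of word size $O(d)$ can simultaneously hold the answers to a batch of $\tau$ different coarse non-emptiness queries, so that one cell-probe can realize a branching factor $\tau$ much larger than the $(\log d)^{1/k}$ branching of the simple algorithm. The two degenerate cases $B_0\ne\emptyset$ and $B_1\ne\emptyset$ are handled as in Section~\ref{subsection-simple-upper-bound} via two perfect-hash tables probed in parallel in the first round. Conditioning on the good event of Lemma~\ref{lemma-upper-bound-assumption}, the goal is again to locate an index $i^*$ with $C_{i^*-1}=\emptyset$ and $C_{i^*}\ne\emptyset$ and to return a witness from $C_{i^*}$, which is a $\gamma$-approximate nearest neighbor of $x$.

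I would build two layers of tables. The fine layer is the same as in Algorithm~\ref{alg:simple}: the cell $T^{\mathrm{fine}}_i[M_ix]$ stores a witness of $C_i$ or a special symbol $\mathrm{EMPTY}$. The coarse layer is the new ingredient: for each index $i$ and each pre-determined batch $J\subseteq\{0,\ldots,i\}$ of size $\tau$, the cell $T^{\mathrm{coarse}}_{i,J}[M_ix,(N_jx)_{j\in J}]$ stores a length-$\tau$ bitmap indicating whether each $D_{i,j}$ with $j\in J$ is nonempty, together with witnesses. Since $M_i$ contributes $c_1\log n$ bits of address and each $N_j$ contributes $(c_2/s)\log n$ bits, the address space of every coarse table is $n^{c_1+\tau c_2/s}$, which remains polynomial as long as $\tau=O(s)$; the stored bitmap fits comfortably in a single $O(d)$-bit cell.

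Given these tables, the algorithm performs a $\tau$-way multi-way search that maintains the invariant $C_l=\emptyset\ne C_u$, initially $(l,u)=(0,\lceil\log_\alpha d\rceil)$. In each of the first $k-1$ branching rounds, I would take $J$ to be $\tau$ equally spaced indices of $[l,u]$ and, in parallel, probe $T^{\mathrm{coarse}}_{u,J}[M_ux,(N_jx)_{j\in J}]$ together with $T^{\mathrm{fine}}_u[M_ux]$. By Lemma~\ref{lemma-upper-bound-assumption}(2), the returned bitmap correctly identifies (up to $n^{-1/s}$ error) the smallest $r\le\tau$ for which $D_{u,j_r}\ne\emptyset$, which lets the algorithm update $(l,u)$ so that the gap shrinks by roughly a factor of $\tau$. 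Once $u-l$ drops below a pre-set threshold of order $(\log d/k)^{c/k}$, a final completion round probes $T^{\mathrm{fine}}_i[M_ix]$ in parallel for every $i$ in the remaining interval and returns the witness from the smallest nonempty $C_i$. The first $k-1$ rounds contribute $O(k)$ cell-probes, the completion round contributes $O((\log d/k)^{c/k})$ cell-probes, and the public randomness is finally removed by Proposition~\ref{propo-public-vs-private-ANNS}.

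The hard part will be the parameter balancing demanded by the condition $k>5c^2/(c-2)$. I must simultaneously satisfy: (i) $\tau^{k-1}\cdot(\log d/k)^{c/k}\ge\log d$, so that $k-1$ rounds of $\tau$-way branching leave an interval short enough to be swept within the completion budget; (ii) $\tau=O(s)$ with $1<s<\ln\ln n$, so every coarse-ball table stays of polynomial size; and (iii) the approximation error $n^{-1/s}$ of Lemma~\ref{lemma-upper-bound-assumption}(2), summed over the polynomially many random events invoked during the multi-way search, still leaves constant overall success probability. The constant $c$ governs how aggressively each probe may branch at the cost of requiring more rounds, and the numeric inequality $k>5c^2/(c-2)$ precisely pins down the feasibility region in which the three constraints can coexist. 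The verification that each update preserves the invariant $C_l=\emptyset\ne C_u$ and that the returned witness is a $\gamma$-approximate nearest neighbor then follows by direct application of Lemma~\ref{lemma-upper-bound-assumption}, in parallel with the analysis of Algorithm~\ref{alg:simple}.
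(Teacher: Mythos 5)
Your high-level plan (augment the fine tables of Algorithm~\ref{alg:simple} with coarse $D_{i,j}$-based auxiliary tables to boost the branching factor per probe) is indeed the direction the paper takes, but the route you sketch has several concrete gaps that would make it fail, especially for small values of $k$ in the permitted range.

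First, the batching is wrong. You want a single coarse cell $T^{\mathrm{coarse}}_{i,J}$ indexed by $(M_ix,(N_jx)_{j\in J})$ with $|J|=\tau$, and you correctly observe that its address space is $n^{c_1+\tau c_2/s}$ and hence polynomial only when $\tau=O(s)$. But $s$ must satisfy $1<s<\ln\ln n$ (it is set to $\Theta(k)$ in the paper), whereas the branching factor $\tau$ you need is $\Theta((\log d/k)^{c/k})$, which for a constant $k>5c^2/(c-2)$ is a fixed positive power of $\log d$ and vastly exceeds $O(s)$. So the constraint ``$\tau=O(s)$'' is not something to hope to satisfy; it simply cannot coexist with the required branching. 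The paper's resolution is structurally different: each auxiliary cell $\widetilde{T}_{i,j}[\overline{w}]$ batches only $s$ coarse tests, and a shrinking phase probes $\lceil(\tau-1)/s\rceil$ such cells \emph{in parallel} (together with one fine cell), achieving an effective branching of $\tau$ with only $O(\tau/s)$ probes in that round. That parallel spread across many small-address auxiliary cells, not one gigantic coarse cell, is the mechanism that makes the space stay polynomial.

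Second, storing a ``bitmap of non-emptiness of each $D_{u,j}$'' and reading off the smallest $r$ with $D_{u,j_r}\neq\emptyset$ will not work. Lemma~\ref{lemma-upper-bound-assumption}(2) only guarantees a \emph{density} estimate: up to an $n^{-1/s}$ fraction of $C_i\setminus B_{j+1}$ may spuriously enter $D_{i,j}$. If $C_u$ is large, this can make $D_{u,j}$ nonempty even when $B_{j+1}\cap C_u=\emptyset$, so a plain non-emptiness bit misleads the search. The paper instead stores (and the algorithm acts on) the smallest $r$ with the \emph{density} condition $|D_{u,\rho(r)}|>n^{-1/s}|C_u|$. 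This threshold is exactly what Assumption~\ref{assume-2} can certify: such a $D_{u,\rho(r)}$ must contain a genuine point of $B_{\rho(r)+1}\subseteq C_{\rho(r)+1}$, giving $C_{\rho(r)+1}\neq\emptyset$ after the update.

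Third, because the coarse test can still err on the lower side (it can fail to flag a small but nonempty $B_{\rho(r^*-1)}$), the paper spends a \emph{second} round in each shrinking phase to verify with one fine probe $T_{\rho(r^*-1)-1}[M_{\rho(r^*-1)-1}x]$, and the progress guarantee is correspondingly two-sided: either the gap $u-l$ shrinks by a factor $\Theta(\tau)$, or $|C_u|$ shrinks by a factor $n^{-1/2s}$. Since $C_u$ stays nonempty, the second kind of progress can happen at most $2s$ times, so only $(k-1)/2-2s$ phases need to be gap-shrinking, and each phase consumes $2$ rounds, giving $(k-1)/2$ phases in $k-1$ rounds plus the completion round. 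Your accounting uses $k-1$ branching rounds each of one probe, has no verification step, and tracks only gap shrinkage; that is not tight enough to yield the invariant $C_l=\emptyset\neq C_u$ after each update, and it also does not explain the numerology $k>5c^2/(c-2)$, which arises precisely from balancing $(k-1)/2-2s$ gap-shrinking phases (with $s=\Theta(k)$) against the required $\tau^{(k-1)/2-2s}\ge\log_\alpha d/\max\{3\tau,k\}$ inequality.

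In short, your proposal identifies the right ingredients but conflates the branching factor with the per-cell batch size, uses the wrong kind of coarse test, and omits the two-round verification and two-potential progress argument that make the parameter balancing possible. All three of these are essential, not just technicalities.
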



As before the algorithm is also presented as a public-coin cell-probing scheme, and is transformed into a standard randomized cell-probing scheme by Proposition~\ref{propo-public-vs-private-ANNS}.

This more sophisticated algorithm reuses several components of the simple algorithm in Theorem~\ref{theorem-simple-upper-bound}. For $0\le i\le \lceil{\log_\alpha d}\rceil$, the sets $B_i$ and $C_i$, and the random matrices $M_i$ are constructed in the same way as before. The degenerate cases when $B_0$ or $B_1$ is not empty are also handled in the same as before, so we proceed by assuming Assumption~\ref{assume-0}.

Set $s\triangleq\left(\frac{1}{4}-\frac{1}{2c}\right)k-\frac{1}{4}>1$. We assume that $k =o( \ln\ln d)$, because for some sufficiently large $k=O(\ln\ln d/ \ln\ln\ln d)$, it can be verified that our algorithm already makes $O(1)$ cell-probes per round on average, so there is no need to consider larger number of rounds after that. Hence, we have $1<s<\ln\ln d<\ln\ln n$.

Let $N_j$ be the random $(\frac{c_2}{s}\log n)\times d$ matrices sampled independently as in Definition~\ref{definition-upper-bound-common} and $D_{i,j}\subseteq B$ the subsets of database points constructed from $M_i$ and $N_j$ as in Definition~\ref{definition-upper-bound-common}.  Now the public randomness shared between the cell-probing algorithm and the table are the random matrices $M_i$ and $N_j$ for $0\le j\le i\le \lceil{\log_\alpha d}\rceil$. We make another assumption.

\begin{assumption}\label{assume-2}
For all $0\le j\le i\le \lceil\log_\alpha d\rceil$, 
at most a fraction $n^{-1/s}$ of $B_j$ is not in $D_{i,j}$ and that at most a fraction $n^{-1/s}$ of $C_i$ $\backslash$ $B_{j+1}$ is in $D_{i,j}$.
\end{assumption}
By Lemma~\ref{lemma-upper-bound-assumption}, the error probability of an algorithm that succeeds by assuming both Assumption~\ref{assume-1} and Assumption~\ref{assume-2} is at most $1/4$.

\ifabs{\paragraph{Table construction}}{\paragraph{Table construction.}}
We reuse the $\lceil\log_\alpha d\rceil +1$ tables $T_0,\ldots, T_{\lceil\log_\alpha d\rceil}$ constructed in Theorem~\ref{theorem-simple-upper-bound}. 
In addition, we further construct $(\lceil\log_\alpha d\rceil +1)\times 2^{c_1\log n}$ \emph{auxiliary tables} $\widetilde{T}_{i,j}$ for $0\le i\le \lceil\log_\alpha d\rceil$ and $j\in\{0,1\}^{c_1\log n}$. 
The address of each cell in an auxiliary table $\widetilde{T}_{i,j}$ corresponds to a concatenation $\overline{w}=\langle l,u,w_0,w_1,\ldots,w_s\rangle$ of:
\begin{itemize}
\item a pair of lower and upper thresholds $0\le l\le u\le \lceil\log_\alpha d\rceil$;
\item a special index $1 \le w_0\le s$;
\item $s$ short strings $w_1,\ldots,w_s\in\{0,1\}^{\frac{c_2}{s}\log n}$.  
\end{itemize}
Altogether these correspond to at most $(\log_\alpha d) s2^{c_2\log n}=\mathrm{poly}(n)$ cells in each auxiliary table.
The total number of cells in all tables remains to be a polynomial of $n$.

For $0\le i\le \lceil\log_\alpha d\rceil$, $j\in\{0,1\}^{c_1\log n}$, and any address $\overline{w}=\langle l,u,w_0,w_1,\ldots,w_s\rangle$ of cells in auxiliary table $\widetilde{T}_{i,j}$, the content of the cell $\widetilde{T}_{i,j}[\overline{w}]$ is given as follows: 
For $1\le r\le s$, define $\rho(r)\triangleq\lfloor l+\frac{r-1}{s-1}(u-l)\rfloor$. Let $j=M_ix$ and $w_r=N_{\rho(r)}x$ for $1\le r\le s$. We construct the sets $C_i,D_{i,\rho(1)},\ldots,D_{i,\rho(s)}$ since we now have complete information about the sets.
\begin{itemize}
\item
If there exists an $1\le r\le w_0$ such that $|D_{i,\rho(r)}|>n^{-1/s}|C_i|$, then the cell $\widetilde{T}_{i,j}[\overline{w}]$ stores the smallest such $r$.
\item
If otherwise there is no such $r$, the cell $\widetilde{T}_{i,j}[\overline{w}]$ stores $s+1$.
\end{itemize}

\ifabs{\paragraph{Cell-probing algorithm}}{\paragraph{Cell-probing algorithm.}}
Set $\tau=c'(\frac{1}{k}\log d)^{c/k}$ for some constant $c'\ge\log_\alpha 4$ so that
\[
\left(\frac{\tau}{2}\right)^{\frac{k-1}{2}-2s}\geq \left\lceil\frac{\log_\alpha d}{k}\right\rceil.
\]
The cell-probing algorithm contains at most $(k-1)/2$ \emph{shrinking phases}, succeeded by one final \emph{completion round}. Each shrinking phase contains at most two {rounds}. In every shrinking phase the algorithm makes at most $\frac{\tau-1}{s}+2$ cell-probes to the table, and in the completion round it makes at most $\max\{3\tau,k\}$ parallel cell-probes. Thus the total number of cell-probes is at most 
\ifabs{
\begin{align}
 &\frac{k-1}{2}\left(\left\lceil\frac{\tau-1}{s}\right\rceil+2\right)+\max(3\tau,k)\nonumber\\
 =&O\left(k+\left(\frac{1}{k}\log_\alpha d\right)^{c/k}\right).\label{eq:algorithm-2-cell-probe}
\end{align}
}{
\begin{align}
 \frac{k-1}{2}\left(\left\lceil\frac{\tau-1}{s}\right\rceil+2\right)+\max(3\tau,k)=O\left(k+\left(\frac{1}{k}\log_\alpha d\right)^{c/k}\right).\label{eq:algorithm-2-cell-probe}
\end{align}
}

The algorithm maintains two integers $l$ and $u$, initially $l=0$ and $u=\lceil\log_\alpha d\rceil$. At each shrinking phase $l$ and $u$ are updated, satisfying the invariant that $l<u$, $C_l=\emptyset$ and $C_u\neq\emptyset$. This invariant is satisfied initially since we have $C_0\subseteq B_1=\emptyset$ and $C_{\lceil \log_\alpha d\rceil}\supseteq B_{\lceil \log_\alpha d\rceil}=B$.

The aim of the algorithm is at each shrinking phase to \emph{shrink} the gap $u-l$ by a factor of $O(\tau)$ or to \emph{shrink} the size of $C_u$. When the gap $u-l$ drops below $\max\{3\tau,k\}$ the algorithm enters the completion round, where sets $C_l,\ldots,C_u$ are searched simultaneously by at most $\max\{3\tau,k\}$ parallel cell-probes in one round. We claim that at each shrinking phase, the algorithm updates $l$ and/or $u$ in such a way that either $u'-l'\leq (u-l)/\tau+3$ or $|C_{u'}|\le n^{-1/2s}|C_u|$, where $l'$ and $u'$ denote the updated values of $l$ and $u$, respectively.

\ifabs{\begin{algorithm}[!ht]}
	{\begin{algorithm}}
\ifabs{\caption{$k$-round cell-probing algorithm for large $k$}\label{alg:general}
}{\caption{$k$-round cell-probing algorithm for $\ANNS$ for large $k$}\label{alg:general}}
\begin{algorithmic}
\State Set $\tau\gets(\frac{1}{k}\log d)^{c/k}\log_{\alpha}4$;
\State initialize $l\gets 0$ and $u\gets \log_\alpha d$;
\While {$u-l\geq \max\{3\tau,k\}$}\ifabs{}{\Comment{ shrinking phases}}
\State for $r\in[\tau]$, let $\rho(r)\gets\lfloor l+\frac{r}{\tau}(u-l)\rfloor$;
\For {$j=1$ to $\lceil (\tau-1)/s\rceil$} \ifabs{}{\Comment{computing addresses}}
\State $l_j\gets\rho(1+(j-1)s)$ and $u_j\gets\rho(js)$;
\If {$j=\lceil (\tau-1)/s\rceil$ and $s\nmid\tau-1$}
\State $w_0^j\gets \tau-1-s\lfloor\frac{\tau-1}{s}\rfloor$;
\Else
\State $w_0^j\gets s$;
\EndIf
\State $w_q^j\gets N_{\rho(1+(j-1)s+q-1)}x$ for $1\le q\le w_0^j$;
\EndFor
\State retrieve $T_u[M_ux]$;
\State retrieve $\widetilde{T}_{u,M_ux}[\overline{w}^j]$ for $1\leq j\leq \lceil (\tau-1)/s\rceil$;\ifabs{}{\Comment{$1^\mathrm{st}$ round in shrinking phase}}
\If {$\exists$ $1\leq j\leq \lceil (\tau-1)/s\rceil$ s.t. $\widetilde{T}_{u,M_ux}[\overline{w}^j]\neq s+1$ }
\State let $j^*$ be the smallest such $j$;
\State $r^*\gets (j^*-1)s+\widetilde{T}_{u,M_ux}[\overline{w}^{j^*}]$;
\Else
\State $r^*\gets \tau$;
\EndIf
\If {$r^*=1$}
\State update $u\gets\rho(1)+1$;
\Else
\State retrieve $T_{\rho(r^*-1)-1}[M_{\rho(r^*-1)-1}]x$;\ifabs{}{\Comment{$2^\mathrm{nd}$ round in shrinking phase}}
\If {$T_{\rho(r^*-1)-1}[M_{\rho(r^*-1)-1}]x=\mathrm{EMPTY}$}
\State update $l\gets \rho(r^*-1)-1$;
\If {$r^*<\tau$}
\State update $u\gets\rho(r^*)+1$;
\EndIf
\Else
\State update $u\gets\rho(r^*-1)-1$;
\EndIf
\EndIf
\EndWhile
\State retrieve $T_i[M_ix]$ for all $l+1\leq i\leq u$; \ifabs{}{\Comment{completion round}}
\State $i^*=\min\{l+1\leq i\leq u: T_i[M_ix]\neq \mathrm{EMPTY}\}$;
\State \Return $T_{i^*}[M_{i^*}x]$.
\end{algorithmic}
\end{algorithm}


For $0\le r\le \tau-1$, we denote $\rho(r)\triangleq\lfloor l+\frac{r}{\tau}(u-l)\rfloor$. The cell-probing algorithm proceeds as follows:
\begin{enumerate}
\item
In shrinking phase: among sets $D_{u,\rho(1)},\ldots,D_{u,\rho(\tau-1)}$, the algorithm will first find the smallest $r$ such that $|D_{u,\rho(r)}|>n^{-1/s}|C_u|$. To find such $r$, the algorithm first arranges these sets into $\lceil(\tau-1)/s\rceil$ groups where each group contains up to at most $s$ sets, with each group consumes one parallel cell-probe as follows: for every $1\leq j\leq \lceil(\tau-1)/s\rceil$, let the concatenation $\overline{w}^j=\langle l_j,u_j,w_0^j,w_1^j,\ldots,w_s^j\rangle$ be constructed as:
\begin{itemize}
\item the lower and upper thresholds for the current group: $l_j=\rho(1+(j-1)s)$ and $u_j=\rho(js)$;
\item $w_0^j$ gives the number of sets $D_{u,\rho(r)}$ in the current group: normally it is just $s$ except for the last group due to the rounding, so $w_0^j=\tau-1-s\left\lfloor\frac{\tau-1}{s}\right\rfloor$ if $j=\lceil(\tau-1)/s\rceil$ and $s\nmid\tau-1$, and $w_0^j=s$ if otherwise;
\item for $q=1,2,\ldots, w_0^j$, let $w_q^j=N_{\rho(1+(j-1)s+q-1)}x$.
\end{itemize}
The algorithm reads the contents of cells $T_u[M_ux]$ and $\widetilde{T}_{u,M_ux}[\overline{w}^j]$ for all $1\le j\le \lceil(\tau-1)/s\rceil$ in parallel. 
Let $j^*$ be the smallest $j$ such that $\widetilde{T}_{u,M_ux}[\overline{w}^j]\neq s+1$, or $j^*=(\tau-1)/s+1$ if no such $j$ exists. If $j^*=(\tau-1)/s+1$, let $r^*=\tau$. Otherwise let $r^*=(j^*-1)s+\widetilde{T}_{u,M_ux}[\overline{w}^{j^*}]$. Remember that $\widetilde{T}_{u,M_ux}[\overline{w}^j]= s+1$ means that all $|D_{u,l_j}|,\ldots,|D_{u,\rho(1+(j-1)s+w_0-1)}|\leq n^{-1/s}|C_u|$. Hence such $r^*$ is the smallest $r\in[\tau]$ such that $|D_{u,\rho(r)}|>n^{-1/s}|C_u|$, or $r^*=\tau$ if no such $r$ exists. There are three cases:
\begin{enumerate}
\item 
If $r^*=1$ (CASE 1), the algorithm updates $u$ to $\rho(1) +1$, leaving $l$ unchanged, skips the second round and moves to the next phase. 
\item
Otherwise, the algorithm reads the content of cell $T_{\rho(r^*-1)-1}[M_{\rho(r^*-1)-1}x]$. If the cell is EMPTY (CASE 2), it updates $l$ to $\rho(r^*-1)-1$ and if further $r^*<\tau$, updates $u$ to $\rho(r^*)+1$. 
\item
If $T_{\rho(r^*-1)-1}[M_{\rho(r^*-1)-1}x]\neq \mathrm{EMPTY}$ (CASE 3), the algorithm updates $u$ to $\rho(r^*-1)-1$, leaving $l$ unchanged.
\end{enumerate}
\item
Once the gap $u-l$ drops below $\max\{3\tau,k\}$, the algorithm enters the completion round: it reads the cells $T_i[M_ix]$ for all $l+1\leq i\leq u$ in parallel, finds the smallest $i$ such that $T_i[M_ix]\neq\mathrm{EMPTY}$, and outputs the point stored in that cell. Such $i$ must exist since we know $C_u\neq\emptyset$. Note that the output point is from a nonempty $C_{i}$ such that $C_{i-1}=\emptyset$. With Assumption~\ref{assume-1}, it must be a $\gamma$-approximate nearest neighbor of $x$. 
\end{enumerate}
The pseudocode of the cell-probing algorithm is given in Algorithm~\ref{alg:general}.

We now verify that at each time when the $l$ and $u$ are updated, the invariant that $l<u$, $C_l=\emptyset$ and $C_u\neq\emptyset$ is satisfied. First, in all three cases $l<u$ is obviously satisfied after update. 
\begin{itemize}
\item
Since in CASE 1 and CASE 3 the lower threshold $l$ is not changed, $C_l$ stays empty. And in CASE 2, $T_{\rho(r^*-1)-1}[M_{\rho(r^*-1)-1}x]=\mathrm{EMPTY}$ implies that the set $C_l=C_{\rho(r^*-1)-1}$ is empty.  
\item
In CASE 3, since $T_{\rho(r^*-1)-1}[M_{\rho(r^*-1)-1}x]\neq\mathrm{EMPTY}$, the set $C_u=C_{\rho(r^*-1)-1}$ is nonempty. In CASE 2 when $r^*=\tau$ the upper threshold $u$ is not changed so that $C_u$ stays nonempty. For the remaining cases, since $|D_{u,\rho(r^*)}|>n^{-1/s}|C_s|$, by Assumption~\ref{assume-2}, the set $D_{u,\rho(r^*)}$ must contains at least one point from $B_{\rho(r^*)+1}$. Since $B_{\rho(r^*)+1}\subseteq C_{\rho(r^*)+1}$, the set $C_u=C_{\rho(r^*)+1}$ is nonempty.
\end{itemize}
And note that in CASE 1 and CASE 2, the gap between the updated values of $l$ and $u$ is at most $(\lfloor l+\frac{r^*}{\tau}(l-u)\rfloor +1)-(\lfloor l+\frac{r^*-1}{\tau}(l-u)\rfloor -1)\leq \frac{l-u}{\tau}+3$, and in CASE 3, the size of the new $C_u$ is $|C_{\rho_{r^*-1}-1}| \leq |B_{\rho_{r^*-1}}| \leq |D_{u,\rho_{r^*-1}}|/(1-n^{-1/s}) \leq 2|D_{u,\rho_{r^*-1}}| \leq 2n^{-1/s}|C_u|$.
Therefore, in each shrinking phase, either $u'-l'\leq (u-l)/\tau+3$ or $|C_{u'}|\le n^{-1/2s}|C_u|$, where $l'$ and $u'$ denote the updated values of $l$ and $u$, respectively. 

Notice that as $C_u$ stays nonempty, there are most $2s$ shrinking phases in which $|C_u|$ drops. On the other hand, as long as $u-l\geq \max(3\tau,k)$, we have $(u-l)/\tau+3\leq 2(u-l)/\tau$. Since we choose our $\tau$ to satisfy $\frac{\log_\alpha d}{k}\leq (\tau/2)^{(k-1)/2-2s}$, there can be at most $(k-1)/2-2s$ shrinking phases in which $(u-l)$ shrinks by a factor of $2/\tau$. Hence, overall there can be at most $(k-1)/2$ shrinking phases. Each shrinking phase contains at most 2 rounds, where the algorithm makes $\lceil\frac{\tau-1}{s}\rceil+1$ parallel cell-probes in the first round and one cell-probe in the second round of that phase, and at last in the completion round the algorithm makes $\max(3\tau,k)$ parallel cell-probes. The total number of cell-probes is as given by~\eqref{eq:algorithm-2-cell-probe}.

\ifabs{}{\subsection{A 1-probe protocol for $\lambda$-$\mathsf{ANN}$}}
\ifabs{The algorithms presented in Section~\ref{section-upper-bound} are for the \emph{search} of the \emph{nearest} neighbors. }{The algorithms presented in previous sections are for the \emph{search} of the \emph{nearest} neighbors. }
These highlighted words seem to be critical to this non-trivial cell-probe complexity on a table of polynomial size when both randomization and approximation are allowed. 

Consider a well-known decision version of the problem: the approximate \emph{$\lambda$-near neighbor} problem $\lANN$. Let $\gamma>1,\lambda>0$ be fixed. A point $y\in\{0,1\}^d$ is a $\lambda$-near neighbor to $x\in\{0,1\}^d$ if $\mathrm{dist}(x,y)\le \lambda$. Given a query point $x\in\{0,1\}^d$ and a database $B\subseteq\{0,1\}^d$, the problem $\lANN$ asks to distinguish between the two cases: (1) there is a database point $y\in B$ which is a $\lambda$-near neighbor to $x$, and (2) there is no database point $y\in B$ which is a $\gamma\lambda$-near neighbor to $x$.  For other cases, the answer can be arbitrary. 
This problem has been extensively studied in the context of lower bounds for nearest neighbor search~\cite{borodin1999lower,BarkRaba,liu2004strong, Mihai06,panigrahy2010lower}.



The following is a folklore result: if randomization is allowed then $\lANN$ can be solved with 1-probe on a table of polynomial size. We actually show this for a slightly stronger search problem, the approximate {$\lambda$-near neighbor} search problem $\lANNS$, where if it is the case that there is a $\lambda$-near neighbor in the database, a database point which is a $\gamma\lambda$-near neighbor is output. 


\begin{theorem}\label{theorem-ANN-upper-bound}
Let $\gamma>1$ be any constant. For $n>d$, $\lANNS$ has a randomized cell-probing scheme for with $1$ cell-probe,  table size $n^{O(1)}$, and word size $O(d)$.
\end{theorem}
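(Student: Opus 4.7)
The plan is to present a public-coin $1$-probe scheme and then invoke the public-to-private translation (Lemma~\ref{lemma-public-vs-private}, specialized as in Proposition~\ref{propo-public-vs-private-ANNS}) to obtain the claimed randomized scheme with only a polynomial blow-up in table size. Throughout, the approximation ratio $\gamma>1$ and the radius $\lambda>0$ are treated as fixed.

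First I would draw a single random Boolean matrix $M$ of size $r\times d$ with $r=\Theta(\log n)$, where each entry is an independent $\mathrm{Bernoulli}(1/(4\lambda))$ variable, exactly as in the distribution $\mathcal{M}_\lambda^r$ underlying Definition~\ref{definition-upper-bound-common}. The matrix $M$ is the public randomness, shared between the table and the cell-probing algorithm. The table $T$ has $2^r=n^{O(1)}$ cells, each addressed by a string $j\in\{0,1\}^r$ and capable of storing one point of $\{0,1\}^d$ (together with a symbolic $\bottom$), giving word size $O(d)$. For each address $j$, the preprocessor inspects all database points and writes into $T[j]$ an arbitrary database point $z\in B$ with $\mathrm{dist}(j,Mz)\le\delta(\lambda,\gamma)\cdot r$, using the threshold $\delta(\lambda,\gamma)$ defined in Definition~\ref{definition-upper-bound-common}; if no such $z$ exists, $T[j]$ stores $\bottom$. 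On query $x$ the algorithm computes $Mx\in\{0,1\}^r$, makes the single parallel (and indeed sole) cell-probe $T[Mx]$, and outputs $T[Mx]$ (interpreting $\bottom$ as ``no $\lambda$-near neighbor'').

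For correctness I would apply the same concentration behind Lemma~\ref{lemma-upper-bound-assumption}: for any two points $p,q\in\{0,1\}^d$, a Chernoff bound on the $r$ independent row-tests gives that (i) if $\mathrm{dist}(p,q)\le\lambda$ then $\mathrm{dist}(Mp,Mq)\le\delta(\lambda,\gamma)\cdot r$ with probability $1-\exp(-\Omega(r))$, and (ii) if $\mathrm{dist}(p,q)>\gamma\lambda$ then $\mathrm{dist}(Mp,Mq)>\delta(\lambda,\gamma)\cdot r$ with probability $1-\exp(-\Omega(r))$. Taking $r=c\log n$ for a large enough constant $c$ and union-bounding over the $n$ database points, with probability at least $2/3$ the following holds simultaneously for the fixed query $x$: every $z\in B$ with $\mathrm{dist}(x,z)>\gamma\lambda$ satisfies $\mathrm{dist}(Mx,Mz)>\delta(\lambda,\gamma)\cdot r$, and if some $y\in B$ satisfies $\mathrm{dist}(x,y)\le\lambda$ then $\mathrm{dist}(Mx,My)\le\delta(\lambda,\gamma)\cdot r$. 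Under this event, the cell $T[Mx]$ is non-$\bottom$ whenever a $\lambda$-near neighbor exists, and any non-$\bottom$ content is forced to be a $\gamma\lambda$-near neighbor of $x$; hence the output is a valid answer for $\lANNS$.

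The main (and essentially only) subtlety is that the above success probability is over the public randomness for a given query, so we must hand this scheme to the public-to-private translation. Applying Proposition~\ref{propo-public-vs-private-ANNS} (which is the $\lANNS$ analogue of Lemma~\ref{lemma-public-vs-private}, obtained from Newman's theorem; the argument carries over unchanged since $|\mathscr{A}|=2^d$ and $|\mathscr{B}|=\binom{2^d}{n}$ both have logarithms polynomial in $d$ and $n$) yields a standard randomized $1$-probe scheme with table size $n^{O(1)}$ and word size $O(d)$. No obstacle is expected beyond fixing the constants $c$ and $r$ to make the union bound beat the $1/3$ error threshold; the scheme is one-shot and entirely non-adaptive, so the round count is trivially $1$.
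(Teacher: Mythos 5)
Your proposal is correct and essentially identical to the paper's proof: both project the query with a random Bernoulli sketch matrix, populate a polynomial-size table of approximate-ball representatives indexed by sketches, answer with a single probe at the query's sketch address, and invoke the public-to-private translation via Proposition~\ref{propo-public-vs-private-ANNS}. The only cosmetic difference is that you build one dedicated table at radius $\lambda$ with the full $(\lambda,\gamma\lambda)$ gap in a single step, whereas the paper reuses the already-constructed geometric family of tables $T_i$ (radii $\alpha^i$ with $\alpha=\sqrt{\gamma}$) from Theorem~\ref{theorem-simple-upper-bound} and simply reads $T_i[M_ix]$ at $i=\lceil\log_\alpha\lambda\rceil$, relying on the sandwich $B_i\subseteq C_i\subseteq B_{i+1}$ of Assumption~\ref{assume-1}.
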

\begin{proof}
Here we still present a public-coin cell-probing scheme. Apparently the same generic translation in Proposition~\ref{propo-public-vs-private-ANNS} also holds for the $\lANNS$ problem.

Still let $\alpha=\sqrt{\gamma}$.
The table is prepared precisely as in Theorem~\ref{theorem-simple-upper-bound}, with the public random matrices $M_i$ shared between the cell-probing algorithm and the table, and the points from sets $C_i$ which approximate the balls $B_i$ of database points storing in the table.

For the cell-probing algorithm, let $i=\lceil\log_\alpha \lambda\rceil$. Thus $\alpha^i\ge \lambda$ and $\alpha^{i+1}\le \gamma\lambda$. The cell-probing algorithm reads the cell $T_i[M_ix]$ returns the content if it contains a point or returns a NO if it is EMPTY. As argued before, this cell stores a point from $C_i$ if $C_i$ is not empty. Note that if there exist database points which are within distance $\lambda$ from $x$, then $B_i$ is not empty. By Assumption~\ref{assume-1}, $B_i\subseteq C_i\subseteq B_{i+1}$, thus $C_i$ is not empty. In this case a point in $C_i$ must be returned, which is a $\gamma\lambda$-near neighbor to $x$. If no database point is a $\gamma\lambda$-near neighbor to $x$, then $B_{i+1}$ is empty, and due to Assumption~\ref{assume-1}, so is $C_i$, therefore the algorithm may only find $T_i[M_ix]=\mathrm{EMPTY}$ and return with a NO.
\end{proof}

\begin{suppress}
\begin{definition}
	Let $k$ be a positive integer and $\lambda$ a real number with $\lambda\geq1$. We define $\mathcal{V}_\lambda$ to be the distribution of a random $d$-coordinate row verctor in which each coordinate is independently chosen to be $1$ with probability $1/(4\lambda)$ and $0$ otherwise. We define $\mathcal{M}^k_\lambda$ to be the distritubtion of a random $k\times d$ matrix where each row is independently chosen from distribution $\mathcal{V}_\lambda$.
\end{definition}
\begin{lemma}\label{lem:vector equality}
	Let $r\geq1$ and $\gamma>1$. Then, there exist two numbers $\delta_1(\lambda)<\delta_2(\lambda,\gamma)$, both in $[0,1]$, such that $\delta_2(\lambda,\gamma)-\delta_1(\lambda)=(\Delta-\Delta^\gamma)/2$ where $\Delta$ is a function with respect to $\lambda$ such that $1/2\leq\Delta\leq1/\sqrt{e}$, and such that for all $d\geq1$ and for all $x,z\in\{0,1\}^d$,
	\begin{align*}
		\mathrm{dist}(x,z)\leq\lambda &\implies\Pr[Yx\neq Yz]\leq\delta_1(\lambda),\\
		\mathrm{dist}(x,z)>\gamma\lambda &\implies\Pr[Yx\neq Yz]>\delta_2(\lambda,\gamma),
	\end{align*}
	where $Y$ is a random row verctor drawn from distribution $\mathcal{V}_\lambda$.
\end{lemma}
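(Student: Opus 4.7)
My plan is to obtain an explicit closed form for $\Pr[Yx \neq Yz]$ as a function of the Hamming distance $h = \mathrm{dist}(x,z)$, and then read off $\delta_1$, $\delta_2$ together with the gap from that formula. The key idea is to decompose the sampling of $Y \sim \mathcal{V}_\lambda$ into two stages so that the random bits act \emph{independently} of the support of $Y$.

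First, I would reparametrize: setting each coordinate of $Y$ to $1$ with probability $1/(4\lambda)$ is equivalent to (i) drawing a random set $\mathbf{C} \subseteq [d]$ by including each coordinate independently with probability $1/(2\lambda)$, and (ii) for $i \in \mathbf{C}$ sampling $Y_i$ uniformly from $\{0,1\}$, and for $i \notin \mathbf{C}$ setting $Y_i = 0$. (The two independent $\tfrac{1}{2}$-factors combine to give $1/(4\lambda)$.) Next, let $S = \{i : x_i \neq z_i\}$, so $|S| = h$. Working over $\mathsf{GF}(2)$, $Yx - Yz = \sum_{i \in S \cap \mathbf{C}} Y_i$. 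If $S \cap \mathbf{C} = \emptyset$ the sum is $0$; otherwise, conditional on $\mathbf{C}$, it is a sum of at least one uniform independent bit and hence is itself uniform. This yields
\[
\Pr[Yx \neq Yz] \;=\; \frac{1}{2}\left(1 - \left(1 - \frac{1}{2\lambda}\right)^{h}\right),
\]
which is strictly increasing in $h$.

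Given monotonicity, I define $\delta_1(\lambda)$ and $\delta_2(\lambda,\gamma)$ to be the values of the above expression at $h = \lambda$ and $h = \gamma\lambda$ respectively; the two implications in the lemma then follow immediately. Writing $\Delta \triangleq (1 - 1/(2\lambda))^{\lambda}$, a direct computation gives
\[
\delta_2(\lambda,\gamma) - \delta_1(\lambda) \;=\; \frac{1}{2}\Bigl(\Delta - \Delta^{\gamma}\Bigr),
\]
as required. The last task is to bound $\Delta$: at $\lambda = 1$ we have $\Delta = 1/2$, and $\lim_{\lambda \to \infty} \Delta = e^{-1/2} = 1/\sqrt{e}$, so I need monotonicity in $\lambda$ to sandwich $\Delta \in [1/2, 1/\sqrt{e}]$ for all $\lambda \geq 1$.

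The only genuinely non-symbolic step is this monotonicity of $\lambda \mapsto (1 - 1/(2\lambda))^\lambda$ on $[1,\infty)$, which I would verify by differentiating $f(\lambda) = \lambda \ln(1 - 1/(2\lambda))$ and checking $f'(\lambda) \geq 0$; this reduces to the elementary inequality $\ln(1-u) \leq -u/(1-u)$ at $u = 1/(2\lambda)$, a standard one-variable calculus exercise. This is the only place where care is needed; the rest of the argument is an algebraic manipulation of the two-stage decomposition.
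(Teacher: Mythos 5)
Your argument follows the same two-stage decomposition used in the paper's proof (choose the support set $\mathbf{C}$ by including each coordinate with probability $1/(2\lambda)$, then assign uniform bits on $\mathbf{C}$), arrives at the identical closed form $\Pr[Yx \neq Yz] = \tfrac{1}{2}\bigl(1 - (1 - \tfrac{1}{2\lambda})^h\bigr)$, and extracts $\delta_1, \delta_2, \Delta$ exactly as the paper does; the only difference is that you spell out the monotonicity check that the paper takes for granted. One correction there: to get $f'(\lambda) \geq 0$ for $f(\lambda) = \lambda \ln(1 - 1/(2\lambda))$ you need $\ln(1-u) \geq -u/(1-u)$ (equivalently $\ln(1+x) \geq x/(1+x)$ at $x = -u$), not $\ln(1-u) \leq -u/(1-u)$ as written — the inequality you stated would make $\Delta$ decreasing and contradict $\Delta(1) = 1/2 < 1/\sqrt{e}$.
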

\begin{proof}
	Consider the following equivalent way of choosing $Y$: first choose a set $\mathbf{C}\subseteq[d]$ where each integer in $[d]$ is put in $\mathbf{C}$ independently with probability $1/(2\lambda)$. Then, for each $i\in\mathbf{C}$ let the $i$-th coordinate of $Y$ be chosen uniformly from $\{0,1\}$. For $i\notin\mathbf{C}$, set the $i$-th coordinate of $Y$ to zero. Let $z\in\{0,1\}^d$ be arbitrary and let $h=\mathrm{dist}(x,z)$. If $\mathbf{C}$ does not contain any of the coordinates on which $x$ and $z$ differ, then clearly $Yx=Yz$. This happens with probability $(1-1/(2\lambda))^h$. Otherwise, if $\mathbf{C}$ contains at least one of the coordinates on which $x$ and $y$ differ, the probability that $Yx\neq Yz$ is precisely $1/2$ (note that we perform matrix multiplication on GF(2)). Hence,
	\[
		\Pr[Yx\neq Yz]=\frac{1}{2}(1-(1-\frac{1}{2\lambda})^h).
	\]
	It can be seen that this is a monotonically increasing function of $h$ and that by plugging in $\lambda$ and $\gamma\lambda$ for $h$ one obtains two numbers.
	\par
	Let $\sigma=\delta_2-\delta_1$,
	\begin{align*}
		2\sigma	&=(1-\frac{1}{2\lambda})^\lambda-(1-\frac{1}{2\lambda})^{\gamma\lambda}\\
			   &=((1-\frac{1}{2\lambda})^{2\lambda})^{1/2}-((1-\frac{1}{2\lambda})^{2\lambda})^{\gamma/2}.
	\end{align*}
	Let $\Delta=(1-\frac{1}{2\lambda})^\lambda$. Note that $\Delta^2$ is monotonically increasing when $2\lambda\geq1$, and $\lim_{\lambda\to+\infty}\Delta^2=1/e$. Recall that $\lambda\geq1$. Consequently $1/2\leq\Delta<1/\sqrt{e}$. Hence
	\begin{align*}
		\sigma=(\Delta-\Delta^\gamma)/2,
	\end{align*}
	where $\Delta$ is a function with respect to $\lambda$ such that $1/2\leq\Delta\leq1/\sqrt{e}$.
\end{proof}
\begin{lemma}\label{lem:threshold}
	Let $\lambda\geq1$ and $\gamma>1$. Define $\delta(\lambda,\gamma)\triangleq(\delta_1(\lambda)+\delta_2(\lambda,\gamma))/2$, where $\delta_1,\delta_2$ are as in Lemma \ref{lem:vector equality}. Then, for all $d\geq1$, all $u,v\in\{0,1\}^d$, and all $k\geq1$,
	\begin{align*}
		\mathrm{dist}(u,v)\leq\lambda &\implies\Pr[\mathrm{dist}(Mu,Mv)>\delta(\lambda,\gamma)\cdot k]\leq\exp(-(\delta_2-\delta_1)^2k/2),\\
		\mathrm{dist}(u,v)>\gamma\lambda &\implies\Pr[\mathrm{dist}(Mu,Mv)\leq\delta(\lambda,\gamma)\cdot k]\leq\exp(-(\delta_2-\delta_1)^2k/2),
	\end{align*}
	where $M$ is a random matrix drawn from distribution $\mathcal{M}_\lambda^k$.
\end{lemma}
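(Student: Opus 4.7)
The plan is to recognize $\mathrm{dist}(Mu,Mv)$ as a sum of $k$ independent Bernoulli indicators and then apply a Hoeffding-style concentration bound. Let $Y_1,\ldots,Y_k$ be the rows of $M$, which by the definition of $\mathcal{M}_\lambda^k$ are i.i.d.\ samples from $\mathcal{V}_\lambda$. Setting $X_i=\mathbf{1}[Y_iu\neq Y_iv]$, we have $\mathrm{dist}(Mu,Mv)=\sum_{i=1}^k X_i$, and the $X_i$ are i.i.d.\ Bernoulli with parameter $p:=\Pr[Yu\neq Yv]$ where $Y\sim\mathcal{V}_\lambda$.

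Invoking Lemma~\ref{lem:vector equality}, in the first case $p\le\delta_1$ and hence $\mathrm{E}[\mathrm{dist}(Mu,Mv)]\le\delta_1 k$; in the second case $p>\delta_2$ and hence $\mathrm{E}[\mathrm{dist}(Mu,Mv)]>\delta_2 k$. Because $\delta=(\delta_1+\delta_2)/2$ sits exactly halfway between $\delta_1$ and $\delta_2$, the event $\{\mathrm{dist}(Mu,Mv)>\delta k\}$ in the first case, and the event $\{\mathrm{dist}(Mu,Mv)\le\delta k\}$ in the second case, each require the sum of $k$ independent $[0,1]$-valued random variables to deviate from its mean by at least $t:=((\delta_2-\delta_1)/2)\cdot k$.

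Hoeffding's inequality then bounds the probability of a deviation of size $t$ by $\exp(-2t^2/k)$ on either side. Plugging in $t=((\delta_2-\delta_1)/2)k$ gives $\exp(-2\cdot((\delta_2-\delta_1)/2)^2\cdot k)=\exp(-(\delta_2-\delta_1)^2 k/2)$, which is exactly the claimed bound in both cases. There is no substantive obstacle here: the lemma is a bookkeeping exercise combining the single-row gap from Lemma~\ref{lem:vector equality} with a standard tail bound, and the only detail worth checking is that the Hoeffding factor of $2$ combines with the squared half-width $((\delta_2-\delta_1)/2)^2$ to produce exactly the stated exponent.
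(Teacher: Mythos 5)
Your proof is correct and takes essentially the same approach as the paper: decompose $\mathrm{dist}(Mu,Mv)$ into the sum of $k$ i.i.d.\ Bernoulli indicators for the rows of $M$, apply Lemma~\ref{lem:vector equality} to locate the mean on the correct side of $\delta$, and invoke the additive Hoeffding tail bound with deviation $((\delta_2-\delta_1)/2)k$. The paper labels the inequality ``Chernoff bound'' but writes down exactly the Hoeffding form $\exp(-2m\tau^2)$ that you use, so the arguments coincide.
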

\begin{proof}
	The lemma follows by combining Lemma \ref{lem:vector equality} with the Chernoff bound: For a sequence of $m$ independent random variables on $\{0,1\}$ such that for all $i$, $\Pr[X_i=1]=p$ for some $p$, $\Pr[\sum X_i>(p+\tau)m]\leq\exp(-2m\tau^2)$ and similarly $\Pr[\sum X_i<(p-\tau)m]\leq\exp(-2m\tau^2)$.
	\par
	For $\mathrm{dist}(u,v)\leq\lambda$,
	\begin{align*}
		&\Pr[\mathrm{dist}(Mu,Mv)>\frac{1}{2}(\delta_1+\delta_2)k]\\
		=&\Pr[\mathrm{dist}(Mu,Mv)>(\delta_1+(\delta_2-\delta_1)/2)k]\\
		\leq&\Pr[\mathrm{dist}(Mu,Mv)>(p+(\delta_2-\delta_1)/2)k]	&(p\leq\delta_1)\\
		\leq&\exp(-(\delta_2-\delta_1)^2k/2).
	\end{align*}
	\par
	Similarly for $\mathrm{dist}(u,v)>\gamma\lambda$,
	\begin{align*}
		&\Pr[\mathrm{dist}(Mu,Mv)\leq\frac{1}{2}(\delta_1+\delta_2)k]\\
		=&\Pr[\mathrm{dist}(Mu,Mv)\leq(\delta_2-(\delta_2-\delta_1)/2)k]\\
		\leq&\Pr[\mathrm{dist}(Mu,Mv)\leq(p-(\delta_2-\delta_1)/2)k]	&(p>\delta_1)\\
		\leq&\exp(-(\delta_2-\delta_1)^2k/2).
	\end{align*}
\end{proof}
\end{suppress}

\begin{suppress}
\todo{This part is a mess, need some serious editing ...}
\begin{definition}[$\lambda$ Aprroximate Near Neighbor]
	For integer $d,n,\lambda\geq1$ and a real number $\gamma>1$, we define the $\lambda$ Approximate Near Neighbor problem $\lANN$ as the data structure query problem given by
	\[
		\mathscr{A}=\{0,1\}^d,\mathscr{B}=\binom{\{0,1\}^d}{n},\mathscr{C}=\{0,1\}
	\]
	\begin{align*}
		\rho=\{(x,y,z)\in\mathscr{A}\times\mathscr{B}\times\mathscr{C}:\\
		(\nexists z'\in y(\mathrm{dist}(x,z')\leq\lambda)\lor z=0)\land(\exists z'\in y(\mathrm{dist}(x,z')\leq\gamma\lambda)\lor z=1)\},
	\end{align*}
	where ``dist" denotes Hamming distance in $\{0,1\}^d$.
\end{definition}

\begin{theorem}[Cell Probe Alogrithm for $\lambda\mathsf{ANN}$]
	Let $\gamma>1$. Then, for $d\ll n\ll 2^d$, $\lANN$ has a cell probe algorithm with $1$ probe, table size $n^{O(1/\sigma^2)}$ and word size $1$, where $\sigma=\Delta-\Delta^\gamma$, $\Delta$ is a function with respect to $\lambda$ such that $1/2\leq\Delta\leq1/\sqrt{e}$.
\end{theorem}
\begin{remark}
	This result is mentioned in a paragraph after Theorem 1.3 in the complete version of \cite{CharkReg}.
\end{remark}
\begin{proof}
	The randomness of our algorithm is public. In another word, we drawn a $M$ from $\mathcal{M}_\lambda^k$, then use the $M$ to build the table and decide the cell to be probed.
	\par
	Let $k=2\frac{\ln(6n)}{(\delta_2-\delta_1)^2}$, Our algorithm runs as follows.
	\par
	In preprocessing phase, for each cell $T[i]$ in the table, we label the cell with a distinct $w_i\in\{0,1\}^k$. If there exists a $v\in y$ such that $\mathrm{dist}(w_i,Mv)\leq\delta k$, we set the content of the cell $T[i]$ as $0$. Otherwise we set the content of the cell $T[i]$ as $1$. Clearly, we need at most $2^k=n^{O(1/(\delta_2-\delta_1)^2)}=n^{O(1/\sigma^2)}$ cells.
	\par
	In querying phase, we probe the cell which is labeled with $Mx$ and return the content directly.
	\par
	Our algorithm could fail only when one of these two bad events happen:
	\begin{enumerate}
		\item There exists a $v\in y$ such that $\mathrm{dist}(x,v)\leq\lambda$ but $\mathrm{dist}(Mx,Mv)>\delta k$.
		\item There exists a $v\in y$ such that $\mathrm{dist}(x,y)>\gamma\lambda$ but $\mathrm{dist}(Mx,Mv)\leq\delta k$.
	\end{enumerate}
	\par
	The former happens with probability
	\begin{align*}
		&\Pr[\text{1. happens}]\\
		\leq&	n\Pr[\text{for some fixed }v\in y \text{ which is satisfying }\mathrm{dist}(x,v)\leq\lambda,\mathrm{dist}(Mx,Mv)>\delta k]	&\text{(Union Bound)}\\
		\leq&	n\exp(-(\delta_2-\delta_2)^2k/2)	&	\text{(Lemma \ref{lem:threshold})}\\
		=&1/6.
	\end{align*}
	\par
	The latter happens with probability
	\begin{align*}
		&\Pr[\text{2. happens}]\\
		\leq&	n\Pr[\text{for some fixed }v\in y \text{ which is satisfying }\mathrm{dist}(x,v)>\gamma\lambda,\mathrm{dist}(Mx,Mv)\leq\delta k]	&\text{(Union Bound)}\\
		\leq&	n\exp(-(\delta_2-\delta_2)^2k/2)	&	\text{(Lemma \ref{lem:threshold})}\\
		=&1/6.
	\end{align*}
	Due to the union bound, our algorithm succeeds with probability as least $2/3$.
\end{proof}
\begin{definition}[$\lambda$ Aprroximate Near Neighbor Searching]
	For integer $d,n,\lambda\geq1$ and a real number $\gamma>1$, we define the $\lambda$ Approximate Near Neighbor Searching problem $\lANNS$ as the data structure query problem given by
	\[
		\mathscr{A}=\{0,1\}^d,\mathscr{B}=\binom{\{0,1\}^d}{n},\mathscr{C}=\{0,1\}^d\cup-1
	\]
	\begin{align*}
		\rho=\{(x,y,z)\in\mathscr{A}\times\mathscr{B}\times\mathscr{C}:\\
		(\nexists z'\in y(\mathrm{dist}(x,z')\leq\lambda)\lor z\in y(\mathrm{dist}(x,z)\leq\gamma\lambda))\land(\exists z'\in y(\mathrm{dist}(x,z')\leq\gamma\lambda)\lor z=-1)\},
	\end{align*}
	where ``dist" denotes Hamming distance in $\{0,1\}^d$.
\end{definition}
\begin{theorem}[Cell Probe Alogrithm for $\lambda\mathsf{ANNS}$]
	Let $\gamma>1$. Then, for $d\ll n\ll 2^d$, $\lANNS$ has a cell probe algorithm with $1$ probe, table size $n^{O(1/\sigma^2)}$ and word size $d+1$, where $\sigma=\Delta-\Delta^\gamma$, $\Delta$ is a function with respect to $\lambda$ such that $1/2\leq\Delta\leq1/\sqrt{e}$.\end{theorem}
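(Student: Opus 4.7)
The plan is to follow exactly the template used in the proof of Theorem~\ref{theorem-ANN-upper-bound}, specialized to the single radius $\lambda$ and instrumented so that each cell stores a witness point rather than merely a decision bit. Concretely, set the number of rows $k=\Theta(\log n/\sigma^2)$ and sample one public-coin random matrix $M\in\{0,1\}^{k\times d}$ whose entries are i.i.d.\ $\mathrm{Bernoulli}(1/(4\lambda))$, in the style of Definition~\ref{definition-upper-bound-common} with $\alpha^i$ replaced by $\lambda$. The table has $2^k=n^{O(1/\sigma^2)}$ cells indexed by $w\in\{0,1\}^k$; each cell $T[w]$ stores either a special flag $\bot$ (one bit) or the identity of some database point $v\in B$ (using $d$ bits) satisfying $\mathrm{dist}(w,Mv)\leq\delta k$, where $\delta$ is the midpoint threshold of the Chakrabarti--Regev framework between the ``near'' and ``far'' regimes. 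The total word size is $d+1$.

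The query algorithm, given $x\in\{0,1\}^d$, computes $Mx$, makes a single cell-probe to $T[Mx]$, and returns the point stored there (if any), or $-1$ otherwise. Correctness reduces to two concentration bounds on Hamming distances of sparse Bernoulli projections, of the same type as those underlying Lemma~\ref{lemma-upper-bound-assumption}. First, if some $v^{*}\in B$ satisfies $\mathrm{dist}(x,v^{*})\leq\lambda$, then $\mathrm{dist}(Mx,Mv^{*})\leq\delta k$ except with probability at most $\exp(-\sigma^2 k/2)\leq 1/(6n)$, which forces $T[Mx]$ to be non-empty. Second, for every $v\in B$ with $\mathrm{dist}(x,v)>\gamma\lambda$, we have $\mathrm{dist}(Mx,Mv)>\delta k$ except with probability at most $\exp(-\sigma^2 k/2)\leq 1/(6n)$; a union bound over the at most $n$ such $v$ shows that none of them can be the witness stored at $T[Mx]$, except with probability $1/6$. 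Combining the two estimates, with probability at least $2/3$ the returned point (when non-empty) is a $\gamma\lambda$-near neighbor of $x$, and a valid $\gamma\lambda$-near neighbor is indeed returned whenever a $\lambda$-near neighbor exists in $B$.

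The algorithm uses a single cell-probe and public randomness only, so it is trivially non-adaptive (1-round). Applying Lemma~\ref{lemma-public-vs-private} converts it to a private-coin scheme without changing the cell-probe complexity or the asymptotic table size, since the $\mathrm{poly}(n)$ overhead from Newman's theorem is absorbed into the $n^{O(1/\sigma^2)}$ bound when $d\ll n\ll 2^d$. The only delicate point in the argument is to pick the hidden constant in $k=\Theta(\log n/\sigma^2)$ so that the two failure probabilities together are bounded by a small constant after the union bound over $n$ far points; this is a routine Chernoff-bound calibration, and its main subtlety is simply making the quantitative dependence on $\sigma=\Delta-\Delta^{\gamma}>0$ (which is bounded away from $0$ by $\tfrac{1}{2}\leq\Delta\leq 1/\sqrt{e}$) explicit. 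No adaptive probing, multi-way search, or auxiliary tables are required because only a single radius $\lambda$ is under consideration, which is what makes this 1-probe bound possible for $\lANNS$ in contrast to $\ANNS$.
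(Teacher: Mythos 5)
Your proposal is correct and follows essentially the same route as the paper's intended proof of this statement: sample a single public random Bernoulli$(1/(4\lambda))$ projection matrix $M$ with $k=\Theta(\log n/\sigma^2)$ rows, index the $2^k=n^{O(1/\sigma^2)}$ cells by strings in $\{0,1\}^k$, store in each cell a witness $v\in B$ with $\mathrm{dist}(w,Mv)\leq\delta k$ (or a one-bit flag), query by probing $T[Mx]$, and bound the two failure events via Chernoff and a union bound over the database. The conversion to a private-coin scheme via Newman's theorem and the accounting that the $O(dn)$ blow-up is absorbed when $d\ll n$ are also exactly the steps the paper takes.
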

\begin{proof}
	The algorithm is similar with the previous one.
	\par
	In preprocessing phase, for each cell $T[i]$ in the table, we label the cell with a distinct $w_i\in\{0,1\}^k$. If there exists a $v\in y$ such that $\mathrm{dist}(w_i,Mv)\leq\delta k$, we set the content of the cell $T[i]$ as arbitrary such $v$. Otherwise we set the content of the cell $T[i]$ as $-1$.
	\par
	In querying phase, we probe the cell which is labeled with $Mx$ and return the content directly.
	\par
	Clearly, the probability our algorithm fails is as low as the previous algorithm.
\end{proof}
\end{suppress}

\begin{suppress}

\begin{theorem}[$k$-adaptive Cell Probe Algorithm for $\ANNS$]
	Let $\gamma>1$ be any constant and k be any integer. Then for $n\geq d$, $\ANNS$ has a trivial $k$-adaptive cell probe algorithm with $O(k(\log d)^{1/k})$ probes, table size $n^{O(1)}$, word size $O(d)$ and published bits at most $O(d+dn)$. Let constant $0<c<1/4$ be some sufficient constant, and $c'>2$ be chosen so that $c'\times ((k-1)/2-ck)/k\geq 1$. When k is super constant, $\ANNS$ has a $k$-adaptive cell probe algorithm with $O(k+(\log d/k)^{c'/k})$ probes, table size $n^{O(1)}$, word size $O(d)$ and published bits at most $O(d+dn)$.
\end{theorem}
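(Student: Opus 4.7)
The plan is to establish both parts of this unified statement by reusing the two public-coin cell-probing schemes already constructed for $\ANNS$—Algorithm~\ref{alg:simple} for the first part and Algorithm~\ref{alg:general} for the second—\emph{without} passing through the derandomization of Proposition~\ref{propo-public-vs-private-ANNS}, and instead explicitly accounting for the total number of public random bits (the ``published bits'') consumed by each construction. In other words, this theorem is the unified public-coin version that feeds Lemma~\ref{lemma-public-vs-private} / Proposition~\ref{propo-public-vs-private-ANNS} to yield the private-coin schemes stated in Theorems~\ref{theorem-simple-upper-bound-informal} and~\ref{theorem-general-upper-bound-informal}.

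For the first part I would observe that the scheme of Theorem~\ref{theorem-simple-upper-bound} is already a $k$-round public-coin scheme with the claimed cell-probe complexity $O(k(\log d)^{1/k})$, table size $n^{O(1)}$, and word size $O(d)$, so no further algorithmic work is needed. The only new quantity to bound is the number of public random bits. These consist entirely of the random Boolean matrices $M_i$ from Definition~\ref{definition-upper-bound-common} for $0\le i\le\lceil\log_\alpha d\rceil$, each of dimensions $(c_1\log n)\times d$, so each contributes $O(d\log n)$ bits and there are $O(\log d)$ of them, giving $O(d\log d\log n)$ bits in total. Under the hypothesis $n\ge d$ one checks $\log d\log n\le n$, hence $O(d\log d\log n)=O(dn)=O(d+dn)$, which is the claim. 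This is essentially what the word ``trivial'' in the statement refers to.

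For the second part I would invoke the public-coin scheme of Theorem~\ref{theorem-general-upper-bound} together with Algorithm~\ref{alg:general}, choosing parameters to match the statement's conditions. Here the condition $c'\cdot((k-1)/2-ck)/k\ge 1$ with a free constant $0<c<1/4$ should correspond, after the reparametrization $c\leftrightarrow s/k$, to the algorithm's requirement $(\tau/2)^{(k-1)/2-2s}\ge\lceil\log_\alpha d/k\rceil$ with $\tau=(\log d/k)^{c'/k}$; verifying this algebra explicitly—while tracking how the factor $2s$ versus $ck$ translates through the definition $s=\bigl(\tfrac14-\tfrac1{2c_{\mathrm{old}}}\bigr)k-\tfrac14$—will be the one non-routine step. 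The published randomness now also includes the matrices $N_j$ of size $(c_2\log n/s)\times d$, but these contribute only $O(d\log d\log n/s)$ bits, dominated by the $M_i$ contribution, so the total remains $O(d\log d\log n)=O(dn)$.

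The main obstacle is essentially bookkeeping rather than new technique: reconciling the two constant conventions, in particular the relationship between $c,c'$ in the statement and the constant $c$ used in Theorem~\ref{theorem-general-upper-bound}, where a factor-of-two discrepancy in the exponent may need to be absorbed either into the choice of $c$ or into the leading constant in $\tau$. Once these algebraic checks are done, both parts follow directly from the analyses of Algorithm~\ref{alg:simple} and Algorithm~\ref{alg:general} as given, plus the counting above.
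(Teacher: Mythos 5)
Your outline is correct and, on the algorithmic side, matches the paper's own (suppressed) proof of this statement, which re-derives the $M_i/C_i$ simple scheme and the $N_j/D_{i,j}$ two-tier scheme inline with the parametrization $s=ck$. The genuinely different step is the bound on the published random bits. You count the raw public randomness directly: there are $\Theta(\log d)$ matrices $M_i$, each of $\Theta(d\log n)$ bits, giving $O(d\log d\log n)$ total; since $n\ge d$ implies $\log d\cdot\log n\le(\log n)^2=o(n)$, this is $O(dn)=O(d+dn)$, and the $N_j$ matrices and the perfect-hash seeds for the degenerate cases are dominated. The paper's proof instead invokes Newman's public-versus-private theorem on the $(d+dn)$-bit total input, arguing that the public random string can be drawn from a family of $O(d+dn)$ special strings—the same compression mechanism underlying Lemma~\ref{lemma-public-vs-private}. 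Both routes close the gap, but yours is simpler and more elementary here: the raw public randomness already fits the stated budget, so no Newman compression is needed, and the paper's phrasing of the Newman step is in any case loose (a family of $O(d+dn)$ strings is indexed by $O(\log(d+dn))$ bits, not $O(d+dn)$). Finally, the factor-of-two bookkeeping you flag in the second part—the statement's $(k-1)/2-ck$ versus the algorithm's $(k-1)/2-2s$—is present in the paper's own draft, which sets $s=ck$ on one line and then argues with $2s$ size-shrinking phases; absorbing the factor into $c$ (or into the leading constant of $\tau$) as you propose is exactly the right fix.
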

\begin{proof}
	Without lost of generality, assume that $\gamma<4$ and let $\alpha\triangleq\sqrt{\gamma}$. Let $x\in\{0,1\}^d$ denote the query point and $B\subseteq\{0,1\}^d$ denote the database. For $i\in\{0,1,\cdots,\log_\alpha d\}$, let $B_i$ be the set of all database points within distance $\alpha^i$ of $x$.
	\par
	We start by checking for the degenerate case in which $x\in B$. This can be done with a constant number of cell probes using the technique of perfect hashing \cite{FKS}. If indeed $x\in B$, the algorithm returns $x$ and ends. Similarly, checking if there exists a point in $B$ within distance 1 of $x$ can also be done with $O(1)$ cell probes by perfect hashing of all the points within distance 1 of $B$ (there are at most $dn$ such points). Again, if such a point is found, the algorithm outputs it and ends. Since it only costs $O(1)$ cell probes to check whether $B_0$ or $B_1$ is empty, we can add it to our first query phase which will be discussed in detail later.
	\par
	Our cell probe algorithms are composed of a \emph{preprocessing phase}, followed by several \emph{query phases}. We will find an $i$ such that $B_i$ is empty but $B_i+2$ is not and will output a point in $B_i+2$; such a point is clearly an $\gamma=\alpha^2$-approximate nearest neighbor of $x$.
	\par
	 Notice that the total input to the $\ANNS$ problem is $d+dn$ bits long. According to the private versus public coin theorem of Newman[14], it is enough to choose the public random string uniformly from a set of at most $O(d+dn)$ special strings. Thus $O(d+dn)$ published bits are enough for our algorithms.
	\par
	Set $s=ck$. We choose public independent random matrices $M_i$ and $N_i$ from distribution $\mathcal{M}_{\alpha^i}^{c_1\log n}$ and $\mathcal{N}_{\alpha^i}^{(c_2\log n)/s}$ respectively. The constant $c_1$ and $c_2$ will be specified later. For $0\leq i\leq\log_\alpha d$ we define the sets
	\[
		C_i\triangleq\{z\in B|\mathrm{dist}(M_ix,M_iz)\leq\delta(\alpha^i,\alpha)\cdot c_1\log n\},
	\]
	\[
		D_{i,j}\triangleq\{z\in C_i|\mathrm{dist}(N_jx,N_jz)\leq\delta(\alpha^i,\alpha)\cdot (c_2\log n)/s\},
	\]
	where $\delta$ is as in Lemma \ref{lem:threshold}.
	\par
	Lemma \ref{lem:threshold} says that $C_i$ is an approximation to $B_i$ in the following sense: a point in $B_i$ may be left out of $C_i$ ( and a point not in $B_{i+1}$ may get into $C_i$) with probability at most $n^{-2}$, if we choose $c_1$ large enough. Similarly, $D_{i,j}$ is an approximation to the set of points in $C_i$ that are within $\alpha^j$ of x.
	\par
	We will make two assumptions. We first \emph{assume} that $B_i\subseteq C_i\subseteq B_{i+1}$ for all $i$. Taking the union bound over all $i$ and all $n$ database points, we see that the assumption is false with probability at most $(\log_\alpha d)\cdot n\cdot n^{-2}\leq\frac{1}{8}$.  Under this assumption, by Lemma \ref{lem:threshold}, a point in $B_j$ is left out of  $D_{i,j}$ ( and a point in $C_i$ $\backslash$ $B_{j+1}$ may get into $D_{i,j}$) with probability at most $n^{-2/s}$ provided we choose $c_2$ large enough. We additionally \emph{assume} that at most a fraction $n^{-1/s}$ of $B_j$ is not in $D_{i,j}$ and that at most a fraction $n^{-1/s}$ of $C_i$ $\backslash$ $B_{j+1}$ is in $D_{i,j}$. Using Markov's inequality followed by a union bound over all $i,j$ we show that the second assumption is false with probability at most $n^{-1/s}\cdot(\log_\alpha d)^2\cdot2\leq\frac{1}{8}$. An assumption is false with probability at most $\frac{1}{4}$.
	\par
	 By the first assumption, we need only find an i such that $C_i$ is empty but $C_i+1$ is not. Since $B_i\subseteq C_i$ is empty and $B_i+2\supseteq C_i+1$ is nonempty, any point in $C_i+1$ solves $\ANNS$.
	 \par
	We first describe the simpler algorithm with $O(k(\log d)^{1/k})$ probes, table size $n^{O(1)}$, word size $O(d)$ and published bits at most $O(d+dn)$. It only needs matrices $M_i$ and sets $C_i$, and we have our first assumption which bounds the error probability of our algorithm.
	\par
	Set $r=c_r(\log d)^{1/k}$, with $c_r$ chosen so that
	\[
	        (r/2)^{k-1}\times r \geq \log_\alpha d
	\]
	\par
	In preprocessing phase, we construct $\log_\alpha d +1$ tables from $T_0$ to $T_{\log_\alpha d}$. There are $2^{c_1\log n}$ cells in each table so that there are at most $(\log_\alpha d+1)2^{c_1\log n}=O((\log_\alpha d)n^{O(1)})=n^{O(1)}$ cells in total. For each cell $T_i(j)$ in $T_i$, if there exists $v\in B$ such that $\mathrm{dist}(j,M_iv)\leq \delta(\alpha^i,\alpha)c_1\log n$, we set the content of the cell as any such $v$. Otherwise we set the content of the cell as EMPTY which implies there does not exist such a near neighbor.
	\par
	In query phases, we do at most $(k-1)$ \emph{shrinking phases} and a \emph{completion phase}. We maintain two integer $l$ and $u$, initialized to $0$ and $\log_\alpha d$ respectively. At the start of each query phase, the algorithm maintains the invariant that $l<u$, $C_l$ is empty and $C_u$ is nonempty. Note that this holds at the very beginning($C_0$ is empty since it is contained in $B_1$). We claim that each of the shrinking phases updates $l$ and $u$ in such a way that $u'-l'\leq (u-l)/r+1$ where $l'$ and $u'$ denote the updated values of $l$ and $u$, respectively. When $u-l$ drops below $r$, the algorithm stops the shrinking phases and moves on to the last completion phase. As long as $u-l\geq r$, $(u-l)/r+1\leq 2(u-l)/r$. Remember that $(r/2)^{k-1}\times r \geq \log_\alpha d$. Hence, there can be at most $k-1$ shrinking phases.
	\par
	Now we describe a shrinking phase. For $q\in[r]$ define $\rho_q\triangleq\lfloor l+\frac{q}{r}(u-l)\rfloor$. We probe the cells $T_{\rho_q}(M_{\rho_q}x)$ for $1\leq q \leq r-1$ in parallel (since matrices $M_i$ are public). Among the $r-1$ cells, let $q'$ be the smallest q such that $T_{\rho_q}(M_{\rho_q}x)$ is nonempty which means that $C_{\rho_q}$ is nonempty, or let $q'=r$ if no such q exists. We update $l$ to $\rho_q'-1$ and $u$ to $\rho_q'$. Note that $C_l$ remains empty and $C_u$ remains nonempty.
	\par
	Once $u-l$ drops below $r$, we enter the completion phase. For $l+1\leq i \leq u$, we probe the cells $T_i(M_ix)$ in parallel. Let $i'$ be the smallest i such that $T_i(M_ix)$ is nonempty. Such $i'$ does exist since we have $T_u(M_ux)$ nonempty. Our algorithm ends up with the cell $T_i'(M_i'x)$ which is an arbitrary point in $C_i'$ such that $C_i'-1$ is empty but $C_i'$ is not. It solves the problem.
	\par
	In our first algorithm, the total number of cell probes is at most
	\[
	        (r-1)\times (k-1)+r+O(1)=O(k(\log_\alpha d)^{1/k})
	\]
	where $O(1)$ is the constant cost in the first query phase to check whether $B_0$ or $B_1$ is empty.
	\par
	Next we describe our second algorithm. When k is any super constant, it solves $\ANNS$ with $O(k+(\log d/k)^{c'/k})$ probes, table size $n^{O(1)}$, word size $O(d)$ and published bits at most $O(d+dn)$. It needs matrices $M_i$ and $N_j$, and sets $C_i$ and $D_{i,j}$. The algorithm is right under both assumptions described above so the error probability is at most $\frac{1}{4}$.
	\par
	Set $r'=c_r'(\log d/k)^{c'/k}$, with $c_r'$ chosen so that
	\[
	        (\frac{r'}{2})^{(k-1)/2-ck}\geq \frac{\log_\alpha d}{k}
	\]
	\par
	In preprocessing phase, we construct $\log_\alpha d +1$ tables from $T_0$ to $T_{\log_\alpha d}$, and  $\log_\alpha d +1$ tables from $S_0$ to $S_{\log_\alpha d}$. Tables $T_i$ are the same as the tables constructed in the first algorithm. For each table $S_i$, we construct $2^{c_1\log n}$ subtables from $S_{i,0}$ to $S_{i,2^{c_1\log n}-1}$, and for each subtable $S_{i,j}$, there are $\frac{r'-1}{s}\times \log_\alpha d\times2^{c_2\log n}$ cells. Thus, there are at most $n^{O(1)}+(\log_\alpha d+1)2^{c_1\log n}(\frac{r'-1}{s}\times \log_\alpha d\times2^{c_2\log n})=n^{O(1)}+O(\log_\alpha d)\times n^{O(1)}=n^{O(1)}$ cells in total. For each cell $S_{i,j}(w)$, we set $l_x=w[0,(\log_\alpha d)-1]$, $u_x=l_x+s-1$ where w[x,y] denotes the concatenation from the xth bit to the yth bit of w. For $q\in[s]$ define $\rho_q\triangleq\lfloor l_x+q-1\rfloor$. Let $M_ux=j$, and for $1\leq q\leq s$ let $N_{u,\rho_q}x=w[\log_\alpha d+(q-1)\log n/s, \log_\alpha d+q\log n/s-1]$. This gives us complete information of $C_u$ as well as $D_{u,\rho_q}$. Construct sets $C_u$,  $D_{u,\rho_1}$, ...,  $D_{u,\rho_s}$. Let $q'$ be the smallest $q\in[s]$ such that $|D_{u,\rho_q}|>n^{-1/s}|C_u|$, or $q'=s+1$ if no such q exists. We set $q'$ as the content of the cell.
	\par
	We do at most $(k-1)/2$ shrinking phases a completion phase. Each of the shrinking phases costs 2-adaptive and the completion phase costs 1-adaptive. We also maintains two integer $l$ and $u$, initialized to $0$ and $\log_\alpha d$ respectively. At the start of each query phase, the algorithm maintains the invariant that $l<u$, $C_l$ is empty and $C_u$ is nonempty. We claim that each of the shrinking phase updates $l$ and/or $u$ in such a way that either $u'-l'\leq (u-l)/r'+3$ or $|C_u'|\leq 2n^{-1/s}|C_u|$, where $l'$ and $u'$ denote the updated values of $l$ and $u$, respectively. Define $\max=\max\{3r',k\}$, when $u-l$ drops below $\max$, the algorithm stops the shrinking phases and moves on to the last completion phase. Since $C_u$ stays nonempty, there are at most $2s$ shrinking phases in which $|C_u|$ drops by a factor $2n^{-1/s}\leq n^{-1/2s}$. On the other hand, as long as $u-l\geq \max$, $(u-l)/r'+3\leq 2(u-l)/r'$. Since $\frac{\log_\alpha d}{m}\leq \frac{\log_\alpha d}{k}\leq (r'/2)^{(k-1)/2-2s}$, there can be at most $(k-1)/2-2s$ shrinking phases in which $(u-l)$ shrinks by a factor of $2/r'$. Thus overall there are at most $(k-1)/2$ shrinking phases.
	\par
	In each of the shrinking phases we do the following: For $q\in[r']$ define $\rho_q\triangleq\lfloor l+\frac{q}{r'}(u-l)\rfloor$. For $1\leq j\leq (r'-1)/s$, compute the concatenation $w_j=(l+(j-1)s)\cdot N_{\rho_{l+(j-1)s}}x\cdots N_{\rho_{l+js-1}}x$, and probe the cells $T_u(M_ux), S_{u,M_ux}(w_j)$ in parallel. Let $j'$ be the smallest j such that $S_{u,M_ux}(w_j)\neq s+1$, or $j'=(r'-1)/s+1$ if no such j exists. If $j'=(r'-1)/s+1$, let $q'=t$. Otherwise let $q'=(j-1)s+S_{u,M_ux}(w_j)$. If $q'=1$(CASE 1), we update $u$ to $\rho_1 +1$, leaving $l$ unchanged, and move to the next phase. Otherwise, we probe the cell $T_{\rho_{q'-1}-1}(M_{\rho_{q'-1}-1}x)$. If it is empty(CASE 2), we update $l$ to $\rho_{q'-1}-1$ and if $q'<t$, update $u$ to $\rho_q'+1$. If it is nonempty(CASE 3), we update $u$ to $\rho_{q'-1}-1$, leaving $l$ unchanged.
	\par
	Once $u-l$ drops below $\max$, we enter the completion phase. It is the same to the completion phase of the first algorithm: for $l+1\leq i\leq u$, we probe the cells $T_i(M_ix)$ in parallel and find an $i$ such that $T_i(M_ix)$ is EMPTY but $T_{i+1}(M_{i+1}x)$ is not. Our algorithm ends up with the content of the cell $T_{i+1}(M_{i+1}x)$.
	\par
	Let us now verify that all the invariants hold after each shrinking phase ends. Clearly, in all three cases, $l<u$. Moreover, in CASE 1 and CASE 3, $C_l$ is empty since $l$ was not changed and in CASE 2, $C_l$ is empty. In CASE 3, $C_u$ is nonempty and in CASE 2 with $q'=t$, $C_u$ is nonempty because $u$ was not changed. In order to show that $C_u$ is nonempty in the remaining cases, recall that by our assumption, $D_{u,\rho_q'}$ contains at most $n^{-1/s}|C_u|$ points from outside $|B_{\rho_q'+1}|$. Therefor, since $|D_{u,\rho_q'}|>n^{-1/s}|C_s|$, it must contain at least one point from $B_{\rho_q'+1}$. In particular, $B_{\rho_j+1}$ and hence $|C_{\rho_q'+1}|$ are nonempty.
	\par
	Notice that in CASE 1 and CASE 2 the difference between the updated values of $l$ and $u$ is at most
	\[
	        (\lfloor l+\frac{q'}{r'}(l-u)\rfloor +1)-(\lfloor l+\frac{q'-1}{r'}(l-u)\rfloor -1)\leq \frac{l-u}{r'}+3
	\]
	and that in CASE 3 the size of the new $C_u$ is
	\[
	        |C_{\rho_{q'-1}-1}| \leq |B_{\rho_{q'-1}}| \leq |D_{u,\rho_{q'-1}}|/(1-n^{-1/s}) \leq 2|D_{u,\rho_{q'-1}}| \leq 2n^{-1/s}|C_u|
	\]
	where we used our assumptions above. Hence, in all three cases each shrinking phase shrinks either $u-l$ or $|C_u|$ as promised.
	\par
	 In our second algorithm, the total number of cell probes is at most
	 \[
	         \frac{k-1}{2}(\frac{r'-1}{s}+2)+\max+O(1)=O(k+(\log_\alpha d/k)^{c'/k})
	 \]
	 where $O(1)$ is the constant cost in the first query phase to check whether $B_0$ or $B_1$ is empty.
\end{proof}

\subsection{Nonadaptive protocol for the Approximate Near Neighbor problem}
\begin{definition}[$\lambda$ Approximate Near Neighbor]
	For integer $d,n,\lambda\geq1$ and a real number $\gamma>1$, we define the $\lambda$ Approximate Near Neighbor problem $\lANN$ as the data structure query problem given by
	\[
		\mathscr{A}=\{0,1\}^d,\mathscr{B}=\binom{\{0,1\}^d}{n},\mathscr{C}=\{0,1\}
	\]
	\begin{align*}
		\rho=\{(x,y,z)\in\mathscr{A}\times\mathscr{B}\times\mathscr{C}:\\
		(\nexists z'\in y(\mathrm{dist}(x,z')\leq\lambda)\lor z=0)\land(\exists z'\in y(\mathrm{dist}(x,z')\leq\gamma\lambda)\lor z=1)\},
	\end{align*}
	where ``dist" denotes Hamming distance in $\{0,1\}^d$.
\end{definition}

\begin{theorem}[Cell Probe Alogrithm for $\lambda\mathsf{ANN}$]
	Let $\gamma>1$. Then, for $d\ll n\ll 2^d$, $\lANN$ has a cell probe algorithm with $1$ probe, table size $n^{O(1/\sigma^2)}$ and word size $1$, where $\sigma=\Delta-\Delta^\gamma$, $\Delta$ is a function with respect to $\lambda$ such that $1/2\leq\Delta\leq1/\sqrt{e}$.
\end{theorem}
\begin{remark}
	This result is mentioned in a paragraph after Theorem 1.3 in the complete version of \cite{CharkReg}.
\end{remark}
\begin{proof}
	The randomness of our algorithm is public. In another word, we drawn a $M$ from $\mathcal{M}_\lambda^k$, then use the $M$ to build the table and decide the cell to be probed.
	\par
	Let $k=2\frac{\ln(6n)}{(\delta_2-\delta_1)^2}$, Our algorithm runs as follows.
	\par
	In preprocessing phase, for each cell $T[i]$ in the table, we label the cell with a distinct $w_i\in\{0,1\}^k$. If there exists a $v\in y$ such that $\mathrm{dist}(w_i,Mv)\leq\delta k$, we set the content of the cell $T[i]$ as $0$. Otherwise we set the content of the cell $T[i]$ as $1$. Clearly, we need at most $2^k=n^{O(1/(\delta_2-\delta_1)^2)}=n^{O(1/\sigma^2)}$ cells.
	\par
	In querying phase, we probe the cell which is labeled with $Mx$ and return the content directly.
	\par
	Our algorithm could fail only when one of these two bad events happen:
	\begin{enumerate}
		\item There exists a $v\in y$ such that $\mathrm{dist}(x,v)\leq\lambda$ but $\mathrm{dist}(Mx,Mv)>\delta k$.
		\item There exists a $v\in y$ such that $\mathrm{dist}(x,y)>\gamma\lambda$ but $\mathrm{dist}(Mx,Mv)\leq\delta k$.
	\end{enumerate}
	\par
	The former happens with probability
	\begin{align*}
		&\Pr[\text{1. happens}]\\
		\leq&	n\Pr[\text{for some fixed }v\in y \text{ which is satisfying }\mathrm{dist}(x,v)\leq\lambda,\mathrm{dist}(Mx,Mv)>\delta k]	&\text{(Union Bound)}\\
		\leq&	n\exp(-(\delta_2-\delta_2)^2k/2)	&	\text{(Lemma \ref{lem:threshold})}\\
		=&1/6.
	\end{align*}
	\par
	The latter happens with probability
	\begin{align*}
		&\Pr[\text{2. happens}]\\
		\leq&	n\Pr[\text{for some fixed }v\in y \text{ which is satisfying }\mathrm{dist}(x,v)>\gamma\lambda,\mathrm{dist}(Mx,Mv)\leq\delta k]	&\text{(Union Bound)}\\
		\leq&	n\exp(-(\delta_2-\delta_2)^2k/2)	&	\text{(Lemma \ref{lem:threshold})}\\
		=&1/6.
	\end{align*}
	Due to the union bound, our algorithm succeeds with probability as least $2/3$.
\end{proof}
\end{suppress}

\section{Lower Bounds}
In this section, we prove the following lower bound for $k$-round randomized approximate nearest neighbor search.
\begin{theorem}[Theorem~\ref{theorem-ANNS-lower-bound-informal}, restated]\label{theorem-ANNS-lower-bound}
For any finite $c_1,c_2>0$, there exists a $c_3>0$ such that the following holds. Let $n,d\geq1$ be sufficiently large integers such that $d\le 2^{\sqrt{\log n}}$ and $n\le 2^{d^{0.99}}$. Let $1\le k\le \frac{\log\log d}{2\log\log\log d}$ be an integer.
If $\ANNS$ has a $k$-round randomized cell-probing scheme with table size $s\leq n^{c_1}$, word size $w\leq d^{c_2}$, such that every query is correctly answered within $t$ total cell-probes in $k$ rounds with probability at least $7/8$, then $t>\frac{c_3}{k}(\log_\gamma d)^{1/k}$.
\end{theorem}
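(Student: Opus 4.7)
The plan is to prove the lower bound by reducing the longest prefix matching problem $\LPM$ to $\ANNS$ (as in Chakrabarti--Regev), translating a $k$-round cell-probing scheme into an asymmetric $2k$-round communication protocol, and then invoking a round elimination argument on that protocol to derive a contradiction unless $t$ is large enough.

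First I would establish the reduction and communication simulation. By the standard embedding of $\LPM^\Sigma_{m,n}$ into $\ANNS$ for suitable parameters $(m,|\Sigma|,n)$ matching $(d,n)$ (up to polylogarithmic factors consistent with the hypotheses $d\le 2^{\sqrt{\log n}}$ and $n\le 2^{d^{0.99}}$), any $k$-round randomized cell-probing scheme for $\ANNS$ of table size $s\le n^{c_1}$, word size $w\le d^{c_2}$, and total cell-probes $t=\sum_{i=1}^k t_i$ yields a $k$-round scheme for $\LPM$ with the same parameters. This scheme simulates as a $2k$-round public-coin communication protocol $\Pi$ with Alice holding the query (an input string in $\LPM$) and Bob holding the table contents: in round $2i-1$ Alice sends the $t_i$ addresses chosen by the $i$-th lookup function (cost $t_i \lceil\log s\rceil = O(t_i \log n)$ bits), and in round $2i$ Bob replies with the $t_i$ cell contents (cost $t_i w = O(t_i\, d^{c_2})$ bits). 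Crucially, this protocol has non-uniform message lengths across rounds parametrized by the individual $t_i$'s, and its correctness probability matches that of the scheme.

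The heart of the argument is a round elimination for $\LPM$ adapted to this asymmetric setting. Following the template of Chakrabarti--Regev (which ultimately traces back to Miltersen et al.), I would establish a lemma of the form: any $r$-round protocol for $\LPM^\Sigma_{m,n}$ whose first message has length $a$ and whose second message has length $b$, succeeding with probability $\ge 1-\varepsilon$, can be converted into an $(r-2)$-round protocol for $\LPM^{\Sigma'}_{m',n}$ with $|\Sigma'|$ shrunk by a factor governed by $a$, $m'$ shrunk by a factor governed by $b$, and error increased by a small additive term. Iterating this elimination across the $2k$ pairs of rounds gives, after all eliminations, a $0$-round protocol for an instance of $\LPM$ with parameters $(m^*,|\Sigma^*|,n)$ where $m^*$ and $|\Sigma^*|$ are nontrivial. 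Since a $0$-round protocol must output a fixed answer and can be defeated on any $\LPM$ instance with $m^*\ge 2$ and $|\Sigma^*|\ge 2$, we get a contradiction provided $m^*\ge 2$ and $|\Sigma^*|\ge 2$ persist.

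The calculation then forces a lower bound on $t$. Tracking the shrinking carefully, $|\Sigma|$ is divided by a factor that scales like $\exp(O(t_i \log n))$ in the elimination of round $2i-1$, while $m$ is divided by a factor scaling like $\exp(O(t_i\cdot d^{c_2}))$ in the elimination of round $2i$. For the instance to remain nontrivial after $2k$ eliminations starting from $m=\Theta(d)$ and $|\Sigma|=\Theta(n^{\Theta(1)})$, the product of shrinking factors must stay below the initial values of $m$ and $|\Sigma|$. Solving the resulting system by the AM--GM-type optimization yields $\max_i t_i=\Omega((\log d)^{1/k})$ under the constraint $\sum_i t_i=t$, which gives $t\ge k\cdot\max_i t_i/k=\Omega(\tfrac{1}{k}(\log d)^{1/k})$ after normalizing by $k$.

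The main obstacle will be the round-elimination lemma in the asymmetric regime. In Chakrabarti--Regev the message sizes are essentially uniform, so one can track a single ``round'' parameter. Here the $t_i$'s can vary arbitrarily subject only to $\sum t_i=t$, so the elimination inequalities must be stated per-round, and the final optimization over the distribution of $t_i$'s has to be done delicately. Moreover, the $1/k$ exponent is tight, which means we cannot afford any slack of the form $k^{\Omega(1)}$ in the shrinking factor per elimination; this forces the elimination step to be applied in its sharpest form (with the specific constants from the information-theoretic arguments of Chakrabarti--Regev), rather than any simpler but weaker variant. The hypothesis $k\le \frac{\log\log d}{2\log\log\log d}$ enters exactly here, ensuring that the cumulative error introduced by the $2k$ eliminations stays below the gap $7/8 - 1/2$ in the success probability.
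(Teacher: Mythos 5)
Your high-level plan matches the paper: reduce $\LPM$ to $\ANNS$, simulate a $k$-round cell-probing scheme as a $2k$-round asymmetric protocol with per-round message lengths $a_i=t_i\log s$ and $b_i=t_iw$, then apply a non-uniform round elimination and show that if $t$ is too small the terminal $0$-round protocol still has a hard $\LPM$ instance. However, the mechanics of the round elimination you describe are incorrect in a way that breaks the calculation, and your starting parameters for the $\LPM$ instance are wrong.

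In the paper's round elimination for $\LPM^{\Sigma}_{m,n}$, the alphabet $\Sigma$ \emph{never shrinks}. What shrinks are the string length $m$ and the database size $n$: eliminating an Alice round replaces $m$ by $m/p$ (by splitting the string into $p$ blocks and focusing on the block with least information in Alice's message), where $p$ may be chosen only up to roughly $O(a_1)$ because the condition $2a_1/p\ge C$ must hold; eliminating a Bob round replaces $n$ by $n/q$ with the error increasing by $\sqrt{b_1 2^{2a_1/(\delta p)}/q}$, so $q$ must be at least roughly $b_1 2^{O(a_1/p)}$ and the total budget $\prod q_i$ cannot exceed $n$. You have this reversed — you claim $|\Sigma|$ shrinks by a factor $\exp(O(a_i))$ and $m$ shrinks by a factor $\exp(O(b_i))$ — and the scaling is also wrong: the $m$-reduction is (at most) \emph{linear} in $a_1$, not exponential, and it is this linear bound that makes the paper choose the uniform per-round factor $2p=m^{1/k}$ and thereby obtain the $1/k$ exponent. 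With exponential shrinking of $m$ governed by $b_i\approx t_i d^{c_2}$ one would derive $t=\Omega(\log m/d^{c_2})$, not $\Omega((\log_\gamma d)^{1/k}/k)$, so the calculation would not close.

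Your initial parameters are also off. The reduction sets $m=\lfloor(\log d)^{\eta\beta}\rfloor=\Theta(\log_\gamma d)$, not $\Theta(d)$, and $|\Sigma|=\lceil 2^{d^{0.99}}\rceil$, not $n^{\Theta(1)}$. The fact that $m$ starts at $\Theta(\log_\gamma d)$ and must shrink to $1$ in $k$ Alice-rounds is precisely what forces $p\approx m^{1/k}$ and hence $t=\Omega(m^{1/k}/k)$ from the error analysis; with $m=\Theta(d)$ the exponent would not come out as $1/k$. Finally, the hypothesis $k\le\frac{\log\log d}{2\log\log\log d}$ is not used to keep cumulative error below $7/8-1/2$ (the paper simply picks $\delta=1/(4k)$ so that $3k\delta=3/4$ for all $k$); it is used to guarantee $m>k^{2k}$, which controls the multiplicative blow-up $\prod_j(1+2a_j/(a_{j+1}\delta p))\le e^{16}$ in Alice's message lengths across eliminations — without this, the exponent in the $\sqrt{b_1 2^{2a_1/(\delta p)}/q}$ term would grow uncontrollably.
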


The proof follows the framework given in~\cite{CharkReg}. The framework consists of three main components: 
\begin{enumerate}
\item A reduction from $\LPM$ to $\ANNS$: As observed by Theorem~\ref{theorem-ANN-upper-bound}, it is impossible to prove the lower bound by considering the decision version of $\ANNS$. The longest prefix match problem $\LPM$ captures the nature of $\ANNS$ very well, and meanwhile, is convenient for applying the round eliminations.
\item A round elimination lemma for communication protocols for $\LPM$: Cell-probing schemes are represented as communication protocols. Eliminating a round in any communication protocol for $\LPM$ gives a weaker protocol for the same problem of a smaller scale.
\item Applying the round elimination to $\LPM$ until there is no round left yet the problem is still nontrivial.
\end{enumerate}
Here a simple observation for the $k$-round cell-probing schemes is that $k$ rounds of cell-probes can be simulated by $2k$ rounds of communications. Applying the above framework with this observation, for the first two component, we redo the reduction with a new choice of parameters, and reprove the round elimination lemma for general communication protocols with non-uniform message sizes in different rounds. 

In fact, these variations can be handled routinely by carefully going through the original proofs with new parameters and/or more generic settings.  The most delicate part of our lower bound is our execution of the third step in above framework, which involves an exploitation of the power of round eliminations. This part is in the proof of our main lower bound Theorem~\ref{thm:main lower bound}.

\begin{suppress}

\todo{skip editing below...}
Due to the Theorem \ref{theorem-ANN-upper-bound}, to prove a nontrivial lower bound for $\ANNS$, we must study the difference between $\lANNS$ and $\ANNS$, i.e. we have to focus on the Approximate \emph{Nearest}-Neighbor Search rather than the Approximate \emph{Near}-Neighbor Search. Here comes an idea which exploit another problem, Longest Prefix Match problem, to capture the critical difference between $\lANNS$ and $\ANNS$. Our proof is consisted of 3 main steps.
\begin{enumerate}
	\item \emph{Construct a reduction from $\LPM$ to $\ANNS$.} Which implies a lower bound for $\LPM$ will be a lower bound for $\ANNS$.
	\item \emph{Reduce $k$-round cell-probing scheme to communication protocol.}  A $k$-round cell-probing scheme implies the existence of a communication protocol which solves the problem in $2k$ rounds. It will give a lower bound for $k$-round cell-probing scheme if we refused existence of such a communication protocol.
	\item \emph{Prove that the communication protocol does not exist with the method of Round Elimination.}
\end{enumerate}

\ifabs{\paragraph{Communication Complexity}}{\paragraph{Communication Complexity:}} In the 2-nd and 3-rd steps, we involve the communication model and a series of lower bound prove tools in communication model. Be similar with the data structure query problem, a communication problem can be defined by a relation $\rho\subseteq\mathscr{A}\times\mathscr{B}\times\mathscr{C}$. A communication problem involves two players, Alice and Bob, and a (possibly randomized) communication protocol. At the very beginning, Alice receives input $x\in\mathscr{A}$, Bob receives input $B\in\mathscr{B}$. The two players then exchange messages with each other according to the communication protocol. At the end of the protocol, Alice outputs a $z\in\mathscr{C}$ such that $(x,B,z)\in\rho$. The communication complexity, i.e. the complexity of the communication protocol, is measured by the number of bits the two players send, the number of rounds of message exchange and the probability Alice outputs a correct answer. Note that each message sending is considered as a round of the protocol.

Actually, as you can see in previous sections, our proof is derived from~\cite{CharkReg}. But there are several key barriers in our work which is derived from the differences between the two problems. In our work, we study the problem in limited rounds and give a lower bound concerning the number of cell-probes in total. (Note that we may probe more than one cell in a single round.) This is different from all previous results which probe only one cell in each round and bound the number of rounds. The first barrier in our work is how to reduce a $k$-round cell-probing scheme to a communication protocol. The barrier can be broken by our observation as in Theorem \ref{thm:cell-probe to communication}. The second difficulty is how to deal with the different message sizes in new communication protocol.
The difficulty is overcame by careful analysis in our proof. Another challenge is how to choose the parameters in round elimination procedure to give a good lower bound. Our parameters choosing and the proof is deferred to proof of the Theorem \ref{thm:main lower bound}.

\todo{resume editing from here...}
\end{suppress}

\subsection{Reduction from longest prefix match}
In~\cite{CharkReg}, a reduction from another data structure problem, the longest prefix matching $\LPM$, to $\ANNS$ is constructed.
\begin{definition}[longest prefix match]
For integers $m,n\geq1$ and a finite alphabet $\Sigma$ we define the longest prefix match problem $\LPM$ as the data structure problem that given a query $x\in\Sigma^m$ and a database $B\subseteq\Sigma^m$, $|B|=n$, an answer $z\in B$ must be returned to satisfy that $z$ has the longest common prefix with $x$ among all $y\in B$.
\end{definition}

The reduction in~\cite{CharkReg} maps instances of $\LPM$ to instances of $\ANNS$ without going through the computation model, so it also applies to $k$-round cell-probing schemes.
In order to prove our more refined lower bound, we need to guarantee the same reduction to hold for a more critical parameterization.

Fix the parameters for the problem $\ANNS$. We define $\eta$ and $\beta$ as follows:
\begin{align}
\eta\triangleq1-\frac{\log\log\gamma}{\log\log d}\quad\text{ and }\quad \beta\triangleq1-\frac{c_4}{\log\log d},\label{eq:eta-beta}
\end{align}
where $c_4=2\log201$. Note that it holds that
\begin{align}
\gamma=2^{(\log d)^{1-\eta}}.\label{eq:gamma}
\end{align}


\begin{lemma}[reduction from $\LPM$ to $\ANNS$]\label{lem:reduction}
Let $d$ be a sufficiently large integer, let $\eta$ and $\beta$ be as defined in~\eqref{eq:eta-beta} so that $\gamma$ satisfies~\eqref{eq:gamma}, and set $m\triangleq\lfloor(\log d)^{\eta\beta}\rfloor$.
Let $\Sigma$ be an alphabet of size $\lceil2^{d^{0.99}}\rceil$.
If $\ANNS$ has a $k$-round randomized cell-probing scheme with cell-probe complexity $t$ and success probability $7/8$, using table size $s$ and word size $w$, then so does $\LPM$.
\end{lemma}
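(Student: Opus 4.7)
My plan is to construct a randomized embedding $\phi\colon\Sigma^m\to\{0,1\}^d$ that turns the longest-prefix-match structure into an approximate nearest-neighbor structure, and then use the given $\ANNS$ scheme as a black box on $\phi(B)$ to answer $\LPM$ queries. The random bits defining $\phi$ will be shared as public randomness between the table and the cell-probing algorithm, so the number of rounds, the total cell-probes, the table size, and the word size all transfer verbatim from the $\ANNS$ scheme. The $o(1)$ failure probability introduced by $\phi$ can be absorbed by the constant-factor parallel boosting discussed in Section~\ref{section-prelim}, which preserves the round count and affects the other parameters only by constants.

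To define $\phi$, I will partition $[d]$ into $m$ blocks of geometrically shrinking sizes $d_1,\ldots,d_m$ with common ratio $1/\Gamma$ for a constant $\Gamma$ strictly larger than $\gamma$ (for concreteness, $\Gamma = 4\gamma$), normalized so $\sum_i d_i = d$, and sample independent uniformly random codes $\phi_i\colon\Sigma\to\{0,1\}^{d_i}$. Setting $\phi(w_1\cdots w_m) = \phi_1(w_1)\cdots\phi_m(w_m)$, if $\mathrm{LCP}(u,v) = \ell$ then $\phi(u)$ and $\phi(v)$ agree in the first $\ell$ blocks and are independent uniform in the remaining ones, so
\[
\mathbf{E}[\mathrm{dist}(\phi(u),\phi(v))] \;=\; \tfrac{1}{2}\sum_{i>\ell}d_i \;=\; \Theta\bigl(d\,\Gamma^{-\ell}\bigr).
\]
Expected distances at successive LCP levels are therefore separated by a multiplicative factor of $\Gamma > \gamma$. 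A Chernoff bound on each pair, followed by a union bound over the $O(n^2)$ pairs in $B\cup\{x\}$ and the $m$ possible LCP values, will show that with probability $1-o(1)$ every realized distance sits within a sub-$(\Gamma/\gamma)$ factor of its mean. Under that event, any $\gamma$-approximate nearest neighbor of $\phi(x)$ in $\phi(B)$ must achieve the maximum LCP with $x$, so returning its pre-image answers $\LPM$ correctly.

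The main obstacle is the quantitative check that the specific parameters $\eta,\beta,m$ leave enough Chernoff slack. Using $\gamma=2^{(\log d)^{1-\eta}}$ and $m=\lfloor(\log d)^{\eta\beta}\rfloor$ with $\beta = 1-c_4/\log\log d$ and $c_4 = 2\log 201$, a short calculation yields $\Gamma^m \le d^{O(1/2^{c_4})} = d^{O(1/201^2)}$, so the smallest block satisfies $d_m = \Omega(d^{1-1/201^2})$, which comfortably exceeds both $\log|\Sigma| = d^{0.99}$ and $\log(n^2 m) = O(d^{0.99})$. This margin simultaneously ensures (i) that codewords of distinct characters collide only with vanishing probability over the at most $n$ characters that actually appear at each position, (ii) that the Chernoff tail $\exp(-\Omega(d_m))$ dominates the $n^2 m \le 2^{O(d^{0.99})}$ events in the union bound, and (iii) that the absolute Chernoff deviation fits inside the $\Gamma/\gamma$ gap for every LCP level. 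Verifying these three inequalities with the stated $c_4$ is routine arithmetic; once done, the embedding has the stated property and the reduction is immediate.
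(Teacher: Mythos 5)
Your approach is genuinely different from the paper's: the paper builds a \emph{deterministic} geometric embedding of $\Sigma^m$ into $\{0,1\}^d$ by recursively packing $\gamma$-separated families of Hamming balls (Lemma~\ref{lem:existence of space partation tree}, imported from Chakrabarti et al.~via Lemma~\ref{lemma-separated-family}), whereas you propose a \emph{random} coordinate-wise code. The idea is workable, and your quantitative check that $d_m=\Omega(d^{1-1/201^2})$ is the right margin to aim for, but the proof as written has two concrete gaps.

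First, the key expectation claim is false for the encoding you defined. With $\phi_i\colon\Sigma\to\{0,1\}^{d_i}$ applied to the $i$-th \emph{character}, a pair $u,v$ with $\mathrm{LCP}(u,v)=\ell$ need not have independent encodings on blocks $i>\ell+1$: if $u_i=v_i$ there (which can easily happen for $i>\ell+1$), then $\phi_i(u_i)=\phi_i(v_i)$ deterministically, and that block contributes zero. So $\mathbf{E}[\mathrm{dist}(\phi(u),\phi(v))]=\tfrac12\sum_{i>\ell:\,u_i\ne v_i}d_i$, which ranges over the interval $\bigl[\tfrac12 d_{\ell+1},\;\tfrac12\sum_{i>\ell}d_i\bigr]$ rather than being pinned to $\tfrac12\sum_{i>\ell}d_i$. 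The claimed multiplicative separation $\Gamma$ between successive LCP levels therefore degrades to $\Gamma-1$, and you must verify $\Gamma-1>\gamma$ (true for $\Gamma=4\gamma$, but not what you checked). The cleanest fix, which matches your stated expectation formula and in fact mirrors the tree structure in the paper, is to hash \emph{prefixes}: take $\phi_i\colon\Sigma^i\to\{0,1\}^{d_i}$ and set $\phi(w)=\phi_1(w_{\le1})\cdots\phi_m(w_{\le m})$, so that for $\mathrm{LCP}(u,v)=\ell$ the arguments $u_{\le i},v_{\le i}$ differ for every $i>\ell$ and every trailing block is independent uniform, as you asserted.

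Second, you cannot simultaneously share $\phi$ as public randomness with the table \emph{and} claim the table size transfers verbatim. A public-coin $\LPM$ scheme must pass through Lemma~\ref{lemma-public-vs-private} (Newman) before it can feed Proposition~\ref{thm:cell-probe to communication}, which costs a polynomial factor in table size. The margin you computed actually supports the stronger move: since the bad events are only ``some $\mathrm{dist}(\phi_i(a),\phi_i(b))$ deviates from $d_i/2$ by more than $\delta d_i$'' for the $m|\Sigma|^2\le m\,2^{2d^{0.99}}$ pairs, and each has probability $\exp(-\Omega(\delta^2 d_m))$ with $d_m\ge d^{1-1/201^2}\gg d^{0.99}$, a union bound lets you \emph{fix a single deterministic $\phi$ that works for every database and every query}, which is exactly what the paper's ball-tree gives. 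With that derandomization the parameters do transfer verbatim and the reduction is as clean as the paper's.
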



Next we explain how to modify the reduction in~\cite{CharkReg} to prove this lemma.

A family of Hamming balls in $\{0,1\}^d$ is said to be \emph{$\gamma$-separated} if the distance between any two points belonging to distinct balls in the family is more than $\gamma$ times the diameter of any ball in the family.
The following lemma is due to Chakrabarti \emph{et al.}~\cite{chakrabarti2003lower}.
\begin{lemma}[rephrased from Lemma 3.2 in~\cite{chakrabarti2003lower}]\label{lemma-separated-family}
Let $d\geq1$ be a large enough integer, and let $\gamma>1$.
Inside a Hamming ball of radius $r$ (where $d^{0.995}\le r\le d$) in $\{0,1\}^d$ there exists a $\gamma$-separated family of $\lceil2^{d^{0.99}}\rceil$ balls, each of radius $r/(8\gamma)$.
\end{lemma}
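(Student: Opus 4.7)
The plan is to encode the $\LPM$ instance into an $\ANNS$ instance via an embedding $\phi:\Sigma^m\to\{0,1\}^d$ built from a deterministic tree of nested balls supplied by Lemma~\ref{lemma-separated-family}, so that longest-prefix matches translate into $\gamma$-approximate nearest neighbors in Hamming distance. First I would fix such a tree: start with the whole cube $\{0,1\}^d$ as a level-$0$ ball of radius $r_0=d$, and iterate for $i=0,1,\ldots,m-1$, applying Lemma~\ref{lemma-separated-family} inside each current level-$i$ ball of radius $r_i$ to obtain a $\gamma$-separated family of $\lceil 2^{d^{0.99}}\rceil=|\Sigma|$ sub-balls of radius $r_{i+1}=r_i/(8\gamma)$. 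The level-$m$ leaves are then indexed by $\Sigma^m$; I would define $\phi(y)$ to be the center of the leaf ball whose path from the root is $y$.

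The most delicate step is verifying that Lemma~\ref{lemma-separated-family} is applicable at every level, i.e.\ that $r_i\geq d^{0.995}$ throughout, equivalently $\log r_m\geq 0.995\log d$. Using $r_m=d/(8\gamma)^m$ and $\log\gamma=(\log d)^{1-\eta}$ from~\eqref{eq:gamma}, together with $m=\lfloor(\log d)^{\eta\beta}\rfloor$, the dominant loss is $m\log\gamma=(\log d)^{\eta\beta+1-\eta}=(\log d)^{1-\eta(1-\beta)}$; since $\eta\approx 1$ and $1-\beta=c_4/\log\log d$, the exponent equals $1-\Theta(1/\log\log d)$, and the identity $(\log d)^{1/\log\log d}=2$ turns this loss into $(\log d)/2^{c_4(1+o(1))}$. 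The prescribed choice $c_4=2\log 201$ is precisely what keeps the loss below $0.005\log d$, so $\log r_m\geq 0.995\log d$ holds for all $d$ large enough.

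For correctness, suppose the longest common prefix of the $\LPM$ query $x$ with $B$ has length $\ell$, witnessed by $y^*\in B$. Then $\phi(x)$ and $\phi(y^*)$ lie in a common level-$\ell$ ball of radius $r_\ell$, so $\mathrm{dist}(\phi(x),\phi(y^*))\leq 2r_\ell$. For any $y'\in B$ whose common prefix with $x$ has length $\ell'<\ell$, the points $\phi(x)$ and $\phi(y')$ lie in distinct sub-balls of some common level-$\ell'$ ball, so by $\gamma$-separation $\mathrm{dist}(\phi(x),\phi(y'))>2\gamma r_{\ell'+1}\geq 2\gamma r_\ell$. Hence any $\gamma$-approximate nearest neighbor of $\phi(x)$ in $\phi(B)$ must be $\phi(y^{**})$ for some $y^{**}\in B$ whose common prefix with $x$ has length at least $\ell$, and thus exactly $\ell$ by maximality---a valid $\LPM$ answer.

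Finally I would lift a $k$-round cell-probing scheme $(\mathcal{A},\mathcal{T})$ for $\ANNS$ to one for $\LPM$. Because the ball tree is fixed deterministically and $\phi$ is hard-wired into the scheme, the $\LPM$ preprocessor simply computes $\phi(B)$ and runs $\mathcal{T}$ on it, storing the same table. The $\LPM$ query algorithm computes $\phi(x)$ off-line, runs $\mathcal{A}$ verbatim (with the same $k$ rounds and same total $t$ cell-probes), then decodes the returned $\phi(y^{**})\in\{0,1\}^d$ back to $y^{**}\in\Sigma^m$ by tracing its path through the tree; this decoding uses zero additional cell-probes and preserves the $7/8$ success probability as well as $s$ and $w$. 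The main obstacle is the quantitative parameter check in the construction step: making sure the depth-$m$ recursion of $8\gamma$ radius-shrinking never drives $r_i$ below $d^{0.995}$ is exactly where the tight choices of $\eta$, $\beta$, and $c_4=2\log 201$ are needed.
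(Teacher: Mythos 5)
Your proposal does not prove Lemma~\ref{lemma-separated-family}; it proves something downstream of it. The statement you were asked to establish is a standalone packing result about Hamming balls: inside a single ball of radius $r$ (with $d^{0.995}\le r\le d$) there exist $\lceil 2^{d^{0.99}}\rceil$ sub-balls of radius $r/(8\gamma)$ that are $\gamma$-separated. You instead constructed the recursive ball-tree $\mathcal{T}$, checked that the radii stay above $d^{0.995}$ through $m$ levels of shrinking by $8\gamma$, and showed how the resulting embedding $\phi:\Sigma^m\to\{0,1\}^d$ turns a longest-prefix-match instance into an approximate-nearest-neighbor instance while preserving the cell-probe parameters. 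That is precisely the content of Lemma~\ref{lem:existence of space partation tree} and Lemma~\ref{lem:reduction} in the paper --- both of which explicitly \emph{invoke} Lemma~\ref{lemma-separated-family} as a black box at every level of the recursion. Your own write-up does the same (``applying Lemma~\ref{lemma-separated-family} inside each current level-$i$ ball''), so the statement you were supposed to prove appears in your argument only as an assumption.

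A genuine proof of Lemma~\ref{lemma-separated-family} requires a packing argument inside one fixed ball: show that one can choose $\lceil 2^{d^{0.99}}\rceil$ centers inside the radius-$r$ ball whose pairwise Hamming distances all exceed $r/4+r/(4\gamma)$ (so that two points in distinct radius-$r/(8\gamma)$ balls around those centers are at distance strictly greater than $\gamma\cdot 2r/(8\gamma)=r/4$, which is $\gamma$ times the common diameter). This can be done by a greedy or volume argument: the number of pairwise-far centers is at least the volume of the radius-$r$ ball divided by the volume of a ball of radius $r/4+r/(4\gamma)<r/2$, and for $r\ge d^{0.995}$ this ratio is at least $2^{d^{0.99}}$ for large $d$ (this is exactly where the hypothesis $r\ge d^{0.995}$ and the target count $2^{d^{0.99}}$ interact). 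None of this appears in your proposal. Before writing up, you should separate the two tasks cleanly: first prove the single-level packing statement, then --- as a separate lemma --- iterate it to build the tree and the reduction, which is what you actually did.
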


\begin{lemma}[improved from Lemma 2.3 in~\cite{CharkReg}]\label{lem:existence of space partation tree}
Let $d\geq1$ be a large enough integer, and let $\gamma=2^{(\log d)^{1-\eta}}\ge 3$, as defined in~\eqref{eq:gamma}.
There exists a rooted tree $\mathcal{T}$ whose vertices are Hamming balls in $\{0,1\}^d$ and which satisfies the following properties:
\begin{enumerate}
		\item If $v$ is a child of $u$ in $\mathcal{T}$, then as Hamming balls $v\subset u$.
		\item Each non-leaf vertex of $\mathcal{T}$ has exactly $\lceil2^{d^{0.99}}\rceil$ children.
		\item Each depth-$i$ vertex (the root being a depth-0 vertex) has radius $d/(8\gamma)^i$.
		\item The depth-$i$ vertices form a $\gamma$-separated family of Hamming balls, which means the distance between any two points belonging to distinct balls in the family is more than $\gamma$ times the diameter of any ball in the family.
		\item The leaves of $\mathcal{T}$ are at depth $\lfloor(\log d)^{\eta\beta}\rfloor$, where $\eta$ is as defined in~\eqref{eq:eta-beta}.
	\end{enumerate}
\end{lemma}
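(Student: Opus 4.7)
The plan is to build the tree greedily, level by level, by iterated application of Lemma~\ref{lemma-separated-family}. Start with the root at depth $0$, taken to be the entire Hamming cube $\{0,1\}^d$, viewed as a ball of radius $d$ (so property~3 is satisfied vacuously at the root). Assuming the tree has been constructed down to depth $i$ with all required properties, for every vertex $v$ at depth $i$ we apply Lemma~\ref{lemma-separated-family} to $v$, which is a Hamming ball of radius $r_i \triangleq d/(8\gamma)^i$: this produces $\lceil 2^{d^{0.99}}\rceil$ $\gamma$-separated sub-balls of radius $r_i/(8\gamma)=r_{i+1}$ inside $v$. These sub-balls are declared to be the children of $v$, giving properties~1, 2, and 3 at depth $i+1$ by construction.

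To establish property~4 globally at depth $i+1$, I would separate two cases. If two depth-$(i+1)$ balls share a parent $v$, then Lemma~\ref{lemma-separated-family} applied inside $v$ already certifies that they are $\gamma$-separated. If they lie under distinct depth-$i$ parents $u\neq v$, then by the inductive hypothesis $u$ and $v$ are $\gamma$-separated, so any two points drawn from them are at Hamming distance more than $\gamma\cdot 2r_i = 16\gamma\cdot r_{i+1}$, which is well above $\gamma\cdot 2r_{i+1}$, the required threshold. Hence property~4 propagates.

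The only real content lies in checking that Lemma~\ref{lemma-separated-family} remains applicable all the way down to depth $m\triangleq\lfloor(\log d)^{\eta\beta}\rfloor$; that is, we must verify $r_{i}\ge d^{0.995}$ for every $i\le m-1$, which is equivalent to $(8\gamma)^{m-1}\le d^{0.005}$. Taking logarithms and using $\log(8\gamma)\le 2(\log d)^{1-\eta}$ for large $d$ (since $\log\gamma=(\log d)^{1-\eta}$ by~\eqref{eq:gamma}), it suffices to show
\begin{equation*}
2(\log d)^{\eta\beta}\cdot(\log d)^{1-\eta}\;\le\;0.005\,\log d,
\end{equation*}
i.e.\ $2\,(\log d)^{1-\eta(1-\beta)}\le 0.005\,\log d$. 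Using $1-\beta=c_4/\log\log d$, the left-hand side equals $2\,(\log d)\cdot 2^{-\eta c_4}$, and with $c_4=2\log 201$ and $\eta$ sufficiently close to $1$ (which is forced by $\gamma\ge 3$ and $d$ large), one gets $2\cdot 2^{-\eta c_4}\le 0.005$, as required.

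The main obstacle, and the one genuine departure from Lemma 2.3 of~\cite{CharkReg}, is this final parameter accounting: the new depth bound $\lfloor(\log d)^{\eta\beta}\rfloor$ with $\beta=1-c_4/\log\log d$ must be pushed as far as possible without violating the radius floor $d^{0.995}$ that Lemma~\ref{lemma-separated-family} requires. The delicate choice of the constant $c_4=2\log 201$ is what makes the inequality above slack enough to swallow the constant $2$ coming from $\log(8\gamma)$ and the $0.005$ from the exponent gap between $d$ and $d^{0.995}$; once this is verified, the recursive construction carries through with no further difficulty.
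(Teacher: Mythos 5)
Your proof follows essentially the same route as the paper's: build the tree by a depth-by-depth recursive application of Lemma~\ref{lemma-separated-family}, observe that the $\gamma$-separation propagates (same-parent balls by the lemma, different-parent balls via the inductive hypothesis and nesting), and then reduce everything to the single quantitative check that the radius stays at or above $d^{0.995}$ down to the required depth. The paper's proof is the same construction, merely stated more tersely as a pointer to Lemma~2.3 of~\cite{CharkReg} plus the radius inequality.

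One small slip in your parameter accounting: the bound $\log(8\gamma)\le 2(\log d)^{1-\eta}$ does \emph{not} become true for large $d$, because $(\log d)^{1-\eta}=\log\gamma$ is pinned to $\gamma$ and does not grow with $d$. For $3\le\gamma<8$ (i.e.\ $\log 3\le\log\gamma<3$) the inequality is simply false. The fix is trivial: write $\log(8\gamma)=3+\log\gamma$ and use $\gamma\ge3$ to get $\log(8\gamma)\le\bigl(1+3/\log 3\bigr)(\log d)^{1-\eta}<3(\log d)^{1-\eta}$. Carrying the constant $3$ (instead of $2$) through your calculation asks for $3\cdot 2^{-\eta c_4}\le 0.005$, i.e.\ $2^{\eta c_4}\ge 600$, which is still comfortably satisfied for $d$ large since $2^{c_4}=201^2=40401$ and $\eta\to 1$. (You also have a harmless slip writing $\gamma\cdot 2r_i=16\gamma\, r_{i+1}$ where it should be $16\gamma^2 r_{i+1}$; either bound is well above the threshold $2\gamma r_{i+1}$.)
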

\begin{proof}
The proof is almost identical to the proof of Lemma 2.3 in~\cite{CharkReg}, which follows a construction due to Chakrabarti \emph{et al.}~\cite{chakrabarti2003lower}. 
Note that the balls at leaves have radius of at least $d/(8\gamma)^{\lfloor(\log d)^{\eta\beta}\rfloor}$.
By our choices of $\eta$ and $\beta$ as defined in~\eqref{eq:eta-beta}, it can be verify that for large enough $d$,
\[
\frac{d}{(8\gamma)^{\lfloor(\log d)^{\eta\beta}\rfloor}}\geq d^{0.995}.
\]
Then by Lemma~\ref{lemma-separated-family}, we have the suitable tree $\mathcal{T}$ by a natural recursive construction.
\end{proof}

Given this tree $\mathcal{T}$, the reduction from $\LPM$ with the new string length $m=\lfloor(\log d)^{\beta\eta}\rfloor$ to $\ANNS$ can be constructed by reusing the mapping from $\LPM$ instances to $\ANNS$ instances described in the proof of Lemma 2.4 in~\cite{CharkReg} as a blackbox.

\ifabs{
\subsection{Round elimination}
}{
\subsection{Round elimination for communication protocols}}\label{section-round-elimination}
We now consider communication protocols between two players Alice and Bob in Yao's model of communication complexity~\cite{yao1979some}. We refer the readers to the nice textbook by Kushilevitz and Nisan~\cite{kushilevitz2006communication} for formal definitions of various concepts, e.g.~private-coin protocols.

We assume Alice and Bob send messages to each other alternatively. We use two vectors $\mathbf{A}=(a_1,a_2,\ldots,a_k)$ and $\mathbf{B}=(b_1,b_2,\ldots,b_k)$ to respectively denote the lengths of messages sent by Alice and Bob in each round.

\begin{definition}
Let $\mathbf{A}=(a_1,a_2,\ldots,a_k)\in\mathbb{R}^k_{\ge0}$ and $\mathbf{B}=(b_1,b_2,\ldots,b_k)\in\mathbb{R}^k_{\ge0}$. An $\langle\mathbf{A},\mathbf{B},2k\rangle^A$-protocol is a $2k$-round communication protocol, in which Alice and Bob send messages to each other alternatively, with Alice sending the first message, with the size of Alice's $i$-th message being exactly $\lfloor a_i\rfloor$ bits, and the size of Bob's $i$-th message being exactly $\lfloor b_i\rfloor$ bits.
The superscript ``A'' indicates that Alice sends the first message.

For $\mathbf{A}=(a_1,\ldots,a_{k-1})\in\mathbb{R}^{k-1}_{\ge0}$ and $\mathbf{B}=(b_1,\ldots,b_k)\in\mathbb{R}^k_{\ge0}$, we call such a protocol an $\langle\mathbf{A},\mathbf{B},2k-1\rangle^B$-protocol if the first message is sent by Bob.
\end{definition}

A data structure problem $\rho\subseteq\mathscr{A}\times\mathscr{B}\times\mathscr{C}$ is naturally a communication problem: Alice is give a query $x\in\mathscr{A}$ as input, Bob is given a database $B\in\mathscr{B}$ as input, and Alice is asked to output a correct answer $z\in\mathscr{C}$ satisfying $(x,B,z)\in\rho$ after communicating with Bob. As observed in~\cite{miltersen1995data}, any cell-probing scheme is actually a communication protocol, with Alice being the cell-probing algorithm and Bob being the table.

\begin{proposition}\label{thm:cell-probe to communication}
If a data structure problem $\rho$ has a randomized cell-probing scheme using table size $s$ and word size $w$ bits, such that every query is answered correctly within $t$ total cell-probes in $k$ rounds with probability $1-\epsilon$, then $\rho$ has a private-coin $\langle\mathbf{A},\mathbf{B},2k\rangle^A$-protocol with ${a}_i=t_i\lceil\log s\rceil$ and ${b}_i=t_iw$ for every $1\le i\le k$, for some $t_1,t_2,\ldots,t_k\ge 0$ that $t=\sum_{i=1}^kt_i$, such that Alice outputs a correct answer with probability at least $1-\epsilon$.
\end{proposition}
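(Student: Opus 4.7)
The plan is the standard simulation of a cell-probing scheme by a communication protocol, going back to Miltersen et al. Alice plays the role of the cell-probing algorithm $\mathcal{A}$, holding the query $x$ together with the private random bits, and Bob plays the role of the table, holding the database $B$ and hence the table $T_B$ produced by the code $\mathcal{T}$. The $2k$ rounds of the communication protocol are organized into $k$ pairs, one pair per round of cell probes, with Alice initiating each pair.

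In the $i$-th pair, Alice evaluates the lookup function $L_i$ on $x$ together with the addresses and contents she has observed in the previous $i-1$ pairs (she knows the addresses from her own earlier computation, and the contents from Bob's earlier messages). This yields a sequence $(p_1^i,\ldots,p_{t_i}^i)\in[s]^{t_i}$, which she encodes as a binary string of length $t_i\lceil\log s\rceil = a_i$ bits and sends to Bob. Bob reads off the contents of the corresponding $t_i$ cells of $T_B$ and returns them to Alice as a string of length $t_i w = b_i$ bits. After the $k$-th pair, Alice has recorded exactly the tuple $\langle p^j_\ell, T_B[p^j_\ell]\rangle_{1\le j\le k,\,1\le \ell\le t_j}$ that the original cell-probing algorithm would have collected on the same input and random tape, so she applies the truth table $A$ to output the same answer. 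The simulation is faithful, so the success probability is preserved at at least $1-\epsilon$, and since only Alice consumes randomness, the protocol is private-coin.

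The one bookkeeping point is that the $\langle\mathbf{A},\mathbf{B},2k\rangle^A$-protocol definition fixes each message length in advance, whereas the $t_i$ produced by $L_i$ could in principle depend on the previously observed contents. This is handled by the usual normalization: we assume without loss of generality that $\mathcal{A}$ makes exactly $t_i$ probes in round $i$ for a fixed sequence $t_1,\ldots,t_k\ge 0$ with $\sum_i t_i = t$, padding any shorter execution with dummy probes to cell $1$. Feeding the normalized scheme into the simulation above delivers the claimed $\langle\mathbf{A},\mathbf{B},2k\rangle^A$-protocol with $a_i = t_i\lceil\log s\rceil$ and $b_i = t_i w$. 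I do not expect any obstacle beyond this padding step; the whole argument is a one-for-one translation of probes into messages.
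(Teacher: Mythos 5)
Your proposal is correct and matches the paper's argument exactly: the paper likewise has Alice simulate the cell-probing algorithm and Bob the table, with each round of $t_i$ parallel probes translated into one Alice-to-Bob message of $t_i\lceil\log s\rceil$ bits carrying addresses and one Bob-to-Alice reply of $t_iw$ bits carrying contents, after which Alice applies the truth table. The padding remark about fixing the $t_i$ in advance is a sensible clarification but not a departure; the paper's model already treats each round's probe count as fixed.
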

Here the natural interpretation is that each round of $t_i$ many parallel cell-probes can be simulated by two rounds of communications: Alice sends the addresses of the $t_i$ cells, each of $\lceil\log s\rceil$ bits, to Bob, and Bob responds by sending back the contents of these $t_i$ cells, each of $w$ bits.

Let $\mathbf{A}=(a_1,a_2,\ldots,a_k)$ and $\mathbf{B}=(b_1,b_2,\ldots,b_{k'})$ be two vectors, and $c\in\mathbb{R}$ be a number. We introduce some notations:
\begin{itemize}
\item let $c\mathbf{A}=(ca_1, ca_2,\ldots,ca_k)$;
\item denote by $(\mathbf{A},\mathbf{B})$, or simply $\mathbf{A}\mathbf{B}$, the concatenation: $\mathbf{A}\mathbf{B}=(a_1,\ldots,a_k,b_1,\ldots,b_{k'})$;
\item denote by $(c,\mathbf{A})$ the concatenation of $(c)$ and $\mathbf{A}$: $(c,\mathbf{A})=(c, a_1, a_2,\ldots,a_k)$
\item denote by $\mathbf{A}^{i-}$ the suffix of $\mathbf{A}$ starting at position $i$: $\mathbf{A}^{i-}=(a_i,a_{i+1},\ldots, a_{k})$.
\end{itemize}

\begin{suppress}
We denote the size of messages sent by Alice and Bob by two vector $\mathbf{A}$ and $\mathbf{B}$. For the convenience of communication protocol representation, we additionally define several operators on vectors.
\begin{definitions}[Operators on Vectors]
	$\phantom{ }$
	\begin{itemize}
		\item For a vector $\mathbf{A}\in\mathbb{R}^n$, $\mathbf{A}_i\in\mathbb{R}$ denotes the $i$-th entry of $\mathbf{A}$.
		\item For a vector $\mathbf{A}\in\mathbb{R}^n$ and a number $c\in\mathbb{R}$, $c\mathbf{A}\in\mathbb{R}^n$ denotes a new vector such that $\forall i,(c\mathbf{A})_i=c(\mathbf{A}_i)$.
		\item For two vectors $\mathbf{A}\in\mathbb{R}^m$ and $\mathbf{B}\in\mathbb{R}^n$, $\mathbf{A}\cdot \mathbf{B}\in\mathbb{R}^{m+n}$ denotes a vector which is concatenation of $\mathbf{A}$ and $\mathbf{B}$ such that $\forall i\leq n, (\mathbf{A}\cdot \mathbf{B})_i=\mathbf{A}_i$ and $\forall i>n, (\mathbf{A}\cdot \mathbf{B})_i=\mathbf{B}_{i-m}$.
		\item For a number $c\in\mathbb{R}$ and a vector $\mathbf{A}\in\mathbb{R}^n$, $c\cdot \mathbf{A}\in\mathbb{R}^{1+n}$ denotes a vector such that $(c\cdot \mathbf{A})_1=c$ and $\forall i>1, (c\cdot \mathbf{A})_i=\mathbf{A}_{i-1}$.
		\item For a vector $\mathbf{A}\in\mathbb{R}^n$, $\mathrm{sfx}(\mathbf{A},i)\in\mathbb{R}^{n-i+1}$ denotes a vector such that $\forall j, \mathrm{sfx}(\mathbf{A},i)_j=A_{i+j-1}$
	\end{itemize}
\end{definitions}
Now we are prepared to describe the notations for communication protocols and show a reduction from $k$-round cell-probing scheme to nonuniform communication protocol.

\begin{definition}[Notation for Communication Protocols]
	Let $\mathbf{A},\mathbf{B}\in\mathbb{R}^k$, an $\langle\mathbf{A},\mathbf{B},2k\rangle^A$-protocol is a $2k$-round protocol, with the size of Alice's $i$-th message being exactly $\lceil \mathbf{A}_i\rceil$, and the size of Bob's $i$-th message being exactly $\lfloor \mathbf{B}_i\rfloor$. The superscript ``A'' indicates that Alice sends the first message. We use a ``B'' superscript if the first message is sent by Bob.
\end{definition}
\begin{theorem}\label{thm:cell-probe to communication}
If a data structure problem has a private-coin randomized cell-probing scheme with table size $s$, word size $w$ bits such that every query is correctlly answered within $t$ total cell-probes in $k$ rounds with probability at least $1-\epsilon$, then it has a (private-coin) randomized $\langle\mathbf{A},\mathbf{B},2k\rangle^A$-protocol such that $\mathbf{A,B}\in\mathbb{R}^k$, and $\forall i, \mathbf{A}_i=t_i\lceil\log s\rceil,\mathbf{B}_i=t_iw$, and $\sum_{i=1}^kt_i=t$, and Alice outputs correct answer with probability at least $1-\epsilon$.
\end{theorem}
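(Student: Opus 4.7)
The plan is to simulate the $k$-round cell-probing scheme directly by a $2k$-round communication protocol in which Alice plays the role of the cell-probing algorithm (so she owns the query $x \in \mathscr{A}$ and the private random bits $r$), and Bob plays the role of the table (so he owns the database $B \in \mathscr{B}$ and can therefore compute every cell of the codeword $T_B = \mathcal{T}(B)$). Each round of parallel cell-probes is simulated by one round of communication from Alice to Bob followed by one round from Bob to Alice.

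Concretely, I would first argue that without loss of generality the number of cell-probes performed in round $i$ is a fixed constant $t_i$, independent of the input and of the randomness, with $\sum_{i=1}^k t_i = t$. If a particular execution of the original scheme uses fewer probes in round $i$, Alice can simply pad it with probes to a fixed dummy address (say cell $1$); this can only raise the worst-case per-round count, but since the worst-case total is bounded by $t$, a suitable choice of the $t_i$'s with $\sum_{i=1}^k t_i = t$ exists after renaming. Then, for round $i$ of the protocol, Alice evaluates the lookup function $L_i\bigl(x, r, \langle p^j_\ell, T_B[p^j_\ell]\rangle_{j<i,\,\ell\le t_j}\bigr)$, obtaining the addresses $p^i_1,\ldots,p^i_{t_i}\in[s]$, and sends them to Bob as a single message of length exactly $t_i\lceil\log s\rceil$ bits. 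Bob, who holds $B$ and hence knows $T_B$ entirely, responds with the concatenation $\langle T_B[p^i_1],\ldots,T_B[p^i_{t_i}]\rangle$, a message of length exactly $t_i w$ bits. Thus the $i$-th pair of messages has sizes $a_i = t_i\lceil\log s\rceil$ and $b_i = t_i w$, matching the claimed specification of an $\langle \mathbf{A},\mathbf{B},2k\rangle^A$-protocol.

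After $k$ such exchanges, Alice holds exactly the same information $\langle x, r, \langle p^j_\ell, T_B[p^j_\ell]\rangle_{j\le k,\,\ell\le t_j}\rangle$ that the cell-probing algorithm would have at the end of its execution, so she applies the truth table $A$ and outputs the resulting answer $z \in \mathscr{C}$. Because Alice's randomness $r$ is private and Bob's responses are a deterministic function of $B$ and the addresses he receives, the joint distribution of the transcript and the output is identical to that of the original cell-probing scheme on input $(x,B)$; hence the probability of outputting a correct answer satisfying $(x,B,z)\in\rho$ is at least $1-\epsilon$. The only mild obstacle in this proof is the padding step: one must be a little careful to ensure the per-round counts $t_i$ can be chosen deterministically so that $\sum_{i=1}^k t_i = t$ while still correctly simulating every execution, but this is handled by the straightforward padding argument sketched above and does not affect correctness or complexity.
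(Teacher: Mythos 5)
Your proof follows the same simulation as the paper: Alice plays the cell-probing algorithm with her private randomness, Bob plays the (deterministic) table $T_B$, and round $i$ of parallel probes is simulated by Alice sending $t_i$ addresses ($t_i\lceil\log s\rceil$ bits) and Bob returning the $t_i$ cell contents ($t_i w$ bits), so the transcript distribution and Alice's output distribution are identical to those of the original scheme. This is exactly the paper's argument (see also the remark directly after the proposition).

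The one place you depart from the paper is the padding digression, and it is worth pointing out that it is both unnecessary and, as stated, not quite sound. It is unnecessary because the paper's definition of a $k$-round cell-probing scheme already fixes the lookup functions $L_i$ with codomain $[s]^{t_i}$ for fixed, input-independent $t_i$, and defines the cell-probe complexity as $t = t_1 + \cdots + t_k$; so the per-round counts are constants of the scheme and nothing needs to be padded. It is not quite sound because, in a hypothetical model where round $i$ could use an input-dependent number of probes, padding each round up to its worst case yields $t_i^{\max}$ with $\sum_i t_i^{\max}$ possibly strictly larger than the worst-case total $t$ (e.g.\ one execution uses $(5,1)$ probes and another $(1,5)$, each totaling $6$, but the per-round maxima sum to $10$); so ``a suitable choice of the $t_i$'s with $\sum_i t_i = t$ exists after renaming'' would not follow from padding alone. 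Dropping that paragraph leaves a clean proof matching the paper's.
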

\begin{proof}
	We construct such a protocol according to the $k$-round, it suffices to prove the theorem. The protocol uses the $k$-round cell-probing scheme as a black box. Bob plays the role of table in the cell-probing scheme, Alice plays the role of cell-probing algorithm. At the very beginning, Bob constructs a table according the the code $\mathcal{T}$ in the cell-probing scheme with his input $B$. In $i$-th round of the cell-probing algorithm, Alice use the $i$-th lookup function $L_i$ to identify the addresses of the table cells to be probed in $i$-th round of the cell-probing algorithm, and sends the concatenation of the $t_i$ addresses to Bob. Alice can do this since Bob always sends back the contents of the table cells to Alice. Note that Alice sends $t_i\lceil\log s\rceil$ bits in Alice's $i$-th round since each address of cell can be identified with at most $\lceil\log s\rceil$ bits. Upon Bob receiving Alice's $i$-th message, Bob makes the $t_i$ cell-probes, and sends back the concatenation of the contents of the table cells in the order of Alice sends the addresses. Clearly, Bob sends $t_iw$ bits in his $i$-th round. At the end of the protocol, Alice uses the truth table $A$ to map the contents of all the probed cells to an output $z$. Obviously, the probability the protocol outputs a correct answer is not less than the probability the cell-probe scheme outputs a correct answer.
\end{proof}
\end{suppress}

The following is the round elimination lemma for $\LPM$ that plays a central role in proving the lower bound. The lemma is generalized from a simpler round elimination lemma in~\cite{CharkReg} to adapt to the non-uniform amount of information communicated in each round.

\begin{lemma}[round elimination lemma for $\LPM$]\label{lemma-round-elimination}\label{LEMMA-ROUND-ELIMINATION}
Let $m,n,p,q$ be positive integers such that $2p\mid m$, $q\mid n$, and $n\leq|\Sigma|$. Let $0<\epsilon,\delta<1$ and $\mathbf{A,B}\in\mathbb{R}^k_{\ge 0}$. There is a universal constant $C>0$ such that the followings hold.
Assume that $k\geq1$ and $\frac{2a_1}{p}\geq C$. If $\LPM$ has a private-coin $\langle\mathbf{A},\mathbf{B},2k\rangle^A$-protocol with error probability $\epsilon$, then $\mathsf{LPM}^{\Sigma}_{m/2p, n/q}$ has a private-coin $\langle\mathbf{A}',\mathbf{B}',2k-2\rangle^A$-protocol with error probability $\epsilon'$, where
\begin{align*}
	\ifabs{
\mathbf{A}'&=\left(1+\frac{2a_1}{\delta p a_2}\right)\mathbf{A}^{2-},
\quad
\mathbf{B}'=\mathbf{B}^{2-},\\
\text{and }
\epsilon'&=\epsilon+2\delta+\sqrt{b_12^{\frac{2a_1}{\delta p}}/q}.
	}
	{
\mathbf{A}'=\left(1+\frac{2a_1}{\delta p a_2}\right)\mathbf{A}^{2-},
\quad
\mathbf{B}'=\mathbf{B}^{2-},
\quad
\text{ and }
\epsilon'=\epsilon+2\delta+\sqrt{b_12^{\frac{2a_1}{\delta p}}/q}.
}
\end{align*}
\end{lemma}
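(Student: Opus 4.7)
The plan is to adapt the round elimination argument of Chakrabarti and Regev~\cite{CharkReg} to Alice-first protocols with non-uniform message sizes, eliminating one round pair (Alice's first message and Bob's first reply) per application. First I would fix a hard distribution $\mu$ on inputs of $\mathsf{LPM}^{\Sigma}_{m/2p,n/q}$ and lift it to a distribution $\mu^{*}$ on inputs of $\LPM$ using public randomness: split Alice's $m$ coordinates into $2p$ blocks of length $m/(2p)$, pick a block index $I \in [2p]$ uniformly, place Alice's small query $x$ at block $I$, and fill blocks $1,\ldots,I-1$ with public random symbols; partition Bob's $n$ database strings into $q$ groups of size $n/q$, pick a group index $J \in [q]$ uniformly, and place Bob's small database at group $J$, each string prefixed by Alice's public padding so that the longest prefix match in the big instance recovers the answer to the small one. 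The divisibility hypotheses $2p \mid m$ and $q \mid n$ give clean partitions, and $n \leq |\Sigma|$ lets the padded strings stay distinct.

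Second, I would compress Alice's first message and simulate Bob's first reply. Under $\mu^{*}$ we have $I(M_1; X \mid \text{pub}) \leq a_1$, so by the chain rule over the $2p$ blocks, the average per-block mutual information satisfies $\mathbb{E}_I[I(M_1; X_I \mid X_{<I}, \text{pub})] \leq a_1/(2p)$. A Jain--Radhakrishnan--Sen-style rejection sampling then lets Alice transmit $M_1$ to Bob using $2a_1/(\delta p)$ deterministic bits plus shared public randomness, with statistical error $\delta$; this is where the additive $2a_1/(\delta p)$ contribution to $\mathbf{A}'_1$ arises, and where the hypothesis $2a_1/p \geq C$ ensures the compression is nontrivial. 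For Bob's first reply $N_1$, once the compressed $M_1$ is shared, Alice samples a surrogate $\widetilde{N}_1$ from the shared randomness by drawing a uniform group and running Bob's local computation. A Chernoff-style concentration across the $q$ samples, combined with a union bound over the $\leq 2^{2a_1/(\delta p)}$ compressed first messages and the $\leq 2^{b_1}$ possible reply values, gives simulation error $\sqrt{b_1 \cdot 2^{2a_1/(\delta p)}/q}$, matching the last term of $\epsilon'$.

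Finally I would assemble the new $\langle\mathbf{A}',\mathbf{B}',2k-2\rangle^A$-protocol: Alice's first message concatenates the compressed $M_1$ (size $2a_1/(\delta p)$) with her original second message $M_2$ computed against $\widetilde{N}_1$, for total size $a_2 + 2a_1/(\delta p) = (1 + 2a_1/(\delta p a_2))\, a_2$; Bob decodes $M_1$, reproduces $\widetilde{N}_1$ from the shared randomness, computes the original $N_2$ of size $b_2$ and sends it; rounds $3,\ldots,k$ proceed as in the original protocol. Summing contributions --- the original $\epsilon$, Alice-side compression $\delta$, Bob-side simulation $\delta$, and the $\sqrt{b_1\, 2^{2a_1/(\delta p)}/q}$ sampling term --- matches the claimed $\epsilon'$, and Yao's minimax then converts back from the distributional setting to a private-coin worst-case guarantee. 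The hard part will be keeping the surrogate $\widetilde{N}_1$ consistent with the shared randomness so that downstream errors do not cascade over the remaining $2k-2$ rounds, and extracting the precise $\sqrt{\cdot}$ factor rather than a weaker union bound linear in the sampling parameters; the rest of the adaptation from the uniform-message-size setting of~\cite{CharkReg} to the non-uniform vectors $\mathbf{A}, \mathbf{B}$ of this lemma is careful but routine bookkeeping.
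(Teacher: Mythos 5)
Your proposal has the right rough shape (direct-sum embedding, compress Alice's first message, deal with Bob's first reply, reassemble) but it misses the central structural step of the paper's argument, and the substitute you propose for it does not actually deliver the claimed error term.

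The key step you are missing is the \emph{message switching lemma} (Lemma~\ref{lem:message switching}). After Alice's first message is compressed to $2a_1/(\delta p)$ bits, the paper does \emph{not} have Alice simulate Bob's reply. Instead, it swaps the order of the first two messages: Bob speaks first in the new protocol, and his opening message is the concatenation of his responses to \emph{every} possible $2^{2a_1/(\delta p)}$-valued compressed first message of Alice. This blows Bob's opening message up to $b_1\cdot 2^{2a_1/(\delta p)}$ bits (a multiplicative blowup, not a union bound), and produces a Bob-first $\langle\cdot,\cdot,2k-1\rangle^B$ protocol. Only \emph{then} is Bob's opening message eliminated, via the $q$-fold direct sum over Bob's database and the uninformative message lemma of~\cite{senp03}, which bounds the distributional error increase by $\sqrt{\mathrm{icost}}\leq\sqrt{b_0/q}$ with $b_0=b_1 2^{2a_1/(\delta p)}$. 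The $\sqrt{\cdot}$ is a Pinsker-type bound relating information cost to total-variation distance --- it is an information-theoretic statement, not a concentration statement.

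Your proposed replacement --- ``a Chernoff-style concentration across the $q$ samples, combined with a union bound over the $\leq 2^{2a_1/(\delta p)}$ compressed first messages and the $\leq 2^{b_1}$ possible reply values'' --- neither matches the paper's mechanism nor yields the claimed term. A Chernoff-plus-union argument over that many events would naturally give an additive bound scaling like $\sqrt{(b_1 + 2a_1/(\delta p))/q}$, not the multiplicative $\sqrt{b_1 2^{2a_1/(\delta p)}/q}$; the exponential factor sitting \emph{inside} the square root, multiplying $b_1$, is precisely the footprint of the message switching blowup, and there is no obvious way to produce it by concentration. Moreover, ``Alice samples a surrogate $\widetilde{N}_1$ from the shared randomness by drawing a uniform group and running Bob's local computation'' is not a well-defined operation --- Alice does not know Bob's group index nor his input --- and making it precise lands you back at the uninformative message lemma, which is the ingredient your sketch does not invoke. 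You honestly flag the $\sqrt{\cdot}$ factor as the hard part; it is the hard part, and the missing tool is exactly the switch-then-eliminate two-step (message switching followed by the Sen--Venkatesh uninformative message lemma), which is what the paper's proof executes as Parts~I and~II of Lemma~\ref{lemma-round-elimination}.

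A secondary, smaller point: the padding $\sigma$ in the paper's direct-sum distribution is \emph{fixed} by an averaging argument (there exists a good pair $(i,\sigma)$), not drawn as public randomness at protocol runtime; and Bob's strings are padded both with the prefix $\sigma$ and a fixed suffix $s^{p-i}$, which you omit. These are minor compared to the missing message switching step, but worth getting right when you fill in the details.
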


\ifabs{
The proof of this lemma is in Appendix~\ref{appendix-round-elimination}.
}{
The rest of Section~\ref{section-round-elimination} is dedicated to the proof of this lemma.
The proof is almost identical to the one in~\cite{CharkReg}, except for the part dealing with non-uniform message sizes. We include the proof here for the completeness of the paper. 

We only need to show that the following two propositions.
\begin{enumerate}
\item[Part I.] Assume that $k\geq1$ and $2a_1/p\geq C$. If $\LPM$ has a private-coin $\langle\mathbf{A},\mathbf{B},2k\rangle^A$-protocol with error probability $\epsilon$, then $\mathsf{LPM}^{\Sigma}_{m/p, n}$ has a private-coin $\langle\mathbf{A}',\mathbf{B}',2k-1\rangle^B$-protocol with error probability $\epsilon+2\delta$, where
\ifabs{
\begin{align}
\mathbf{A}'
=
\left(1+\frac{2a_1}{\delta p a_2}\right)\mathbf{A}^{2-},\,\,
\mathbf{B}'
=
\left(b_12^{\frac{2a_1}{\delta p}}, \mathbf{B}^{2-}\right).\label{eq:round-elimination-1}
\end{align}
}
{
\begin{align}
\mathbf{A}'
=
\left(1+\frac{2a_1}{\delta p a_2}\right)\mathbf{A}^{2-}
\quad\text{ and }\quad
\mathbf{B}'
=
\left(b_12^{\frac{2a_1}{\delta p}}, \mathbf{B}^{2-}\right).\label{eq:round-elimination-1}
\end{align}
}
\item[Part II.] Assume that $n\leq|\Sigma|$. If $\LPM$ has a private-coin $\langle\mathbf{A},(b_0,\mathbf{B}),2k+1\rangle^B$-protocol with error probability $\epsilon$, then $\mathsf{LPM}^{\Sigma}_{m-1, n/q}$ has a private-coin $\langle\mathbf{A},\mathbf{B},2k\rangle^A$-protocol with error probability $\epsilon+\sqrt{b_0/q}$.
\end{enumerate}
The round elimination lemma (Lemma~\ref{lemma-round-elimination}) follows by combining these two propositions together, and weakening the resulting statement from $m/p-1$ to $m/2p$. The proofs of these two propositions will follow the same routine as in~\cite{CharkReg}, with a generalization to deal with non-uniform message sizes.

The following is a typical proposition in the context of round elimination of communication protocols. Here we prove a version which is suitable for our setting.
\begin{lemma}[message switching lemma]\label{lem:message switching}
Let $P$ be a deterministic $\langle\mathbf{A},\mathbf{B},2k\rangle^A$-protocol with $k\geq1$. Then there exists a deterministic $\langle\mathbf{A}',\mathbf{B}',2k-1\rangle^B$-protocol, where $\mathbf{A}'=(1+a_1/a_2)\mathbf{A}^{2-}$ and $\mathbf{B}'=(b_12^{a_1},\mathbf{B}^{2-})$, that computes the exact same problem as $P$.
\end{lemma}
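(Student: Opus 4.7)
The plan is to construct, from the given deterministic $\langle\mathbf{A},\mathbf{B},2k\rangle^A$-protocol $P$, a new deterministic protocol $P'$ of the required signature $\langle\mathbf{A}',\mathbf{B}',2k-1\rangle^B$ by the standard trick of having Bob \emph{tabulate in advance} his entire first-round response. Concretely, since $P$ is deterministic, Bob's first message in $P$ is a fixed function $r\colon\{0,1\}^{a_1}\to\{0,1\}^{b_1}$ of Alice's first message (and of Bob's own input, which is fixed). In $P'$, Bob opens by sending the whole table $\langle r(m)\rangle_{m\in\{0,1\}^{a_1}}$, whose length is exactly $b_1\cdot 2^{a_1}$ bits, matching the first coordinate of $\mathbf{B}'=(b_12^{a_1},\mathbf{B}^{2-})$.

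Having received this table, Alice can locally reconstruct the entire round-$1$ transcript of $P$: she chooses her own original first message $m^*\in\{0,1\}^{a_1}$, she reads off Bob's response $r(m^*)$, and then she computes the round-$2$ message of length $a_2$ she would have sent in $P$. Her first message in $P'$ is the concatenation of (i)~$m^*$ and (ii)~this round-$2$ message, for a total of $a_1+a_2=(1+a_1/a_2)\,a_2$ bits, matching the first coordinate of $\mathbf{A}'=(1+a_1/a_2)\mathbf{A}^{2-}$. From this message Bob extracts both $m^*$ and Alice's round-$2$ message, and so he is in the same state as after receiving Alice's round-$2$ message in $P$.

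From this point on, $P'$ simply simulates $P$ starting at Bob's round-$2$ message. For $j\ge 2$, Bob's $j$-th message in $P'$ is exactly his $j$-th message in $P$ of length $b_j$, matching $\mathbf{B}^{2-}$; and Alice's $j$-th message in $P'$ is her $(j{+}1)$-th message in $P$, of length $a_{j+1}$, padded with zeros up to $(1+a_1/a_2)a_{j+1}$ bits so as to conform to $\mathbf{A}'$. Counting, $P'$ uses $1+2(k-1)=2k-1$ messages, the first sent by Bob; since the full transcript of $P$ is reproduced, $P'$ produces the same output on every input and therefore computes the identical problem.

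There is essentially no mathematical obstacle here: the tabulation trick works because $P$ is deterministic (only finitely many $2^{a_1}$ possible first moves need to be anticipated), and the only bookkeeping point is to check that the message-length vectors come out exactly right, which reduces to the identity $a_1+a_2=(1+a_1/a_2)a_2$ for the merged Alice message and to harmless zero-padding for all later Alice messages.
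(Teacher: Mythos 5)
Your proposal is correct and coincides with the paper's own proof: both have Bob open by transmitting the full table of his $2^{a_1}$ possible first-round responses, then have Alice merge her first two messages of $P$ into one message of $a_1+a_2=(1+a_1/a_2)a_2$ bits, after which the remainder of $P$ is simulated verbatim (with harmless padding on Alice's later messages to match $\mathbf{A}'$). The only difference is that you spell out the zero-padding step, which the paper handles with the parenthetical remark that only Alice's first message actually grows.
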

\begin{proof}
There are at most $2^{a_1}$ different messages that Alice may send as the first message.
Bob starts the new protocol by sending his at most $2^{a_1}$ different responses as in $P$.
If $k=1$, the new protocol stops after this. Otherwise, let Alice's first message be the concatenation of her first two messages in $P$. And then the protocol continues just as in $P$. This increases the sizes of Alice's messages (in fact only her first message in the new protocol) by a factor of at most $(1+a_1/a_2)$.
\end{proof}


We need to define some concepts for the information complexity of communications.
Let $P$ be a communication protocol and $\mathcal{D}$ a joint distribution on the possible inputs to Alice and Bob.
Let $\mathrm{err}(P,\mathcal{D})$ denote the probability of $P$ being error under input distribution $\mathcal{D}$.
Let $\mathcal{D}_A$ denote the marginal distribution of $\mathcal{D}$ on Alice's inputs and $\mathcal{D}_B$ the marginal distribution on Bob's inputs.

\begin{definition}[information cost]
The \concept{information cost} of a private-coin protocol $P$ with respect to input distribution $\mathcal{D}$, denoted $\mathrm{icost}(P,\mathcal{D})$, is defined to be the mutual information $I(X:\mathrm{msg}(P,X))$, where $X$ is a random input drawn from $\mathcal{D}_A$ (if Alice starts $P$) or $\mathcal{D}_B$ (if Bob starts $P$), and $\mathrm{msg}(P,x)$ denotes the first message in protocol $P$ if the sender's input is $x$. %
\end{definition}

The next two generic lemmas hold for general communication protocols with non-uniform sizes of messages, which apply to our setting.
\begin{lemma}[uninformative message lemma~\cite{senp03}]\label{lem:uniformative message}
Let $P$ be a private-coin $\langle\mathbf{A},\mathbf{B},2k\rangle^A$-protocol for a communication problem $\rho$. Then for any input distribution $\mathcal{D}$, there is a deterministic $\langle\mathbf{A}^{2-},\mathbf{B},2k-1\rangle^B$-protocol $P'$ for $\rho$ such that $\mathrm{err}(P',\mathcal{D})\leq\mathrm{err}(P,\mathcal{D})+\sqrt{\mathrm{icost}(P,\mathcal{D})}$.
\end{lemma}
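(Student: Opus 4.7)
The plan is to exploit the assumption that Alice's first message $M_1$ carries little information about her input $X$: since $M_1$ is nearly independent of $X$, Bob can himself simulate it by sampling from its prior distribution, thereby eliminating Alice's first round entirely.

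Let $M_1(x, r_A)$ denote Alice's first message as a function of her input $x$ and her private coins $r_A$. For each $x$, let $P_x$ be the distribution of $M_1(x, R_A)$ when $R_A$ is random, and let $Q = \mathbb{E}_{x \sim \mathcal{D}_A}[P_x]$ be the marginal distribution of $M_1$. By the definition of mutual information, $\mathrm{icost}(P, \mathcal{D}) = I(X ; M_1) = \mathbb{E}_{x \sim \mathcal{D}_A}[D(P_x \,\|\, Q)]$. Applying Pinsker's inequality pointwise and then Jensen's inequality yields
\[
\mathbb{E}_{x \sim \mathcal{D}_A}\bigl[\|P_x - Q\|_{TV}\bigr] \;\leq\; \sqrt{\tfrac{1}{2}\,\mathrm{icost}(P, \mathcal{D})} \;\leq\; \sqrt{\mathrm{icost}(P, \mathcal{D})}.
\]

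Next, I construct an intermediate randomized protocol $P''$ that begins with Bob: the parties share a public sample $\tilde{M}_1 \sim Q$, independent of both inputs; Bob plugs $\tilde{M}_1$ into $P$ in place of Alice's actual first message to compute his response, which he sends as the first message of $P''$; then Alice and Bob continue exactly as in $P$, with Alice pretending her first message had been $\tilde{M}_1$. By construction, the message schedule of $P''$ matches $\langle \mathbf{A}^{2-}, \mathbf{B}, 2k-1\rangle^B$. For each input $x$, the optimal coupling between $M_1(x, R_A) \sim P_x$ and $\tilde{M}_1 \sim Q$ agrees with probability $1 - \|P_x - Q\|_{TV}$, and on the event of agreement the execution of $P''$ is identical to that of $P$ on the same input. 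Hence the error of $P''$ under $\mathcal{D}$ is at most $\mathrm{err}(P, \mathcal{D}) + \mathbb{E}_x[\|P_x - Q\|_{TV}] \leq \mathrm{err}(P, \mathcal{D}) + \sqrt{\mathrm{icost}(P, \mathcal{D})}$.

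Finally, a standard averaging argument fixes the public coin $\tilde{M}_1$ as well as all remaining private coins of Alice and Bob to the values minimizing error under $\mathcal{D}$; this yields a deterministic $\langle \mathbf{A}^{2-}, \mathbf{B}, 2k-1\rangle^B$-protocol $P'$ whose error is no worse than the average and therefore satisfies the claimed bound. The main substantive step is the coupling argument in the middle paragraph, which is where the mutual-information bound is converted into an error bound via Pinsker; the rest is bookkeeping, verifying that Bob's message sizes are unchanged across rounds and that Alice's first message has been cleanly excised.
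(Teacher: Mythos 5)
The paper does not actually prove this lemma---it is stated as a citation to~\cite{senp03}, exactly as Chakrabarti--Regev state their Lemma 3.3---so there is no in-paper argument to compare against. Your proof is correct and follows the standard route for the uninformative-message lemma: decompose $\mathrm{icost}$ as an expected KL divergence, convert to an expected total-variation gap via Pinsker and Jensen, let Bob sample Alice's first message from its marginal $Q$ using shared randomness, bound the extra error by that TV gap via the coupling lemma, and derandomize by averaging over all remaining coins. One step is worth making explicit in a self-contained writeup: for Alice to ``continue exactly as in $P$ pretending her first message was $\tilde M_1$,'' she must reverse-sample her private coins from the conditional distribution of $R_A$ given $M_1(x,R_A)=\tilde M_1$. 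The coupling you describe is between the message values $M_1(x,R_A)$ and $\tilde M_1$ only; to conclude that ``on agreement the executions are identical,'' you need to extend it to a coupling on the full pairs $(M_1,R_A)$, which works precisely because the conditional law of $R_A$ given the message is the same in $P$ and in $P''$, so the TV distance of the joints equals $\|P_x - Q\|_{TV}$. This is implicit in your phrasing but deserves a sentence. (Also, if information is measured in bits rather than nats, the Pinsker constant is $\sqrt{\ln 2/2}$ rather than $\sqrt{1/2}$; either way it is below $1$, so the claimed $\sqrt{\mathrm{icost}(P,\mathcal{D})}$ bound stands.)
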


\begin{lemma}[message compression lemma~\cite{CharkReg}] \label{lem:message compression}
Let $P$ be a private-coin $\langle\mathbf{A},\mathbf{B},2k\rangle^A$-protocol for a communication problem $\rho$. Then for any input distribution $\mathcal{D}$ and any $a>0$, there is a deterministic $\langle (a,\mathbf{A}^{2-}),\mathbf{B},2k\rangle^A$-protocol $P'$ for $\rho$ such that $\mathrm{err}(P',\mathcal{D})\leq\mathrm{err}(P,\mathcal{D})+(2\cdot\mathrm{icost}(P,\mathcal{D})+C)/a$, where $C>0$ is a universal constant.
\end{lemma}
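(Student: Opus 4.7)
The plan is to compress Alice's first message from $a_1$ bits down to exactly $a$ bits (typically with $a<a_1$), exploiting the fact that, on average over $\mathcal{D}_A$, this message carries only $I=\mathrm{icost}(P,\mathcal{D})=I(X;\mathrm{msg}(P,X))$ bits of information about Alice's input. The rest of $P$ will be left structurally unchanged, so the sizes of every subsequent message, and the identity of the sender in each round, are preserved. I would proceed in three main steps.

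\emph{Step 1: One-shot compression with shared randomness.} Let $p_x(\cdot)$ denote the distribution of $\mathrm{msg}(P,x)$ when Alice's input is $x$, and let $q(\cdot)=\mathbb{E}_{X\sim\mathcal{D}_A}p_X(\cdot)$ be the prior on Alice's first message. Using one-shot message-simulation in the style of Harsha--Jain--McAllester--Radhakrishnan (rejection sampling from a shared list of i.i.d.\ samples from $q$), I would construct a public-coin protocol in which, for each $x$, Alice identifies an index encoding a sample distributed exactly as $p_x$, using an expected number of bits at most $D(p_x\,\|\,q)+O(\log(D(p_x\,\|\,q)+2))+O(1)$. Averaging over $x\sim\mathcal{D}_A$ and applying $\mathbb{E}_X D(p_X\,\|\,q)=I(X;\mathrm{msg}(P,X))=I$, together with concavity of $\log$, the expected encoding length is at most $I+C_0$ for a universal constant $C_0$.

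\emph{Step 2: Truncate to exactly $a$ bits.} Cap the encoding at $a$ bits: if the index produced in Step~1 has a description of more than $a$ bits, Alice sends an arbitrary $a$-bit string and the rest of the protocol outputs an arbitrary (wrong) answer; otherwise she pads the description to exactly $a$ bits. By Markov's inequality applied to the expected encoding length, this truncation fails with probability at most $(I+C_0)/a$ over Alice's input and the public coins. On non-failing executions the distribution of the transcript is identical to that of $P$, so the overall error under $\mathcal{D}$ increases by at most the failure probability, giving a public-coin $\langle(a,\mathbf{A}^{2-}),\mathbf{B},2k\rangle^A$-protocol with error at most $\mathrm{err}(P,\mathcal{D})+(I+C_0)/a$. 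Choosing a slightly larger universal constant $C$ absorbs $C_0$, and the coefficient $2$ in the lemma provides a comfortable slack to cover any additional $O(1/a)$ loss that a cleaner execution of Step~1 (e.g., combining compression error and truncation error) might incur.

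\emph{Step 3: Derandomize by averaging.} Since the resulting protocol is public-coin and the input distribution is fixed to $\mathcal{D}$, the expected error over the public random string is at most $\mathrm{err}(P,\mathcal{D})+(2I+C)/a$. By Yao's principle, there is a fixing of the public coins for which the resulting \emph{deterministic} protocol has error at most the same bound under $\mathcal{D}$. This fixing preserves the message lengths and the order of senders, yielding the required deterministic $\langle(a,\mathbf{A}^{2-}),\mathbf{B},2k\rangle^A$-protocol.

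The main technical obstacle is Step~1: obtaining a one-shot compression with a \emph{linear} $I/a$ tail bound after truncation (rather than the weaker $\sqrt{I}$-type bound one gets from a naive coupling). The standard cure is to use the rejection-sampling-from-shared-samples construction, whose description length is concentrated around $I$ with an exponential tail, so that Markov's inequality on the raw description length already yields the $(I+O(1))/a$ truncation loss; the rest of the argument is bookkeeping of message sizes and a final averaging over public coins.
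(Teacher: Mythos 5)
The paper does not prove Lemma~\ref{lem:message compression}; it cites it directly to Chakrabarti--Regev~\cite{CharkReg}, whose own proof is the standard one-shot rejection-sampling (HJMR-style) compression that you outline. So your approach matches the cited argument in spirit, and the overall plan --- compress Alice's first message using shared samples from the prior $q$, truncate at $a$ bits and charge the failure to Markov, then fix randomness by averaging --- is correct. A few places should be tightened, however. In Step~2 the statement ``on non-failing executions the distribution of the transcript is identical to that of $P$'' is not literally true as a conditional statement: conditioning the rejection-sampling index on fitting into $a$ bits biases the sample. The correct argument is a coupling: run the compression in full, continue $P$ from the resulting sample $m\sim p_x$ regardless of whether truncation would have failed (this hybrid reproduces $P$'s transcript distribution exactly), and note that the hybrid agrees with the truncated protocol except on the failure event, so the two output distributions are within total variation $\Pr[\text{encoding} > a \text{ bits}]$. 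The claimed error bound then follows. Also, the ``comfortable slack'' remark about the factor $2$ should be made explicit: the HJMR bound gives expected length at most $I+c_1\log(I+1)+c_2$, and one uses the elementary inequality $I+c_1\log(I+1)+c_2\le 2I+C$ (valid for all $I\ge0$ with a suitable universal $C$) before applying Markov; the factor $2$ is precisely what absorbs the logarithmic overhead, not a generic $O(1/a)$ slop. Finally, Step~3 must fix \emph{all} randomness, not only the public coins introduced by the compression: the resulting protocol still carries $P$'s private coins for the remaining $2k-1$ messages, and the lemma asserts a fully deterministic $P'$. Since we measure distributional error under a fixed $\mathcal D$, averaging over the joint (public and private) randomness and picking the best fixing yields the required deterministic protocol while preserving all message lengths and the speaking order.
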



Now we are ready to show the two propositions that support the round elimination lemma. This is done by going through the same proof in~\cite{CharkReg} with a different parameterization.

\ifabs{\paragraph{Proof of Part I}}{\paragraph{Proof of Part I.}}
Assume that $\LPM$ has private-coin $\langle\mathbf{A},\mathbf{B},2k\rangle^A$-protocol with error probability $\epsilon$ for a $k\geq1$. We then construct a private-coin $\langle\mathbf{A}',\mathbf{B}',2k-1\rangle^B$-protocol with error probability $\epsilon+2\delta$ for $\mathsf{LPM}^{\Sigma}_{m/p, n}$ where $\mathbf{A}'$ and $\mathbf{B}'$ are given in~\eqref{eq:round-elimination-1}, when $2a_1/p\geq C$ for a universal constant $C>0$.


Let $S\triangleq\Sigma^{m/p}$. By Yao's min-max principle, it suffices to give a deterministic protocol for $\mathsf{LPM}^{\Sigma}_{m/p, n}$ with the same message lengths and distributional error on any input distribution $\mathcal{D}$ on $S\times S^n$. Fix an input distribution $\mathcal{D}$ over $S\times S^n$.
Define the following distributions:
	\begin{itemize}
		\item[$\mathcal{I}$:] Let $\mathcal{I}$ defenote the distribution over $[p]\times S^*$ obtained as follows: choose $i\in[p]$ uniformly at random and draw $\sigma\in S^{i-1}$ from $\mathcal{D}_A^{i-1}$. Recall that $\mathcal{D}_A$ denote the marginal distribution on Alice's inputs.
		\item[$\mathcal{D}_{i,\sigma}$:] Let $s$ be some arbitrarily fixed element in $S$. For each pair $(i,\sigma)$ of $\mathcal{I}$ we define a distribution $\mathcal{D}_{i,\sigma}$ on $S^p\times (S^p)^n$ as follows: draw a sample $(x,y)$ from $\mathcal{D}$, independently draw $p-i$ strings $X_{i+1},\cdots,X_p$ from $\mathcal{D}_A$, then output $(\sigma x X_{i+1}\cdots X_p,\sigma ys^{p-i})$. Note that $y$ is a set of string, and the $\sigma ys^{p-i}$ denote the set of strings $\{\sigma\tau s^{p-i}:\tau\in y\}$.
		\item[$\widetilde{\mathcal{D}}$:] Finally, let $\widetilde{\mathcal{D}}$ be the distribution on $S^p\times(S^p)^n$ obtained by drawing a $(i,\sigma)$ from $\mathcal{I}$ and outputting a sample from $\mathcal{D}_{i,\sigma}$.
	\end{itemize}
	By  the easy direction of Yao's min-max principle, there is a deterministic $\langle\mathbf{A},\mathbf{B},2k\rangle^A$-protocol $P$ for $\LPM$ with distributional error at most $\epsilon$ under distribution $\widetilde{\mathcal{D}}$. By definition,
	\begin{align*}
		\mathbf{E}_{i,\sigma}[\mathrm{err}(P,\mathcal{D}_{i,\sigma})]=\mathrm{err}(P,\widetilde{\mathcal{D}})\leq\epsilon,
	\end{align*}
	where the expectation is taken over $(i,\sigma)$ which is sampled from $\mathcal{I}$.

	Let $X=X_1X_2\cdots X_p$ be distributed according to $\widetilde{\mathcal{D}}_A=\mathcal{D}_A^p$. Then by definition,
	\ifabs{
	\begin{align*}
		\mathrm{icost}(P,\widetilde{\mathcal{D}})	&=\mathrm{I}(X:\mathrm{msg}(P,X))\\
		&=\sum_{i\in[p]}\mathrm{I}(X_i:\mathrm{msg}(P,X)|X_1\cdots X_{i-1})\\
		&=\sum_{i\in[p]}\mathbf{E}_\sigma[\mathrm{I}(X_i:\mathrm{msg}(P,X)|X_1\cdots X_{i-1}=\sigma)]\\
		&=p\cdot\mathbf{E}_{i,\sigma}[\mathrm{I}(X_i:\mathrm{msg}(P,X)|X_1\cdots X_{i-1}=\sigma)].
	\end{align*}
	Where the second equality is by the chain rule for mutual information. Note that $\mathrm{icost}(P,\widetilde{\mathcal{D}})\leq a_1$, since Alice's first message is of length $a_1$. Hence
}
{
	\begin{align*}
		\mathrm{icost}(P,\widetilde{\mathcal{D}})	&=\mathrm{I}(X:\mathrm{msg}(P,X))\\
		&=\sum_{i\in[p]}\mathrm{I}(X_i:\mathrm{msg}(P,X)|X_1\cdots X_{i-1})	&	\text{(chian rule)}\\
		&=\sum_{i\in[p]}\mathbf{E}_\sigma[\mathrm{I}(X_i:\mathrm{msg}(P,X)|X_1\cdots X_{i-1}=\sigma)]\\
		&=p\cdot\mathbf{E}_{i,\sigma}[\mathrm{I}(X_i:\mathrm{msg}(P,X)|X_1\cdots X_{i-1}=\sigma)].
	\end{align*}
	Note that $\mathrm{icost}(P,\widetilde{\mathcal{D}})\leq a_1$, since Alice's first message is of length $a_1$. Hence
}
	\begin{align*}
		\mathbf{E}_{i,\sigma}[\mathrm{I}(X_i:\mathrm{msg}(P,X)|X_1\cdots X_{i-1}=\sigma)]\leq\frac{a_1}{p}.
	\end{align*}
	Due to the linearity of expectation,
	\begin{align*}
		&\mathbf{E}_{i,\sigma}\left[\mathrm{err}(P,\mathcal{D}_{i,\sigma})+
		\frac{2\cdot\mathrm{I}(X_i:\mathrm{msg}(P,X)|X_1\cdots X_{i-1}=\sigma)}{2a_1/(\delta p)}\right]\\
		=&\mathbf{E}_{i,\sigma}[\mathrm{err}(P,\mathcal{D}_{i,\sigma})]+\frac{2\mathbf{E}_{i,\sigma}[\mathrm{I}(X_i:\mathrm{msg}(P,X)|X_1\cdots X_{i-1}=\sigma)]}{2a_1/(\delta p)}\\
		\leq&\epsilon+\frac{2a_1/p}{2a_1/(\delta p)}
		=\epsilon+\delta.
	\end{align*}
	By the averaging principle, there is an integer $i\in[p]$ and a string $\sigma\in S^{i-1}$ such that
	\begin{align*}
		\mathrm{err}(P,\mathcal{D}_{i,\sigma})+\frac{2\cdot\mathrm{I}(X_i:\mathrm{msg}(P,X)|X_1\cdots X_{i-1}=\sigma)}{2a_1/(\delta p)}	&\leq\epsilon+\delta.
	\end{align*}
Fix the pair $(i,\sigma)$ to satisfy above. We can now construct a private-coin protocol $Q''$ for $\mathsf{LPM}^\Sigma_{m/p,n}$ which uses $P$ as a black box. It works as follows: given an input $(x,y)\in S\times S^n$, Alice constructs a string $\tilde{x}\triangleq\sigma xX_{i+1}\cdots X_p$ where the $X_j$'s are random strings drawn independently from $\mathcal{D}_A$ using her private-coins, and Bob constructs the set of strings $\tilde{y}\triangleq\sigma ys^{p-i}$. They then run protocol $P$ on input $(\tilde{x},\tilde{y})$ and output the $i$-th block of the output of $P$. Note that $(\tilde{x},\tilde{y})$ is distributed according to $\mathcal{D}_{i,\sigma}$ if $(x,y)$ is distributed according to $\mathcal{D}$. Clearly, due to the definition of $\mathsf{LPM}$,  $Q''$ works as $P$ works. Therefore,
	\[
		\mathrm{err}(Q'',\mathcal{D})\leq\mathrm{err}(P,\mathcal{D}_{i,\sigma}).
	\]
	Moreover,
	\[
		\mathrm{icost}(Q'',\mathcal{D})=\mathrm{I}(X_i:\mathrm{msg}(P,X)|X_1\cdots X_{i-1}=\sigma).
	\]
	Applying the message compression lemma (Lemma~\ref{lem:message compression}) to $Q''$, we have a deterministic $\langle(2a_1/(\delta p),\mathbf{A}^{2-}),\mathbf{B},2k\rangle^A$-protocol $Q'$ for $\mathsf{LPM}^\Sigma_{m/p,n}$, whose error on distribution $\mathcal{D}$ is bounded as
	\begin{align*}
		\mathrm{err}(Q',\mathcal{D})	&\leq\mathrm{err}(Q'',\mathcal{D})+(2\cdot\mathrm{icost}(Q'',\mathcal{D})+C)/(2a_1/(\delta p))\\
		&\leq\epsilon+\delta+\delta\frac{C}{2a_1/p}.
	\end{align*}
Recall that we assume $2a_1/p\geq C$. The above error is bounded by $\epsilon+2\delta$.

	Applying the message switching lemma (Lemma~\ref{lem:message switching}) to $Q'$, we have a deterministic $\langle\mathbf{A}',\mathbf{B}',2k-1\rangle^B$-protocol $Q$ with error probability $\epsilon+2\delta$ for $\mathsf{LPM}^{\Sigma}_{m/p, n}$ where $\mathbf{A}'$ and $\mathbf{B}'$ are given in~\eqref{eq:round-elimination-1}. Applying this to every joint distribution $\mathcal{D}$ over the inputs with Yao's min-max lemma, we prove the first proposition supporting the round elimination lemma.

	\ifabs{\paragraph{Proof of Part II}}{\paragraph{Proof of Part II.}}
%

Assume that $n\leq|\Sigma|$ and $\LPM$ has a private-coin $\langle\mathbf{A},(b_0,\mathbf{B}),2k+1\rangle^B$-protocol with error probability $\epsilon$. 
Let $S\triangleq\Sigma^{m-1}$. For an arbitrary input distribution $\mathcal{D}$ on $S\times S^{n/q}$, we show the existence of a deterministic $\langle\mathbf{A},\mathbf{B},2k\rangle^A$-protocol for $\mathsf{LPM}^\Sigma_{m-1,n/q}$ with  error $\epsilon+\sqrt{b_0/q}$ on distribution $\mathcal{D}$. By Yao's min-max principle, this is sufficient.

	Since $q\mid n$ implies that $q\leq n\leq|\Sigma|$, we can fix $q$ distinct strings $s_1,\cdots,s_q\in\Sigma$. We now define two distributions based on $\mathcal{D}$.
	\begin{itemize}
		\item[$\mathcal{D}_i$:] For each $i\in[q]$, let $\mathcal{D}_i$ be the distribution on $\Sigma S\times (\Sigma S)^n$ obtained as follows: independently draw $q$ samples $(x_1,y_1),\cdots,(x_q,y_q)$ from $\mathcal{D}$, and output $(s_ix_i,s_1y_1\cup\cdots\cup s_qy_q)$.
		\item[$\widetilde{\mathcal{D}}$:] We also contruct a distribution $\widetilde{\mathcal{D}}$ on $\Sigma S\times(\Sigma S)^n$ as follows: choose $i\in[q]$ uniformly at random, and output a sample from $\mathcal{D}_i$.
	\end{itemize}
	By the easy direction of Yao's min-max principle there is a deterministic $\langle\mathbf{A},(b_0,\mathbf{B}),2k+1\rangle^B$-protocol $P$ for $\LPM$ with error at most $\epsilon$ on input distribution $\widetilde{\mathcal{D}}$. By definition,
	\[
		\mathbf{E}_i[\mathrm{err}(P,\mathcal{D}_i)]=\mathrm{err}(P,\widetilde{\mathcal{D}})\leq\epsilon.
	\]

	Let $Y=Y_1Y_2\cdots Y_q$ be distributed according to $\widetilde{\mathcal{D}}_B$, where $\widetilde{\mathcal{D}}_B$ is the marginal distribution of $\widetilde{\mathcal{D}}$ on Bob's inputs. We have
	\begin{align*}
		\mathrm{icost}(P,\widetilde{\mathcal{D}})	&=\mathrm{I}(Y:\mathrm{msg}(P,Y))\\
		&\geq\sum_{i\in[q]}\mathrm{I}(Y_i:\mathrm{msg}(P,Y)),	&	\text{($Y_i$'s are independent)}\\
		&=q\cdot\mathbf{E}_i[\mathrm{I}(Y_i:\mathrm{msg}(P,Y))].
	\end{align*}
	Note that $\mathrm{icost}(P,\widetilde{\mathcal{D}})\leq b_0$ since Bob's first message is of length $b_0$. Hence
	\[
		\mathbf{E}_i[\mathrm{I}(Y_i:\mathrm{msg}(P,Y))]\leq\frac{b_0}{q}.
	\]
	Due to the linearity of expectation,
	\begin{align*}
		&\mathbf{E}_i[\mathrm{err}(P,\mathcal{D}_i)+\mathrm{I}(Y_i:\mathrm{msg}(P,Y))]\\
		=&\mathbf{E}_i[\mathrm{err}(P,\mathcal{D}_i)]+\mathbf{E}_i[\mathrm{I}(Y_i:\mathrm{msg}(P,Y))]\\
		\leq&\epsilon+b_0/q.
		\end{align*}
	By the averaging principle and the concavity of the square root function, there is an $i\in[q]$ such that
	\[
			\mathrm{err}(P,\mathcal{D}_i)+\sqrt{\mathrm{I}(Y_i:\mathrm{msg}(P,Y))}\leq\epsilon+\sqrt{b_0/q}.
	\]
	Fix the $i$ as above. We can now define a private-coin protocol $Q'$ for $\mathsf{LPM}^\Sigma_{m-1,n/q}$ which uses $P$ as a black box. It works as follows: given an input $(x,y)\in S\times S^{n/q}$, Alice constructs a string $\tilde{x}\triangleq s_ix$ and Bob constructs the set of strings $\tilde{y}\triangleq s_1y_1\cup\cdots\cup s_{i-1}y_{i-1}\cup s_iy\cup s_{i+1}y_{i+1}\cup\cdots\cup s_qy_q$ where the $y_j$'s are random sets of strings drawn independently from $\mathcal{D}_B$ using his private-coins. They then run protocol $P$ on input $(\tilde{x},\tilde{y})$ and output the second block of the output of $P$. Note that the $(\tilde{x},\tilde{y})$ is distributed according to $\mathcal{D}_i$ if $(x,y)$ is distributed according to $\mathcal{D}$. Clearly, due to the definition of $\mathsf{LPM}$, $Q'$ works as $P$ works. Therefore,
	\[
		\mathrm{err}(Q',\mathcal{D})\leq\mathrm{err}(P,\mathcal{D}_i).
	\]
	Moreover,
	\[
		\mathrm{icost}(Q',\mathcal{D})=\mathrm{I}(Y_i:\mathrm{msg}(P,Y)).
	\]
	Applying the uninformative message lemma (Lemma~\ref{lem:uniformative message}) to $Q'$, we have a deterministic $\langle\mathbf{A,B},2k\rangle^A$-protocol $Q$ for $\mathsf{LPM}^\Sigma_{m-1,n/q}$ with error at most $\epsilon+\sqrt{b_0/q}$ on distribution $\mathcal{D}$. Applying this to every joint distribution $\mathcal{D}$ over the inputs with Yao's min-max lemma, we prove the second proposition supporting the round elimination lemma.

\begin{suppress}
\todo{editing.., }

Let $\mathcal{A}(m,n,\langle\mathbf{A},\mathbf{B},k\rangle^A,\epsilon)$ denote the statement ``$\LPM$ has a private-coin randomized $\epsilon$-error $\langle\mathbf{A},\mathbf{B},k\rangle^A$-protocol.'' Let $\mathcal{B}(m,n,\langle\mathbf{A},\mathbf{B},k\rangle^B,\epsilon)$ denote the statement ``$\LPM$ has a private-coin randomized $\epsilon$-error $\langle\mathbf{A},\mathbf{B},k\rangle^B$-protocol.'' We know prepared to describe the key lemma of the whole proof.

Be similar with~\cite{CharkReg}, the strategy in this work is to prove a round elimination lemma for $\LPM$ and give a communication lower bound. To this end, we involve a toolkit of three lemmas. The first one is the Message Switching Lemma which says that the first two messages can be switched at the cost of blowing up the sizes of messages such that the first message is eliminated. The sceond one is the Message Compression Lemma which says that the first message can be compressed. We use it to compress the message wich is extremely blowed up in the first lemma. The third one is the Uniformative Message Lemma which says that we can modify the protocol such that the first message is not sent. The last two lemmas increase the error of the protocol by a small amount.

\begin{lemma}[Message Switching Lemma, refined from Lemma 3.1 in~\cite{CharkReg}]\label{lem:message switching}
	Let $P$ be a determinisitic $\langle\mathbf{A},\mathbf{B},k\rangle^A$-protocol with $k\geq2$. Then there exists a determinisitic $\langle(1+\mathbf{A}_1/\mathbf{A}_2)\mathrm{sfx}(\mathbf{A},2),(2^{\mathbf{A}_1}\mathbf{B}_1)\cdot\mathrm{sfx}(\mathbf{B},2),k-1\rangle^B$-protocol that computes the exact same problem as $P$.
\end{lemma}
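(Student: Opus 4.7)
The plan is to exhibit a $(k-1)$-round Bob-first deterministic protocol $P'$ that exactly simulates $P$, exploiting the fact that in a deterministic protocol Bob's first response is a function of his own input together with Alice's first message alone. Since Alice's first message in $P$ takes at most $2^{\mathbf{A}_1}$ distinct values, Bob can precompute a table of all his potential responses and send the whole table as his first message in $P'$, letting Alice piece together what she needs.

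Concretely, in round~$1$ of $P'$ Bob does the following: for each $\alpha\in\{0,1\}^{\mathbf{A}_1}$ he computes $\beta(\alpha)\in\{0,1\}^{\mathbf{B}_1}$, the response he would have sent in $P$ had Alice's first message been $\alpha$ (well-defined since $P$ is deterministic and Bob knows his own input). He concatenates these $2^{\mathbf{A}_1}$ strings in a fixed canonical order, producing a single message of length $2^{\mathbf{A}_1}\mathbf{B}_1$ bits. In round~$2$ of $P'$ Alice, who knows her input, determines the message $\alpha^\ast$ she would have sent first in $P$, extracts $\beta(\alpha^\ast)$ from Bob's transmission, and computes the message $\alpha'$ she would have sent as her second message of $P$ on the transcript $(\alpha^\ast,\beta(\alpha^\ast))$. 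She then sends the concatenation $\alpha^\ast\alpha'$ as her first message in $P'$, which has total length $\mathbf{A}_1+\mathbf{A}_2$ bits. Upon receiving this, Bob now knows $\alpha^\ast$, and hence can reconstruct the round-$1$ transcript of $P$; the two parties then continue for the remaining $k-2$ rounds exactly as in $P$.

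For correctness, the joint transcript $(\alpha^\ast,\beta(\alpha^\ast),\alpha',\ldots)$ produced by $P'$ is identical, round by round, to the transcript $P$ produces on the same inputs, so Alice's final output (a function of her view of the transcript) agrees with that of $P$. Counting rounds: $P'$ uses one Bob round to replay $P$'s round $2$, one Alice round to replay $P$'s rounds $1$ and $3$ together, and then reproduces rounds $4,5,\ldots,k$ of $P$ verbatim, for a total of $1+1+(k-3)=k-1$ rounds of alternation starting with Bob, as required. The message sizes match: Bob's first message has length $2^{\mathbf{A}_1}\mathbf{B}_1$ and his later messages have lengths $\mathbf{B}_2,\mathbf{B}_3,\ldots,\mathbf{B}_k$, together forming $(2^{\mathbf{A}_1}\mathbf{B}_1)\cdot\mathrm{sfx}(\mathbf{B},2)$; Alice's first message has length $\mathbf{A}_1+\mathbf{A}_2=(1+\mathbf{A}_1/\mathbf{A}_2)\mathbf{A}_2$, and each later message of length $\mathbf{A}_{i+1}$ can be padded with zeros up to $(1+\mathbf{A}_1/\mathbf{A}_2)\mathbf{A}_{i+1}$ bits to fit the claimed vector $(1+\mathbf{A}_1/\mathbf{A}_2)\mathrm{sfx}(\mathbf{A},2)$.

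There is no genuine obstacle in the argument; the whole proof is a careful bookkeeping of who sends what and how long each message is. The only place where a hypothesis is essentially used is determinism of $P$: if $P$ used private randomness, Bob could not assemble the table of responses independently of $\alpha^\ast$, and the simulation would fail. Since the statement assumes $P$ is deterministic, this point presents no difficulty.
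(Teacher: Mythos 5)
Your proof is correct and is essentially the same argument as the paper's: Bob precomputes and transmits his $2^{\mathbf{A}_1}$ possible first responses as a single message, Alice locates the relevant entry and bundles her first two messages of $P$ into her first message of $P'$, and the protocol then runs as $P$ from there, with the factor $(1+\mathbf{A}_1/\mathbf{A}_2)$ applied uniformly to Alice's message-length vector (exactly only her first message, with padding on the rest). The paper's proof makes the same moves, just more tersely; no gap.
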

The generalized lemma is refined from Lemma 3.1 in~\cite{CharkReg}. The proof of the lemma is omitted since it will be same with the one of Lemma 3.1 in~\cite{CharkReg}.
\par
To show other two lemmas in the toolkit, we need some notations and a definition. Let $P$ be a communication protocol and $\mathcal{D}$ a distribution on the possible inputs to $P$. Note that $\mathcal{D}$ is a joint distribution for $X$, which is Alice's input, and $Y$, which is Bob's input. We let $\mathcal{D}_A$ denote the distribution of $X$, i.e. the marginal distribution of $X$ of $\mathcal{D}$. Simiarly we let $\mathcal{D}_B$ denote the distribution of $Y$. We let $\mathrm{err}(P,\mathcal{D})$ denote the distributional error probability of $P$ under the input distribution $\mathcal{D}$. We use $\mathrm{msg}(P,x)$ to denote the first message in protocol $P$ if the sender's input is $x$. Note that $\mathrm{msg}(P,x)$ will be a random variable if $P$ is a randomized protocol.

\begin{definition}[Information Cost, Definition 3.2 in ~\cite{CharkReg}] The information cost of a private-coin protocol $P$ with respect to input distribution $\mathcal{D}$, denoted icost$(P,\mathcal{D})$, is defined to be the mutual information $I(X:\mathrm{msg}(P,X))$, where $X$ is a random input drawn from $\mathcal{D}_A$ (if Alice starts $P$) or $\mathcal{D}_B$ (if Bob starts $P$). Note that the notation only deals with the first message of a protocol.
\end{definition}
\begin{lemma}[Message Compression Lemma\cite{CharkReg}] \label{lem:message compression}
	Let $P$ be a private-coin $\langle\mathbf{A},\mathbf{B},k\rangle^A$-protocol for a communication problem $\rho$. Then for any input distribution $\mathcal{D}$ and any $a>0$, there is a determinisitic $\langle a\cdot\mathrm{sfx}(\mathbf{A},2),\mathbf{B},k\rangle^A$-protocol $P'$ for $\rho$ such that $\mathrm{err}(P',\mathcal{D})\leq\mathrm{err}(P,\mathcal{D})+(2\cdot\mathrm{icost}(P,\mathcal{D})+C)/a$, where $C>0$ is a universal constant.
\end{lemma}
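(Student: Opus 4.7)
The plan is to reduce Alice's first message from $\mathbf{A}_1$ bits to a short "pointer" into a collection of shared samples drawn from the prior distribution of that message, using the classical rejection sampling / correlated sampling technique that underlies compression of low-information-cost communication. The quantitative input is that $\mathrm{icost}(P,\mathcal{D})=\mathrm{I}(X:M)$ is small, where $X$ is Alice's input drawn from $\mathcal{D}_A$ and $M=\mathrm{msg}(P,X)$ is her first message; consequently, for a typical $X$ the posterior distribution of $M$ given $X$ is close to the prior in the KL sense, so a rejection sampler that draws from the prior succeeds quickly in expectation.

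First, I would augment $P$ with a public-coin sequence $M_1,M_2,\ldots$ of i.i.d.\ samples from the prior of $M$ (taken over $X\sim\mathcal{D}_A$ and Alice's private coins). On input $X=x$, Alice scans the list and for each $j$ accepts $M_j$ with probability proportional to $p(M_j\mid x)/(K\,p(M_j))$, for a truncation constant $K=O(1)$ that keeps acceptance probabilities in $[0,1]$ on the bulk of the mass. Standard rejection sampling shows that the expected number of trials before acceptance is $2^{D(p(M\mid x)\,\|\,p(M))+O(1)}$, whose expectation over $X$ is $2^{\mathrm{icost}(P,\mathcal{D})+O(1)}$ by Jensen's inequality. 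Alice then transmits only the index $j$ of the accepted sample. Since Bob holds the same shared list, he recovers $M$ and the remainder of $P$ proceeds unchanged; no later message is modified, so the sizes are exactly $\mathrm{sfx}(\mathbf{A},2)$ for Alice and $\mathbf{B}$ for Bob.

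The index fits in $a$ bits provided the rejection sampler succeeds within its first $2^a$ trials. I would bound the failure probability by Markov's inequality applied to the random number of trials, averaged jointly over $X$ and the public coins, yielding a failure probability of at most $(2\,\mathrm{icost}(P,\mathcal{D})+C)/a$ for a suitable universal constant $C$ (absorbing the truncation loss and the expected-versus-worst-case slack). On failure Alice sends an arbitrary index, so the overall distributional error grows by at most the same additive term. Finally, since the resulting protocol is public-coin with distributional error on the fixed input distribution $\mathcal{D}$, Yao's min-max principle lets me fix the public coins to obtain the required deterministic $\langle a\cdot\mathrm{sfx}(\mathbf{A},2),\mathbf{B},k\rangle^A$-protocol $P'$ with $\mathrm{err}(P',\mathcal{D})\le\mathrm{err}(P,\mathcal{D})+(2\,\mathrm{icost}(P,\mathcal{D})+C)/a$.

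The main obstacle is getting the precise rejection-sampling analysis with the stated leading constant $2$ in front of $\mathrm{icost}$: one needs a careful truncation of the posterior/prior ratio to keep acceptance probabilities bounded while losing only an additive $O(1)$ in the effective message length, plus a Markov-style tail bound on the trial count in the joint $(X,\text{public coins})$ distribution. This is essentially the single-message compression argument of Harsha--Jain--McAllester--Radhakrishnan adapted to the asymmetric-communication setting of Chakrabarti--Regev; the augmentation with public coins, the indexing of the accepted sample, and the final derandomization via Yao's min-max are all routine once the compression step is in place.
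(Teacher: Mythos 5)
The paper does not prove this lemma; it cites it directly from Chakrabarti and Regev, and the authors only remark that the generic statement applies unchanged to non-uniform message sizes. So what should be reviewed is whether your reconstruction is a valid proof. It is not: the central quantitative step has a genuine gap.

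You write that the expected number of rejection-sampling trials on input $x$ is $2^{D(p(M\mid x)\|p(M))+O(1)}$, and that averaging over $X$ gives $2^{\mathrm{icost}(P,\mathcal{D})+O(1)}$ ``by Jensen's inequality.'' Jensen goes the other way: since $y\mapsto 2^{y}$ is convex,
\[
\mathbf{E}_{X}\!\left[2^{D(p(M\mid X)\|p(M))}\right]\;\ge\;2^{\mathbf{E}_{X}[D(p(M\mid X)\|p(M))]}=2^{\mathrm{icost}(P,\mathcal{D})},
\]
with no matching upper bound in general --- a rare input $x$ with large divergence can make the left side astronomically bigger than $2^{\mathrm{icost}}$. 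Consequently the expected trial count is not controlled by the information cost, and the subsequent Markov bound on the trial count has no justification. Separately, even if that expectation bound were true, Markov on the trial count with a budget of $2^{a}$ indices would yield a failure probability of order $2^{\mathrm{icost}+O(1)-a}$, which is exponentially small in $a$ and does not have the form $(2\,\mathrm{icost}+C)/a$; this mismatch in functional form is itself a signal that the argument is not organized correctly.

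The fix is to apply Markov to the \emph{divergence itself}, not to its exponential. Since $\mathbf{E}_{X}[D(p(M\mid X)\|p(M))]=\mathrm{icost}(P,\mathcal{D})$, Markov gives $\Pr_{X}[D(p(M\mid X)\|p(M))>\Theta(a)]\le O(\mathrm{icost}/a)$, and these bad inputs are charged directly to the error. For the remaining ``good'' inputs, one compresses the first message to $a$ bits using a one-shot scheme whose extra error is $O(1/a)$ when the divergence is $O(a)$; this is where the substate-theorem-style truncation enters, or, equivalently, a dart-throwing rejection sampler combined with a \emph{variable-length} (universal) encoding of the accepting index so that the codeword length concentrates around the divergence rather than around its exponential. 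Your fixed-length indexing of the trial number skips this step, and without it the $(2\cdot\mathrm{icost}+C)/a$ bound does not come out. The public-coin augmentation and the final derandomization by Yao's min-max principle are fine as you describe them.
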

\begin{lemma}[Uniformative Message Lemma\cite{senp03}, Lemma 3.3 in~\cite{CharkReg}]\label{lem:uniformative message}
	Let $P$ be a private-coin $\langle\mathbf{A},\mathbf{B},k\rangle^A$-protocol for a communication problem $\rho$. Then for any input distribution $\mathcal{D}$, there is a determinisitic $\langle\mathrm{sfx}(\mathbf{A},2),\mathbf{B},k-1\rangle^B$-protocol $P'$ for $\rho$ such that $\mathrm{err}(P',\mathcal{D})\leq\mathrm{err}(P,\mathcal{D})+\sqrt{\mathrm{icost}(P,\mathcal{D})}$.
\end{lemma}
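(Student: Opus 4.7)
The plan is to construct $P'$ by having the two parties simulate $P$ after replacing Alice's first message by a \emph{fake} message $\widetilde{M}_1$ drawn from the prior distribution of $M_1 := \mathrm{msg}(P, X)$ under $X \sim \mathcal{D}_A$. Since $\mathrm{icost}(P, \mathcal{D}) = I(X : M_1)$ is exactly the mutual information measuring how much $M_1$ depends on $X$, Pinsker's inequality tells us that the true conditional law of $M_1$ given $X = x$ is close on average to the prior $\pi_{M_1}$, so substituting $\widetilde{M}_1$ for $M_1$ should change the output only slightly. This is the mechanism by which a low-information first message can be eliminated.

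I would first exhibit a \emph{public-coin} randomized $P'_{\mathrm{pub}}$ as follows. The public coin string is parsed as a sample $\widetilde{m}_1 \in \{0,1\}^{a_1}$ from $\pi_{M_1}$, visible to both players. The players then run $P$ starting from Bob's first-round computation, treating $\widetilde{m}_1$ as though Alice had sent it. Bob's first message in $P'_{\mathrm{pub}}$, of $b_1$ bits, is what Bob would have sent in $P$ upon receiving $\widetilde{m}_1$; Alice, who also sees $\widetilde{m}_1$ via the public coin, computes her round-$2$ message of $P$ as a function of $(X, r_A, \widetilde{m}_1, \text{Bob's reply})$; and so on. The resulting schedule is exactly $\langle \mathbf{A}^{2-}, \mathbf{B}, 2k-1 \rangle^B$.

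To bound the error increase, I would couple the executions of $P$ and $P'_{\mathrm{pub}}$ on a common input $(x, y) \sim \mathcal{D}$ and common private coins $(r_A, r_B)$: every downstream message and the output are the same deterministic function of these, the only difference being that $P$ feeds Alice's genuine first message into round $2$ onwards while $P'_{\mathrm{pub}}$ feeds in $\widetilde{m}_1$. Hence on each input $(x, y)$ the error probabilities of the two protocols differ by at most $\bigl\|\pi_{M_1 \mid X = x} - \pi_{M_1}\bigr\|_{\mathrm{TV}}$. Averaging over $x \sim \mathcal{D}_A$ and invoking Pinsker followed by Jensen,
\[
\mathbf{E}_{x \sim \mathcal{D}_A}\bigl\|\pi_{M_1 \mid X=x} - \pi_{M_1}\bigr\|_{\mathrm{TV}} \leq \sqrt{\tfrac{1}{2}\, I(X : M_1)} \leq \sqrt{\mathrm{icost}(P, \mathcal{D})},
\]
which yields $\mathrm{err}(P'_{\mathrm{pub}}, \mathcal{D}) \leq \mathrm{err}(P, \mathcal{D}) + \sqrt{\mathrm{icost}(P, \mathcal{D})}$.

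Finally, to derandomize, I would invoke an averaging argument: since the preceding bound is an expectation over the public coin $\widetilde{m}_1$ together with both parties' private coins, some fixed joint setting of all these sources of randomness yields a deterministic protocol with $\mathrm{err}(\cdot, \mathcal{D})$ no larger than the bound. That deterministic protocol is the promised $\langle \mathbf{A}^{2-}, \mathbf{B}, 2k-1 \rangle^B$-protocol. The main technical subtlety is the coupling step: once Alice also receives $\widetilde{m}_1$ through the public coin, the full transcript of $P'_{\mathrm{pub}}$ becomes a deterministic function of $(x, y, r_A, r_B, \widetilde{m}_1)$ that is exactly equal to the transcript $P$ would have produced on $(x, y, r_A, r_B)$ had Alice's first message been \emph{forced} to equal $\widetilde{m}_1$. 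It is this exact identification, and not merely an approximate simulation, that lets us bound the per-input error gap by the TV distance of a single message with no additional loss accruing from downstream rounds.
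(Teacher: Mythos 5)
Note first that the paper does not prove this lemma; it cites it verbatim from~\cite{senp03} (via~\cite{CharkReg}), so there is no in-paper proof to compare against. Your high-level plan is the standard one: sample a fake first message $\widetilde{m}_1$ from the marginal $\pi_{M_1}$ on the public channel, run the rest of $P$ with it, bound the damage via Pinsker and Jensen, then fix the coins by averaging. However, the coupling step as written has a genuine gap. You couple $P$ and $P'_{\mathrm{pub}}$ on the \emph{same} private coins $(r_A,r_B)$ and merely swap the first-message slot from $f_1(x,r_A)$ to $\widetilde{m}_1$; under this coupling the per-$x$ quantity that actually controls the error gap is $\mathbf{E}_{r_A}\bigl[\,\text{TV distance between the point mass at }f_1(x,r_A)\text{ and }\pi_{M_1}\bigr]$, not $\|\pi_{M_1\mid X=x}-\pi_{M_1}\|_{\mathrm{TV}}$, because $M_1$ is a \emph{deterministic} function of $(x,r_A)$ while Alice's downstream behavior still depends on all of $r_A$. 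The former can be close to $1$ even when $I(X:M_1)=0$. For a concrete failure: let $X$ be a point mass (so $\mathrm{icost}(P,\mathcal{D})=0$), let $M_1$ be a fresh uniform $a_1$-bit block of $r_A$, have Bob echo $m_1$, and have Alice output the unique correct answer if and only if the echo equals $f_1(x,r_A)$. Then $\mathrm{err}(P,\mathcal{D})=0$, yet your $P'_{\mathrm{pub}}$ errs with probability $1-2^{-a_1}$, grossly exceeding $\mathrm{err}(P,\mathcal{D})+\sqrt{\mathrm{icost}(P,\mathcal{D})}=0$.

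The missing idea is that Alice must \emph{resample} her private coins from the conditional distribution given $(X=x,\,M_1=\widetilde{m}_1)$ rather than reuse $r_A$ (equivalently, one may first normalize $P$ so that Alice's rounds $\geq 2$ depend on her round-$1$ coins only through the value of $M_1$). Letting $q_{x,m_1}$ denote the conditional error probability of $P$ given $(X=x,M_1=m_1)$, one has $\mathrm{err}(P,\mathcal{D})=\mathbf{E}_{x\sim\mathcal{D}_A}\mathbf{E}_{m_1\sim\pi_{M_1\mid x}}[q_{x,m_1}]$, while the resampling protocol yields $\mathbf{E}_{x\sim\mathcal{D}_A}\mathbf{E}_{\widetilde{m}_1\sim\pi_{M_1}}[q_{x,\widetilde{m}_1}]$; the difference is at most $\mathbf{E}_{x\sim\mathcal{D}_A}\|\pi_{M_1\mid x}-\pi_{M_1}\|_{\mathrm{TV}}$, and your Pinsker--Jensen step and the averaging derandomization then go through unchanged. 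Without the resampling, the ``exact identification'' you emphasize in your final paragraph is an identification with the wrong object.
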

		Let $\mathcal{A}(m,n,\langle\mathbf{A},\mathbf{B},k\rangle^A,\epsilon)$ denote the statement ``$\LPM$ has a private-coin randomized $\epsilon$-error $\langle\mathbf{A},\mathbf{B},k\rangle^A$-protocol.'' Let $\mathcal{B}(m,n,\langle\mathbf{A},\mathbf{B},k\rangle^B,\epsilon)$ denote the statement ``$\LPM$ has a private-coin randomized $\epsilon$-error $\langle\mathbf{A},\mathbf{B},k\rangle^B$-protocol.'' We know prepared to describe the key lemma of the whole proof.
\begin{lemma}[Round Elimination Lemma for $\LPM$, refined from Lemma 4.2 in~\cite{CharkReg}]
	Let $m,n,p,q,b_0$ be positive integer such that $p|m$, $q|n$. Let $\epsilon,\delta$ be positive reals, $\mathbf{A,B}$ be positive vectors, and $C$ be the universal constant from the Message Compression Lemma.
	\begin{enumerate}
		\item If $k\geq2$ and $2\mathbf{A}_1/p\geq C$, then $\mathcal{A}(m,n,\langle\mathbf{A},\mathbf{B},k\rangle^A,\epsilon)\implies\mathcal{B}(m/p,n,\langle(1+\frac{2\mathbf{A}_1}{\delta p\mathbf{A}_2})\mathrm{sfx}(\mathbf{A},2),2^{2\mathbf{A}_1/(\delta p)}\mathbf{B}_1\cdot\mathrm{sfx}(\mathbf{B},2),k-1\rangle^B,\epsilon+2\delta)$.
		\item If $n\leq|\Sigma|$, then $\mathcal{B}(m,n,\langle\mathbf{A},b_0\cdot\mathbf{B},k\rangle^B,\epsilon)\implies\mathcal{A}(m-1,n/q,\langle\mathbf{A},\mathbf{B},k-1\rangle^A,\epsilon+\sqrt{b_0/q})$.
	\end{enumerate}
\end{lemma}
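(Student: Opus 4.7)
The plan is to decompose the claimed round elimination into two symmetric implications, following the template of Chakrabarti--Regev, but with non-uniform message sizes tracked carefully. Specifically, I would first prove \textbf{(I)} that if $\LPM$ has a $\langle\mathbf{A},\mathbf{B},2k\rangle^A$-protocol with error $\epsilon$ (and $2a_1/p\ge C$), then $\mathsf{LPM}^\Sigma_{m/p,n}$ has a $\langle\mathbf{A}',\mathbf{B}'',2k-1\rangle^B$-protocol with error $\epsilon+2\delta$, where $\mathbf{A}'=(1+2a_1/(\delta p a_2))\mathbf{A}^{2-}$ and $\mathbf{B}''=(b_1 2^{2a_1/(\delta p)},\mathbf{B}^{2-})$; and \textbf{(II)} that if $\LPM$ has a $\langle\mathbf{A},(b_0,\mathbf{B}),2k+1\rangle^B$-protocol with error $\epsilon$, then $\mathsf{LPM}^\Sigma_{m-1,n/q}$ has a $\langle\mathbf{A},\mathbf{B},2k\rangle^A$-protocol with error $\epsilon+\sqrt{b_0/q}$. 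Applying (II) to the output of (I), with $b_0:=b_1 2^{2a_1/(\delta p)}$, gives a protocol for $\mathsf{LPM}^\Sigma_{m/p-1,n/q}$; weakening the string length from $m/p-1$ to $m/(2p)$ (possible because $2p\mid m$) yields the lemma, with the stated $\mathbf{A}',\mathbf{B}',\epsilon'$.

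\textbf{Plan for Part (I).} Fix any input distribution $\mathcal{D}$ on $S\times S^n$ with $S=\Sigma^{m/p}$; by Yao's minimax, it suffices to produce a deterministic protocol of distributional error $\epsilon+2\delta$. I would lift $\mathcal{D}$ to a distribution $\widetilde{\mathcal{D}}$ on $S^p\times(S^p)^n$ by choosing a uniform index $i\in[p]$ and a prefix $\sigma\in S^{i-1}\sim\mathcal{D}_A^{i-1}$, planting the true instance $(x,y)\sim\mathcal{D}$ at position $i$, padding Alice's suffix with $p-i$ fresh copies from $\mathcal{D}_A$, and padding Bob's suffix with a fixed tail $s^{p-i}$. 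A deterministic $\langle\mathbf{A},\mathbf{B},2k\rangle^A$-protocol $P$ with $\mathrm{err}(P,\widetilde{\mathcal{D}})\le\epsilon$ exists. The chain rule for mutual information, together with $|{\rm msg}|\le a_1$, gives $\mathbf{E}_{i,\sigma}[I(X_i:\mathrm{msg}(P,X)\mid X_{<i}=\sigma)]\le a_1/p$. An averaging step over $(i,\sigma)$, combining error and information with weights $1$ and $\delta p/a_1$, produces a coordinate satisfying $\mathrm{err}(P,\mathcal{D}_{i,\sigma})+\delta p\,\mathrm{I}/a_1\le \epsilon+\delta$. Simulating the other coordinates by private coins turns $P$ into a private-coin protocol $Q''$ for $\mathsf{LPM}^\Sigma_{m/p,n}$ with the same information cost; the message compression lemma with $a=2a_1/(\delta p)$ then shortens Alice's first message to that length, at additive cost $\delta+\delta\cdot C/(2a_1/p)\le 2\delta$ --- this is exactly where the hypothesis $2a_1/p\ge C$ enters. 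Finally, the message switching lemma swaps the (now short) first message with Bob's subsequent one, inflating $b_1$ by $2^{2a_1/(\delta p)}$ and $a_2$ by the factor $1+2a_1/(\delta p a_2)$, and yielding the desired $B$-starting protocol.

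\textbf{Plan for Part (II).} The construction is the mirror image with Bob as the first speaker. Fix any $\mathcal{D}$ on $S\times S^{n/q}$ with $S=\Sigma^{m-1}$; since $q\le n\le|\Sigma|$, pick $q$ distinct prefix symbols $s_1,\ldots,s_q\in\Sigma$. Embed $\mathcal{D}$ into $\widetilde{\mathcal{D}}$ on $\Sigma S\times(\Sigma S)^n$ by drawing $q$ independent $(x_j,y_j)\sim\mathcal{D}$ and a uniform $i\in[q]$, and outputting $(s_i x_i,\,\bigcup_j s_j y_j)$. Bob's input decomposes into $q$ \emph{mutually independent} pieces, so the chain rule for mutual information (which here is an inequality rather than an equality) gives $\mathbf{E}_i[I(Y_i:\mathrm{msg}(P,Y))]\le b_0/q$. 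Averaging, together with concavity of $\sqrt{\cdot}$ via Jensen's inequality, produces an $i$ such that $\mathrm{err}(P,\mathcal{D}_i)+\sqrt{I(Y_i:\mathrm{msg}(P,Y))}\le \epsilon+\sqrt{b_0/q}$. After simulating the off-$i$ coordinates with private coins, the uninformative message lemma erases Bob's first message entirely at additive cost $\sqrt{\mathrm{icost}}$, producing the promised $A$-starting $\langle\mathbf{A},\mathbf{B},2k\rangle^A$-protocol.

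\textbf{Main obstacle.} The delicate point, relative to the uniform-message-size version in~\cite{CharkReg}, is carefully tracking how the three generic lemmas (switching, compression, uninformative message) interact with the non-uniform vectors $\mathbf{A},\mathbf{B}$. In particular, the message-switching step in Part (I) inflates $b_1$ by $2^{2a_1/(\delta p)}$ --- and it is precisely this inflated quantity that later enters Part (II) as $b_0$, producing the $\sqrt{b_1 2^{2a_1/(\delta p)}/q}$ summand in the final error. Switching also inflates $a_2$ by $1+2a_1/(\delta p a_2)$, which is why $a_2$ appears in the denominator of the $\mathbf{A}'$ announced in the lemma. The other subtlety is the technical precondition $2a_1/p\ge C$, which is needed so that the $C/a$ additive slack from message compression is absorbed into $\delta$ rather than dominating the error.
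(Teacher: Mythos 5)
Your proposal follows exactly the paper's own proof strategy: the same decomposition into Parts (I) and (II), the same use of Yao's minimax, the same lifting distributions $\widetilde{\mathcal{D}}$, the same choices of compression parameter $a=2a_1/(\delta p)$, the same role of the hypothesis $2a_1/p\ge C$ to absorb the $C/a$ slack, and the same application of the message-switching, message-compression, and uninformative-message lemmas in the same order. Your tracking of where $a_2$ enters $\mathbf{A}'$ and where the inflated $b_1 2^{2a_1/(\delta p)}$ feeds into Part (II) as $b_0$ is precisely the non-uniform bookkeeping that the paper carries out, so the plan is correct and is essentially identical to the paper's argument.
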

We omit the proof of the lemma since it will be same with the lemma 4.2 in~\cite{CharkReg} expect we prove it in a generalized background. Note that the generalized lemma implies the lemma in~\cite{CharkReg} but the one in~\cite{CharkReg} does not. Combining the two parts of above round elimination lemma, and weakening the resulting statement from $m/p-1$ to $m/2p$, we have the following corollary.

\begin{corollary}[improved from Corrollary 4.3 in~\cite{CharkReg}]\label{cor:key result}
	Let $m,n,p,q$ be positive integer such that $(2p)|m$, $q|n$, $n\leq|\Sigma|$. Let $\epsilon,\delta$ be positive reals, $\mathbf{A,B}$ be positve vectors, and $C$ be the universal constant from the Message Compression Lemma. If $k\geq2$ and $2\mathbf{A}_1/p\geq C$, then
	\[
		\mathcal{A}(m,n,\langle\mathbf{A,B},k\rangle^A,\epsilon)\implies\mathcal{A}\Bigg(\frac{m}{2p},\frac{n}{q},\langle(1+\frac{2\mathbf{A}_1}{\delta p\mathbf{A}_2})\mathrm{sfx}(\mathbf{A},2),\mathrm{sfx}(\mathbf{B},2),k-2\rangle^A,\epsilon+2\delta+\sqrt{\frac{2^{2\mathbf{A_1}/(\delta p)}\mathbf{B}_1}{q}}\Bigg).
	\]
\end{corollary}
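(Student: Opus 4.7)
My plan is to establish the contrapositive by combining the three ingredients already developed. Suppose toward contradiction that $\ANNS$ admits a $k$-round randomized cell-probing scheme with table size $s\le n^{c_1}$, word size $w\le d^{c_2}$, success probability $7/8$, and cell-probe complexity $t\le\frac{c_3}{k}(\log_\gamma d)^{1/k}$ for a sufficiently small constant $c_3>0$, to be chosen at the end. Lemma~\ref{lem:reduction} yields the same kind of scheme for $\LPM$ with $m=\lfloor(\log d)^{\eta\beta}\rfloor$ and $|\Sigma|=\lceil 2^{d^{0.99}}\rceil$, and Proposition~\ref{thm:cell-probe to communication} converts this into a private-coin $\langle\mathbf{A}^{(0)},\mathbf{B}^{(0)},2k\rangle^A$-protocol for $\LPM$ with error $1/8$, where $a_i^{(0)}=t_i\lceil\log s\rceil$, $b_i^{(0)}=t_i w$, and $\sum_i t_i=t$.

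I would then iterate the round elimination lemma (Lemma~\ref{lemma-round-elimination}) exactly $k$ times, each iteration stripping off the first pair of rounds, until no rounds remain. At iteration $j=1,\ldots,k$ I would pick $\delta_j=\Theta(1/k)$ so that the cumulative $\sum 2\delta_j$-error stays under $1/4$, and pick $p_j$ as a bounded constant multiple of $\max\{1,a_1^{(j-1)}/a_2^{(j-1)}\}$, which simultaneously validates the precondition $2a_1^{(j-1)}/p_j\ge C$ and keeps the multiplier $(1+2a_1^{(j-1)}/(\delta_j p_j a_2^{(j-1)}))$ on the surviving Alice-messages bounded by a constant, so $\mathbf{A}^{(j)}$ stays within a constant factor of the tail of $\mathbf{A}^{(j-1)}$. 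Finally I would pick $q_j=\Theta(k^2 b_1^{(j-1)}\cdot 2^{2a_1^{(j-1)}/(\delta_j p_j)})$ to force the $\sqrt{b_1/q}$ error contribution at most $O(1/k)$. Because $2a_1^{(j-1)}/(\delta_j p_j)=O(k)$ under the above choices, this amounts to $q_j\le b_1^{(j-1)}\cdot 2^{O(k)}$.

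After $k$ iterations we are left with a $0$-round ``protocol'' --- one with no communication whatsoever --- for $\mathsf{LPM}^{\Sigma}_{m/\prod 2p_j,\,n/\prod q_j}$ having total error strictly below $7/8$, which is impossible unless the residual string length is $0$ or the residual database has at most $1$ string. Unwinding, $\prod_j p_j$ is, up to a constant factor, bounded by $\prod_i a_i^{(0)}=\prod_i t_i\log s$, whose AM--GM maximum under $\sum t_i=t$ is $(t/k)^k(\log s)^k$; the product $\prod_j q_j$ is similarly bounded by a product of $b^{(0)}$-entries with extra factors $2^{O(k^2)}$. Plugging in $m=\Theta((\log d)^{\eta\beta})$, $s\le n^{c_1}$, $w\le d^{c_2}$, and $n\le 2^{d^{0.99}}$, the bound $k\le\frac{\log\log d}{2\log\log\log d}$ is just strong enough to keep the $2^{O(k^2)}$ and $d^{O(kc_2)}$ factors subpolynomial in $\log d$; the string-collapse inequality $m<\prod 2p_j$ then forces $(t/k)^k=\Omega((\log d)^{\eta\beta}/(\log n)^{O(k)})$, and after taking $k$-th roots and using $\eta\beta=1-o(1)$ together with $\log_\gamma d=(\log d)^\eta$, this yields $t=\Omega(\frac{1}{k}(\log_\gamma d)^{1/k})$, contradicting our assumption once $c_3$ is fixed small enough.

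The main obstacle will be the coupling across iterations. The scaling factor $(1+2a_1/(\delta p a_2))$ depends on the \emph{ratio} of consecutive entries of $\mathbf{A}^{(j-1)}$, and the $q_j$ absorbs an \emph{exponential} of $a_1^{(j-1)}/p_j$, so $(p_j,q_j,\delta_j)$ cannot be tuned round-by-round in isolation: a naive choice either violates $2a_1/p\ge C$ at some later step or drives $\prod q_j$ past $n$, collapsing the contradiction. Moreover, as the introduction stresses, getting the sharp $1/k$ exponent (rather than some $1/k^b$ with $b>1$) requires keeping all three accumulated factors --- $2^{O(k)}$, $(\log n)^{O(k)}$, and $d^{O(kc_2)}$ --- simultaneously subpolynomial in $\log d$, which is exactly what the hypothesis $k\le\frac{\log\log d}{2\log\log\log d}$ enables when matched to the asymmetric message-size round elimination of Lemma~\ref{lemma-round-elimination}.
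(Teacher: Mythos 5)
Your proposal doesn't prove the stated corollary --- it proves a different, downstream result. The corollary in question (Corollary~\ref{cor:key result}, which is the same content as Lemma~\ref{lemma-round-elimination} in the paper's notation) is a \emph{single} application of round elimination: starting from a $\langle\mathbf{A},\mathbf{B},k\rangle^A$-protocol for $\LPM$ with error $\epsilon$, it produces one $\langle(1+\frac{2\mathbf{A}_1}{\delta p\mathbf{A}_2})\mathrm{sfx}(\mathbf{A},2),\mathrm{sfx}(\mathbf{B},2),k-2\rangle^A$-protocol for the smaller instance $\mathsf{LPM}^\Sigma_{m/2p,n/q}$ with error $\epsilon+2\delta+\sqrt{2^{2\mathbf{A}_1/(\delta p)}\mathbf{B}_1/q}$. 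The paper's proof is short: apply Part~I of the round-elimination machinery (which eliminates Alice's first message via the distributional averaging over $(i,\sigma)$, the message compression lemma, and the message switching lemma, producing a $\langle\cdot,\cdot,k-1\rangle^B$-protocol for $\mathsf{LPM}^\Sigma_{m/p,n}$ with error $\epsilon+2\delta$), then apply Part~II (which eliminates Bob's now-first message via the uninformative message lemma, producing a $\langle\cdot,\cdot,k-2\rangle^A$-protocol for $\mathsf{LPM}^\Sigma_{m/p-1,n/q}$ with the extra $\sqrt{b_0/q}$ error term), and finally weaken $m/p-1$ to $m/2p$. That is the entire content; there is no iteration and no asymptotic parameter balancing inside this corollary.

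Your proposal instead takes this corollary as a black box (``I would then iterate the round elimination lemma $k$ times''), chooses $\delta_j,p_j,q_j$ at each of $k$ stages, and derives the contradiction that yields the final bound $t=\Omega(\frac{1}{k}(\log_\gamma d)^{1/k})$. That argument belongs to the proof of Theorem~\ref{thm:main lower bound} (and, upstream, Theorem~\ref{theorem-ANNS-lower-bound}), not to the corollary you were asked to prove. As a sketch of \emph{that} theorem your plan is in the right spirit and roughly tracks the paper's, but as a proof of Corollary~\ref{cor:key result} it is circular, since it assumes the very statement it is meant to establish. To fix this, discard the iteration and instead show directly how Part~I and Part~II compose: check that the output of Part~I (a $\langle\cdot,\cdot,k-1\rangle^B$-protocol with new first Bob-message length $2^{2\mathbf{A}_1/(\delta p)}\mathbf{B}_1$) has exactly the shape required by the hypothesis of Part~II (which needs $n\le|\Sigma|$ and charges error $\sqrt{b_0/q}$ for the leading Bob-message of length $b_0$), verify that the surviving message-length vectors and instance sizes are as claimed, and note that the condition $2\mathbf{A}_1/p\ge C$ is exactly the hypothesis needed for the message compression step inside Part~I.
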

\end{suppress}

}

\subsection{Proof of the lower bound}
We now prove the communication lower bound for $\LPM$, by the round elimination tool we setup in previous sections.

\begin{theorem}[communication lower bound for $\mathsf{LPM}$]
\label{thm:main lower bound}
For any $c_1,c_2>0$, there exists a $c_3>0$ such that the followings hold. Let $n,d\geq1$ be sufficiently large integers, and suppose that $d\le 2^{\sqrt{\log n}}$ and $n\le 2^{d^{0.99}}$.
Let $\gamma\geq3$ and $m=\lfloor(\log d)^{\eta\beta}\rfloor$, where $\eta$ and $\beta$ are as defined in~\eqref{eq:eta-beta}.
Let $\Sigma$ be a set of cardinality $\lceil 2^{d^{0.99}}\rceil$.
Let $1\le k\le \frac{\log\log d}{2\log\log\log d}$ be an integer.
Let $\mathbf{A},\mathbf{B}\in\mathbb{R}_{> 0}^k$ be in the form that $a_i=c_1t_i\log n$ and $b_i=t_id^{c_2}$ for some $t_i>0$ for every $1\le i \le k$. If $\LPM$ has a private-coin $\langle\mathbf{A,B},2k\rangle^A$-protocol, then $t\triangleq\sum_{i=1}^kt_i>\frac{c_3}{k}(\log_\gamma d)^{1/k}$.
\end{theorem}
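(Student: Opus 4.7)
The plan is to argue by contradiction: assume a $\langle \mathbf{A},\mathbf{B},2k\rangle^A$-protocol for $\mathsf{LPM}$ exists with $t=\sum t_i\le \frac{c_3}{k}(\log_\gamma d)^{1/k}$ for a small constant $c_3$, then apply the round elimination lemma (Lemma~\ref{LEMMA-ROUND-ELIMINATION}) $k$ times in succession to obtain a 0-round protocol for $\mathsf{LPM}^\Sigma_{m_k,n_k}$ with $m_k\ge 2$ and $n_k\ge 1$, whose accumulated error is strictly below the trivial information-theoretic bound for 0-round $\mathsf{LPM}$ (which is $\ge 1-n_k/|\Sigma|^{m_k}\ge 1-2^{-d^{0.99}}$, since $|\Sigma|^{m_k}\ge 2^{2d^{0.99}}$ while $n_k\le n\le 2^{d^{0.99}}$). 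The main technical work is choosing the elimination parameters so that the final error stays below, say, $1/2$, while simultaneously shrinking $m$ all the way down to $2$ in $k$ rounds.

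For the parameter choice, at every step $j=1,\dots,k$ I set $\delta_j=\delta=\frac{1}{64k}$ and $p_j=P$ constant, with $P$ the largest integer satisfying $(2P)^k\le m/2$, so $P=\Theta((\log d)^{\eta\beta/k})=\Theta((\log_\gamma d)^{1/k})$ after translating via $\gamma=2^{(\log d)^{1-\eta}}$. The values $q_j$ are chosen adaptively so that every $\sqrt{b_1^{(j-1)}2^{2a_1^{(j-1)}/(\delta P)}/q_j}$ contribution equals $\frac{1}{4k}$, making $\sum_j\sqrt{\cdots}\le\frac14$ while $\sum_j2\delta_j=\frac{1}{32}$. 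Tracking the message blowup yields $t_1^{(j)}=\mathcal{B}_j t_{j+1}$ with $\mathcal{B}_j=\prod_{i\le j}(1+\tfrac{2t_i}{\delta P t_{i+1}})$; the key estimate (using that each $t_i\ge 1$, inherited from the cell-probe model) is the telescoping/summing bound
\[
\sum_{i=1}^{k-1}\frac{t_i}{t_{i+1}}\;\le\;\sum_{i=1}^{k-1}t_i\;\le\;t,
\]
giving $\ln\mathcal{B}_k\le \tfrac{2t}{\delta P}$, so $\mathcal{B}_k\le e$ as long as $\delta P\ge 2t$.

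Combining: to also force $\sum_j x_j\le \tfrac34$ (where $x_j=\tfrac{2a_1^{(j-1)}}{\delta P}$, i.e.\ so that the $n^{\sum x_j}$ factor in $\prod q_j$ does not eat the budget $n/2$), I need $\delta P\gtrsim c_1 e\, t$. Together with $P\le \Theta((\log_\gamma d)^{1/k})$ and $\delta=1/(64k)$, this forces $t\le c_3'(\log_\gamma d)^{1/k}/k$ for an explicit $c_3'=c_3'(c_1)>0$ independent of $k$, contradicting the hypothesis as soon as $c_3<c_3'$. The budget $\prod q_j\le n/2$ is checked using AM-GM: $\prod_j b_1^{(j-1)}\le(et d^{c_2}/k)^k$, and $k\log(16k^2et d^{c_2})\le\tfrac14\log n$ by the hypotheses $k\le\frac{\log\log d}{2\log\log\log d}$ and $\log n\ge(\log d)^2$. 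The side condition $2a_1^{(j-1)}/p_j\ge C$ holds trivially since $a_1^{(j-1)}\ge c_1\log n\ge c_1(\log d)^2$ while $P\le(\log d)^{1/k}/2$, and integer divisibility of $2P\mid m$, $q_j\mid n$ costs only absorbed constant factors.

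The principal obstacle is keeping $c_3$ independent of $k$. A naive bound $\sum t_j/t_{j+1}\le(k-1)\max_j(t_j/t_{j+1})$ would inject an extra factor of $k$ into $\mathcal{B}_k$ or into $P$, ultimately forcing $c_3=O(1/k)$ and, via Algorithm~\ref{alg:simple}, only an $\Omega(k+(\log d)^{1/k})$-style lower bound—too weak as noted in the introduction. The telescoping bound $\sum t_j/t_{j+1}\le t$ that uses $t_{j+1}\ge 1$ is what pushes the denominator down from $k t$ to $t$, saving the crucial factor of $k$. The second delicate point is balancing $\delta=\Theta(1/k)$ (so $\sum 2\delta_j$ stays a constant) against $P=\Theta((\log_\gamma d)^{1/k})$ (so $(2P)^k$ fits inside $m$), which is exactly what yields the sought exponent $1/k$ and the factor $1/k$ in front.
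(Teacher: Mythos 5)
Your plan is correct and follows the same overall strategy as the paper's proof of Theorem~\ref{thm:main lower bound}: interpret the $k$-round protocol as the base case, apply the round elimination lemma (Lemma~\ref{LEMMA-ROUND-ELIMINATION}) $k$ times while tracking the accumulated error and the blowup in Alice's first-message size, and land on a zero-round protocol for a nontrivial $\mathsf{LPM}$ instance. The parameterization differs in several small ways -- you take the stride $p_j=P$ constant where the paper varies $p_{i+1}=\tfrac{a_{i+1}}{a_{i+2}}p$, you set $\delta=\tfrac{1}{64k}$ where the paper uses $\tfrac{1}{4k}$, you pick $q_j$ so each $\sqrt{\cdot}$ error term equals $\tfrac{1}{4k}$ where the paper fixes $q_{i+1}=n^{t_{i+2}/t}$, and you stop at $\mathsf{LPM}^\Sigma_{m_k,n_k}$ with $m_k\ge2$ where the paper drives all the way to $\mathsf{LPM}^\Sigma_{1,1}$ -- but these are equivalent bookkeeping choices that yield the same asymptotics. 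The one place where your argument is genuinely sharper is the control of the message blowup $\mathcal{B}_k=\prod_j\bigl(1+\tfrac{2t_j}{\delta P t_{j+1}}\bigr)$: your telescoping bound $\sum_j t_j/t_{j+1}\le\sum_j t_j=t$ (which exploits $t_{j+1}\ge1$, valid since the $t_j$ are integer cell-probe counts) directly gives $\ln\mathcal{B}_k\le 2t/(\delta P)$, and this is what saves the factor of $k$ and lets $c_3$ be independent of $k$; the paper instead writes $\prod_j(1+\tfrac{2a_j}{a_{j+1}\delta p})\le(1+\tfrac{2}{\delta p})^k$ and cites ``Lagrange multipliers,'' which as stated is not a correct bound (e.g.\ $k=2$, $t_1\gg t_2=1$ makes the left side much larger than the right), though the ultimate inequality $\mathcal{B}_k\le e^{O(1)}$ that the paper needs does hold by exactly the argument you give. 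So your proposal is not merely a rederivation; it supplies a correct and transparent justification for the one step the paper glosses over, at the price of explicitly invoking $t_j\ge1$, which is anyway implicit in the paper's bound on $c_5$ and in the intended application via Proposition~\ref{thm:cell-probe to communication}.
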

\begin{proof}
	Although \ifabs{$\mathbf{A}=(a_1,\ldots,a_k)$ and $\mathbf{B}=(b_1,\ldots,b_k)$,}{$\mathbf{A}=(a_1,a_2,\ldots,a_k)$ and $\mathbf{B}=(b_1,b_2,\ldots,b_k)$,} we additionally define
	\begin{align}\label{eq:k+1}
		a_{k+1}\triangleq a_1,\quad  b_{k+1}\triangleq b_1, \quad\text{and }t_{k+1}\triangleq t_1.
	\end{align}	
	We set $\delta=\frac{1}{4k}$, $p=\frac{1}{2}m^{1/k}$, and further define
	\[
		c_5\triangleq\max_{1\le i\le k}\frac{\log\left({b_i}/{\delta^2}\right)}{\log n}\frac{m^{1/k}}{kt_{i+1}}.
	\]
	By  definitions of $\beta$ and $\eta$ in~\eqref{eq:eta-beta}, we have
	\[
		\eta\beta=1-\frac{c_4+\log\log\gamma}{\log\log d}+\frac{c_4\log\log\gamma}{(\log\log d)^2},
	\]
	and
	\begin{align*}
		m=(\log d)^{\eta\beta}
		=\frac{(\log d)2^{c_4\log\log\gamma/\log\log d}}{2^{c_4}\log\gamma}
		=\Theta(\log_\gamma d).
	\end{align*}
	Therefore, assuming that $d\le2^{\sqrt{\log n}}$ and $\gamma\ge 3$, we have
	\begin{align*}
		c_5
		&=\max_{i\in[k]}O\left(\frac{(\log d)^{1/k}(\log k + \log d)}{kt_{i+1}\log n}\right)\\
		&=O\left(\frac{(\log d)^2}{\log n}\right)=O(1),
	\end{align*}
where the constant factor depends on $c_2$.

Furthermore, since $k\leq\frac{\log\log d}{2\log\log\log d}$, 
it can be verified that
\begin{align}
		m>k^{2k}\label{ieq:m k}.
\end{align}
	Now we define
	\[
		\xi\triangleq\frac{m^{1/k}}{k}=\Theta\left(\frac{(\log_\gamma d)^{1/k}}{k}\right).
	\]
We start our proof by assuming $\LPM$ has a private-coin $\langle \mathbf{A},\mathbf{B}, 2k\rangle^A$-protocol with error probability $1/8$ and
	\[
		t=c_3\xi=\xi/(c_5+16c_1e^{16}),
	\]
and derive an impossible result, which will prove that $t>c_3\xi$. For notational convenience, we ignore divisibility issues.

With the above assumption, we make the following claim.
\begin{claim}\label{claim-induction}
For any non-negative integer $i\leq k$, $\mathsf{LPM}^{\Sigma}_{m_i,n_i}$ has a private-coin $\langle \mathbf{A}_i,\mathbf{B}_i, 2(k-i)\rangle^A$-protocol with error probability $\frac{1}{8}+3i\delta$, where
\begin{align*}
m_i &=\frac{ma_{i+1}}{(2p)^ia_1},\\
n_i &=n^{1-\frac{1}{t}\sum_{j\le i}t_{j+1}},\\
\mathbf{A}_i &=\prod_{j=1}^i\left(1+\frac{2a_j}{a_{j+1}\delta p}\right)\mathbf{A}^{(i+1)-},\\
\mathbf{B}_i &= \mathbf{B}^{(i+1)-}.
\end{align*}
\end{claim}
We prove this claim by induction on $i$. For $i=0$, the claim holds by our assumption.
For  induction hypothesis: assume the claim for an $i<k$. We then prove the claim for $i+1$.

We choose $p_{i+1}=\frac{a_{i+1}}{a_{i+2}}p$ and $q_{i+1}=n^{t_{i+2}/t}$.
We claim that
\begin{align}
2p_{i+1}\le m_i=\frac{ma_{i+1}}{(2p)^i a_1}.\label{eq:p-bound}
\end{align}
When $i+1=k$, this is obviously true, because $\frac{2p_{k}}{m_{k-1}}=\frac{2a_{k}p}{a_{k+1}}\frac{a_1(2p)^{k-1}}{m a_{k}}=1$ since $a_{k+1}=a_1$ and $2p=m^{1/k}$, thus $2p_k=m_{k-1}$; and when $i< k-1$, we have $\frac{2p_{i+1}}{m_{i}}=\frac{a_1(2p)^i}{m a_{i+2}}\le \frac{a_1}{a_{i+2}m^{1/k}}$ because $2p=m^{1/k}$, and $\frac{a_1}{a_{i+2}}\le t_i\le t\le m^{1/k}$ (or otherwise $t>m^{1/k}=\Omega((\log_\gamma d)^{1/k})$ and there is nothing to prove). Therefore, $2p_{i+1}\le m_i$ holds for all $i$.
	
It is also  obvious that 
\begin{align}
q_{i+1}\leq n^{1-\frac{1}{t}\sum_{j=1}^it_{j+1}}\leq|\Sigma|.\label{eq:q-bound}
\end{align}

On the other hand, the quantity $\frac{2a_{i,1}}{p_{i+1}}$, where \ifabs{$a_{i,1}=a_{i+1}\prod_{j=1}^i$\\$(1+\frac{2 a_j}{a_{j+1}\delta p})$}{$a_{i,1}=a_{i+1}\prod_{j=1}^i(1+\frac{2 a_j}{a_{j+1}\delta p})$} is the first entry of $\mathbf{A}_i$, is bounded as below:
\begin{align*}
\frac{2 a_{i+1}}{p_{i+1}}\prod_{j=1}^i\left(1+\frac{2 a_j}{a_{j+1}\delta p}\right)	&\geq\frac{2a_{i+2}}{p}		
\ifabs{}{\geq\frac{2c_1t_{i+2}\log n}{\frac{1}{2}m^{1/k}}}
>\frac{2c_1\log n}{(\log d)^{1/k}},
	\end{align*}
which is $\omega(1)$ for $d\le2^{\sqrt{\log n}}$. Therefore, it holds that $\frac{2a_{i,1}}{p_{i+1}}>C$ for the first entry $a_{i,1}$ of $\mathbf{A}_i$, where $C$ is the universal constant in Lemma~\ref{lemma-round-elimination}. Together with~\eqref{eq:p-bound} and~\eqref{eq:q-bound}, the condition of the round elimination lemma (Lemma~\ref{lemma-round-elimination}) is satisfied. We now apply the round elimination lemma to the protocol assumed by the induction hypothesis, to obtain a private-coin $\langle \mathbf{A}_{i+1},\mathbf{B}_{i+1}, 2(k-i-1)\rangle^A$-protocol for $\mathsf{LPM}^{\Sigma}_{m_{i+1},n_{i+1}}$ with error probability $\frac{1}{8}+3\delta+2\delta+\delta'$, where
\[
\delta'=\sqrt{\frac{b_{i+1}}{q_{i+1}}}\exp\left(\frac{a_{i+1}\ln 2}{\delta p_{i+1}}\prod_{j=1}^i\left(1+\frac{2a_j}{a_{j+1}\delta p}\right)\right).
\]
We then show that $\delta'\le\delta$. Note that this will finish our induction and prove Claim~\ref{claim-induction}.

By~(\ref{ieq:m k}), we have $\delta p/2=\frac{1}{16}m^{1/k}/k>\frac{k}{16}$, thus
\ifabs{
\begin{align*}
		\prod_{j=1}^i\left(1+\frac{2a_j}{a_{j+1}\delta p}\right)	
		&\leq\prod_{j=1}^k\left(1+\frac{2a_j}{a_{j+1}\delta p}\right)\\
		&\leq \left(1+\frac{2}{\delta p}\right)^k\\
		&\leq \mathrm{e}^{16}.
	\end{align*}
	Where the second inequality is by the Lagrange multipliers. }
{
\begin{align*}
		\prod_{j=1}^i\left(1+\frac{2a_j}{a_{j+1}\delta p}\right)	
		&\leq\prod_{j=1}^k\left(1+\frac{2a_j}{a_{j+1}\delta p}\right)\\
		&\leq \left(1+\frac{2}{\delta p}\right)^k	&	(\text{Lagrange multipliers})\\
		&\leq \mathrm{e}^{16}.
	\end{align*}
}Therefore,
	\begin{align*}
\delta'	\leq\sqrt{\frac{b_{i+1}}{q_{i+1}}}2^{\frac{e^{16}a_{i+1}}{\delta p_{i+1}}}
		=\sqrt{b_{i+1}}n^{8e^{16}c_1t_{i+2}k/m^{1/k}-t_{i+2}/2t}.
	\end{align*}
Recall that
		$t=c_3\xi=\xi/(c_5+16c_1e^{16})$ and
		\ifabs{
		$c_5=\max_{1\le i\le k}$\\$\frac{\log\left({b_i}/{\delta^2}\right)m^{1/k}}{kt_{i+1}\log n}$.
	}
	{
		$c_5=\max_{1\le i\le k}\frac{\log\left({b_i}/{\delta^2}\right)}{\log n}\frac{m^{1/k}}{kt_{i+1}}$.
	}
We have
	\begin{align*}
				&c_5\geq \frac{m^{1/k}\log(b_{i+1}/\delta^2)}{kt_{i+2}\log(n)}\\
				\iff&16c_1e^{16}-\frac{m^{1/k}\log(\delta^2/b_{i+1})}{kt_{i+2}\log(n)}\leq16c_1e^{16}+c_5=\frac{1}{c_3}\\
			\iff&\left(16c_1e^{16}-\frac{1}{c_3}\right)\frac{t_{i+2}k}{m^{1/k}}\leq\frac{\log(\delta^2/b_{i+1})}{\log n}\\
			\iff&\frac{16e^{16}c_1t_{i+2}k}{m^{1/k}}-\frac{t_{i+2}}{t}\leq\frac{\log(\delta^2/b_{i+1})}{\log n}\\
			\iff&\delta'^2\leq n^{16e^{16}c_1t_{i+2}k/m^{1/k}-t_{i+2}/t}b_{i+1}\leq\delta^2.
	\end{align*}
And we prove that $\delta'\le \delta$. Claim~\ref{claim-induction} is proved.
	

Now let $i=k$ in Claim~\ref{claim-induction}. We have a private-coin protocol without message exchange between Alice and Bob but solving $\mathsf{LPM}^{\Sigma}_{1,1}$ with error probability at most $\frac{1}{8}+\frac{3}{4}$. However, this is impossible due to the following claim.

\begin{claim}\label{claim:impossible}
Any private-coin protocol for $\mathsf{LPM}^\Sigma_{1,1}$ without message exchange can succeed with probability at most $1/|\Sigma|$ in the worst case.
\end{claim}
By Yao's min-max principle, it is sufficient to prove the lower bound for deterministic protocols on a uniform random inputs.
Note that the only thing a deterministic Alice can do without communication is to pick a string in $\Sigma$ and output it, but this can only succeed with probability at most $1/|\Sigma|$ for a random input.\end{proof}


Theorem~\ref{thm:main lower bound} together with the translation from cell-probing scheme to communication protocol (Proposition~\ref{thm:cell-probe to communication}) and the reduction from $\LPM$ to $\ANNS$ (Lemma~\ref{lem:reduction}) prove the $k$-round cell-probe lower bound for $\ANNS$ (Theorem~\ref{theorem-ANNS-lower-bound}), the main result of this section.

\bibliographystyle{abbrv}
\bibliography{paper}

\end{document}